\pgfplotsset{compat=1.11}
\newcommand{\IN}{\ensuremath{\mathbb{N}}}
\newcommand{\IR}{\ensuremath{\mathbb{R}}}
\newcommand{\IQ}{\ensuremath{\mathbb{Q}}}
\newcommand{\IE}{\ensuremath{\mathbb{E}}}
\newcommand{\IS}{\ensuremath{\mathbb{S}}}
\newcommand{\I}{\mathds{1}}
\newcommand{\bd}[1]{\ensuremath{\boldsymbol{#1}}}
\newtheoremstyle{mystyle}
{9pt}
{9pt}
{\itshape}
{}
{\bfseries}
{}
{\newline}
{}
\newtheoremstyle{mystyle2}
{9pt}
{9pt}
{}
{}
{\bfseries}
{}
{\newline}
{}
\newtheoremstyle{mystyle3}
{9pt}
{9pt}
{}
{}
{\bfseries\itshape}
{}
{\newline}
{}
\theoremstyle{mystyle}
\newtheorem{Sa}{Theorem}[section]
\newtheorem{Lem}{Lemma}[section]
\newtheorem{Kor}{Corollary}[section]
\theoremstyle{mystyle2}
\theoremstyle{mystyle3}
\begin{document}

\title{A consistent nonparametric test of the effect of dementia duration on mortality}
\author{L Radloff, R Wei{\ss}bach, C Reinke, G Doblhammer \\[2mm] \textit{\footnotesize{Chairs of Statistics/Econometrics and Empirical Social Sciences/Demography,}}   \\[-2mm]
        \textit{\footnotesize{Faculty for Economic and Social Sciences,}} \\[-2mm]
        \textit{\footnotesize{University of Rostock \& DZNE}} \\
        }
\date{ }
\maketitle

\renewcommand{\baselinestretch}{1.5}\normalsize

\begin{abstract}
A continuous-time multi-state history is semi-Markovian, if an intensity to migrate from one state into another, depends on the duration in the first state. 
Such duration can be formalised as covariate, entering the intensity process of the transition counts. We derive the integrated intensity process, prove its predictability and the martingale property of the residual. In particular, we verify the usual conditions for the respective filtration. As a consequence, according to Nielsen and Linton (1995), a kernel estimator of the transition intensity, including the duration dependence, converges point-wise at a slow rate, compared to the Markovian kernel estimator, i.e when ignoring dependence. By using the rate discrepancy, we follow Gozalo (1993) and show that the (properly scaled) maximal difference of the two kernel estimators on a random grid of points is asymptotically $\chi^2_1$-distributed. As a data example, for a sample of 130,000 German women observed over a period of nine years, we model the mortality after dementia onset, potentially dependent on the disease duration. As usual, the models under both hypotheses need to be enlarged to allow for independent right-censoring. We find a significant effect of dementia duration, nearly independent of the bandwidth. 
 \\[2mm]
\noindent \textit{Keywords:} Semi-Markov, consistent test, duration dependence, kernel smoothing
\end{abstract}

\section{Introduction} \label{intro}
Continuous-time multi-state Markovian histories \cite[see e.g.][]{Hougaard.1999,Andersen.2002,kimjame2011} can be generalized to semi-Markov models \cite[see e.g.][]{limnios2001} to allow for dependence of the transition intensities on state duration.  In the analysis of morbidities, the duration since one health event might influence the risk of another, even adjusted for age. For the relation between stroke and dementia, \cite{Pendlebury.2009}, as well as \cite{Corraini.2017}, find descriptive evidence of differences in dementia hazard for different elapses of stroke experiences. As examples from business operations research, \cite{Lando.2002} perform a Cox-test to draw conclusion for different transitions between rating classes, dependent on the duration within a class. \cite{Koopman.2008} even fit a parametric semi-Markov model to rating histories. 

In general, event histories are usually studied by means of counting processes \cite[for semi-Markov models, see][]{keidinjens} and for its asymptotic behaviour, one routinely decomposes them into integrated intensity process and a martingale. Among the `usual regularity conditions' for the filtration, the right-continuity requires effort. 
More specifically,  the duration in the current state as a covariate allows estimating the intensity of a progressive healthy-ill-dead history using a nonparametric regression.  
Smoothing the integrated intensity process, as in a Nadaraya-Watson regression, the estimator is asymptotically normal, when the assumptions of \citet[][Theorem 1]{Linton.1995} hold true. The main assumption to be proven is the martingale property of the process, which results from subtracting the rather obvious compensator from the counting process. This can be achieved simply by exploiting elementary properties of the conditional expectation. In order to make the model useful for our
event-history data, we extend all results to right-censored histories. We also cover the general case of a semi-Markov process, by which not only chronic diseases like dementia can be treated, but also `jumps back', i.e. non-chronic diseases like infections. We then adopt
a powerful test from \cite{Gozalo.1993} for the omission of the state duration. The test compares the  kernel estimators with and without Markov assumption, scaled by the (estimated) standard errors of the unrestricted estimator. The maximal difference of the two, on a random grid of points, is penalized to ensure that the maximum is unique. The test statistic, represented as a martingale transform, is asymptotically $\chi^2_1$-distributed. As prerequisites, we need to apply the Cramer-Wold device to derive a multivariate convergence of the
transition intensity estimator, and also need to show weak uniform consistency of the standard error's estimator. As the data example, we study whether the death intensity depends on the time since a preceding dementia onset, or is only elevated by the dementia disease itself. From a simple sample of 130.000 files from a German health insurance company, we find, indeed, that the disease duration has an effect on the mortality forecast for women.


\section{Semi-Markov Model and Hypothesis}

\subsection{Characterisation and statistics} \label{sect21}

For the history $\mathcal{X}$, let denote by $\IS$ the finite set of possible states, and by $(\Omega, \mathfrak F, P)$ the probability space. From a sequence of $P$-a.s. positive and not necessarily independent `inter-arrival times' $T_1, T_2,...$, define the time of the $m^{th}$ transition $Z_m := \sum_{j=1}^mT_j, m\in \IN_0$ (especially $Z_0 = 0$). Let $S_0,S_1,...,$ be a sequence of $\IS$-valued random variables with $S_j \ne S_{j+1}, j\in\IN_0$. Now we define the process $\mathcal{X}$ by $\mathcal{X}(t) := \sum_{m=0}^\infty \I_{[Z_m,Z_{m+1})}(t) S_m$ for $t>0$ and refer to it as semi-Markovian if
\begin{equation} \label{semmarkprop}
	\begin{split}
&		P(S_{m+1}  =s, T_{m+1}\le x| (S_0, Z_0),...,(S_m, Z_m) = (r, z)) \\
		& = P(S_{m+1}=s, T_{m+1}\le x| S_m = r,Z_m = z) \\
		& = P(S_{2}=s, T_{2}\le x| S_1 = r,T_1 = z)=: F_z^{rs}(x).
	\end{split}
\end{equation}
Similar to a Markov process, transition probabilities do not depend on the number of past transitions, $m$, and the future depends on the current state $S_m$. The generalisation is that, additionally, the future may depend on the time since the last transition, $T_m$. 

We consider a heterogeneous process in which the transition intensities vary in the (deterministic) time, namely age, \cite[see e.g.][Chapt. 3]{limnios2001}. Conditional on the past, the distribution of $T_m$ is described by its survival function $S_z^r(x):= 1 - \sum_{s\in\IS} F_z^{rs} (x)$. We assume differentiability, denote $f_z^{rs} := (F_z^{rs})'$ and assume: 
\begin{itemize}
	\item[(L)] $f_z^{rs}(x) / S_z^r(x)$ are uniformly bounded across $r,s\in\IS$ and $z, x \in [0,1].$
\end{itemize}
Note that, by (L), $f_z^{rs}(x)$ is also uniformly bounded because of $S_z^r(x)<1$. Let us further denote $\mu(t):=\sum_{m=0}^\infty m \I_{[Z_m, Z_{m+1})}(t) = \sum_{m=1}^\infty \I_{\{Z_m\ge t\}}$ the number of jumps of $\mathcal{X}$ until $t$. Being in $t$, it also indicates the index $m$, to which the state of $\mathcal X$, $S_m$, belongs. How often $\mu$ jumps will later help in finding bounds, and the following is the result of a short calculation.
\begin{Lem} \label{dMu.lem}
There is a constant $C$, independent of $k$, $r$, $d$ and $t-d$,  such that for sufficiently small $h$,
$P(\mu(t+h) - \mu(t) \ge k | S_m = r, Z_m = t-d, T_{m+1} > d ) \le C^k h^k$.
\end{Lem}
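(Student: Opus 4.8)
The plan is to turn assumption (L) into a uniform bound on the hazard of leaving any state, and then to induct on $k$, exploiting that the semi-Markov process renews itself at every jump. Throughout, write $f_z^r(x) := \sum_{s\in\IS} f_z^{rs}(x) = -(S_z^r)'(x)$ for the total density of leaving state $r$; since $\IS$ is finite, (L) provides a constant $C$ with $f_z^r(x)/S_z^r(x) \le C$ uniformly in $r$, $z$ and $x$, and this $C$ will be the constant of the claim.

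First I would settle the case $k=1$. Conditional on $S_m = r$, $Z_m = t-d$ and $T_{m+1} > d$, the event $\{\mu(t+h)-\mu(t)\ge 1\}$ coincides with $\{T_{m+1}\le d+h\}$, so its conditional probability is $\int_d^{d+h} f_{t-d}^r(x)\,dx \,/\, S_{t-d}^r(d)$. Bounding the integrand by $f_{t-d}^r(x) \le C\,S_{t-d}^r(x) \le C\,S_{t-d}^r(d)$, where the last step uses that the survival function is non-increasing and $x\ge d$, the ratio is at most $Ch$.

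For the induction, let $Q_k(h)$ be the supremum of the conditional probability in the statement over all admissible $r$, $d$ and $t-d$ (it is non-decreasing in $h$, and $Q_1(h)\le Ch$ by the previous paragraph). To bound $Q_k$, decompose on the instant $t+u$, $u\in(0,h]$, and the destination $r'$ of the first transition after $t$. By the defining property \eqref{semmarkprop}, conditionally on this first jump the process starts afresh in state $r'$ with duration $0$ at time $t+u$, so the conditional probability of a further $k-1$ transitions inside the remaining window of length $h-u$ is at most $Q_{k-1}(h-u)\le Q_{k-1}(h)$. Using the same pointwise bound $\sum_{r'} f_{t-d}^{rr'}(d+u)/S_{t-d}^r(d) \le C$ on the density of the first jump, and the inductive hypothesis $Q_{k-1}(h)\le(Ch)^{k-1}$ in the middle step, this gives
\[
Q_k(h) \;\le\; \int_0^h \Big(\sum_{r'} \frac{f_{t-d}^{rr'}(d+u)}{S_{t-d}^r(d)}\Big)\, Q_{k-1}(h)\,du \;\le\; \int_0^h C\,(Ch)^{k-1}\,du \;=\; (Ch)^k ,
\]
so that $Q_k(h)\le (Ch)^k = C^k h^k$, which is exactly the asserted bound.

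The hard part is the rigorous justification of this first-jump decomposition: one must argue, from \eqref{semmarkprop}, that conditioning on the time and target of the first post-$t$ transition yields an independent fresh semi-Markov continuation, so that the supremum $Q_{k-1}$ genuinely dominates the conditional tail. This \emph{loss of memory at jumps} is the only non-routine ingredient. The restriction to sufficiently small $h$ is needed precisely to keep all origin times and durations appearing above within the range $[0,1]$ on which (L) is postulated, so that the uniform hazard bound $C$ remains in force.
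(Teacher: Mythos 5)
The paper never writes out a proof of this lemma: it is introduced only with the remark that it ``is the result of a short calculation,'' and no argument appears in the appendices either. So there is no paper proof to compare against, and your proposal has to stand on its own. It does, in substance. Your base case is exactly right: by (L) and finiteness of $\IS$, the total exit density satisfies $f_z^r(x)\le C\,S_z^r(x)$, and combining this with monotonicity of $S_{t-d}^r$ gives the $Ch$ bound for $k=1$. Your induction step, bounding the first-jump density kernel by $C$ and multiplying by $Q_{k-1}(h)$, is also arithmetically correct and delivers $(Ch)^k$; an equivalent way to phrase it, avoiding the supremum $Q_k$, is to note that $\{\mu(t+h)-\mu(t)\ge k\}\subseteq\{d<T_{m+1}\le d+h\}\cap\{T_{m+2}\le h\}\cap\dots\cap\{T_{m+k}\le h\}$ and to bound the resulting iterated integral factor by factor, each factor being at most $Ch$.

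The one step you flag but do not carry out -- that conditioning on the time and target of the first post-$t$ jump leaves a fresh semi-Markov continuation, so that $Q_{k-1}$ really dominates the conditional tail -- is a genuine proof obligation, but it is routine and the paper itself shows how to discharge it elsewhere: in Appendix \ref{semimarkmart}, Assertion 1 in the proof of Lemma \ref{dM_cond_F.lem} establishes, via an explicit multiple-integral representation built from iterated applications of \eqref{semmarkprop}, that conditional expectations given the full jump history $(S_0,Z_0),\dots,(S_m,Z_m),\I_{\{Z_{m+1}>t\}}$ depend only on $(S_m,Z_m)$ and are independent of $m$. Exactly that computation, applied after the first post-$t$ jump, closes your gap; note also that your definition of $Q_k$ as a supremum over $r$, $d$, $t-d$ (but not $m$) already presupposes the $m$-stationarity asserted in the last equality of \eqref{semmarkprop}, which is worth saying explicitly. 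Your closing remark about keeping arguments inside $[0,1]$ for (L) is a fair point, and in fact you treat this boundary issue more carefully than the paper does.
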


In order to define the intensity of a transition from state $r$ to state $s$, the duration of $\mathcal{X}$ in its current state, $\tilde D(t):= \sum_{m=0}^\infty (t-Z_m) \I_{[Z_m, Z_{m+1})}(t)$, is important. It is (see Appendix \ref{furthres} for proof):
\begin{equation} \label{alpha_def.leam}	
\lim_{h\rightarrow0} \frac 1h P(\mathcal{X}(t+h) = s| \mathcal{X}(t) = r, \tilde D(t) = d, \mathcal{X}(u), u\le t)
=\frac{f_{t-d}^{rs}(d)}{S_{t-d}^r(d)} =:\alpha_{rs}(t, d)
\end{equation}

When formulating the test later on, we will estimate $\alpha_{rs}(t, d)$. We must select a support and will use a compact set, without loss of generality $t,d\in[0,1]$.
We are only interested in one particular pair $\{r\ne s\}$, so that we suppress the index in the notation from now on. Also, it is easier to illustrate the proofs by specializing in a model with only three progressive states. (It will also suffice for our data example in Section \ref{app}.) The generalisation to the semi-Markov model will either have a proof in the Appendix (e.g. Appendix \ref{semimarkmart} for Theorem \ref{martingale_property.th}), or arguments will at least be outlined, most of the time. 
This simplifies $(\mathcal{X}(t))_{t\ge0}$ to $\mathcal{X}(t) := \I_{\{ t \ge T_1 \}} + \I_{\{ t \ge T_1 + T_2 \}}$,  with joint density of $T_1, T_2$ denoted as $f_{T_1,T_2}$. Now the transition from $s_1:=0$ to $s_2:=1$ is not of interest, because the duration equals the deterministic time. Accordingly our interest is only in $s_3:=2$ and $\alpha(t, d):=\alpha_{12}(t, d) = \lim_{h\rightarrow 0+} \frac 1h P(\mathcal{X}(t+h) = 2 | \mathcal{X}(t-) = 1, T_1 = t - d)$ for $0< d< t$ (simplifying \eqref{alpha_def.leam}). Figure \ref{Beispielpfad} illustrates an outcome.

\begin{figure}[h]
	\centering
	\begin{tikzpicture}
		\draw[->] (-0.2,0) -- (7,0) node[below] {$t$};
		\draw[->] (0,-0.2) -- (0,2.5) node[left] {$\mathcal{X}(t)$};
		
		\draw (0.1, 0) -- (-0.1, 0) node[left] {$s_1=0$};
		\draw (0.1, 1) -- (-0.1, 1) node[left] {$s_2=1$};
		\draw (0.1, 2) -- (-0.1, 2) node[left] {$s_3=2$};
		
		\draw (2, 0.1) -- (2, -0.1) node[below] {$T_1$};
		\draw (5, 0.1) -- (5, -0.1) node[below] {$T_1 + T_2$};
		
		\draw[black, very thick] (0, 0) -- (2, 0);
		\draw[black, very thick] (2, 1) -- (5, 1);
		\draw[black, very thick] (5, 2) -- (7, 2);
		
		\node at (0, 0) [circle, fill = black, scale = 0.4] {};
		\node at (2, 1) [circle, fill = black, scale = 0.4] {};
		\node at (5, 2) [circle, fill = black, scale = 0.4] {};
	\end{tikzpicture}
	\caption{Realised history of progressive process $\mathcal{X}$.}
	\label{Beispielpfad}
\end{figure}
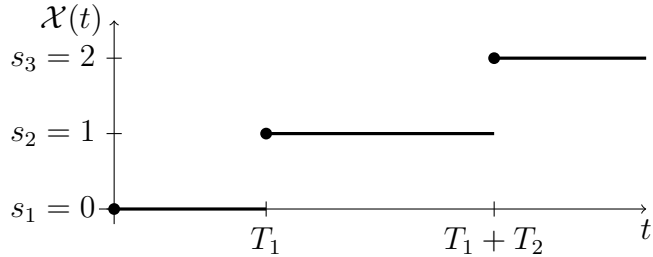

 A short calculation using \cite{Witting.1985}, Theorems 1.122 and 1.126, yields  
\begin{equation} \label{intensities_through_densities.lem}
\alpha(t, d)=\frac{f_{T_2|T_1 = t-d}(d)}{S_{T_2|T_1 = t-d}(d)}.
\end{equation}

Because of its relevance in practice (including our data example), we account for independent right-censoring at a random time $U$ and define $Z_1 := T_1 \wedge U$, as well as $Z_2 := (T_1 + T_2) \wedge U$. As usual \cite[see e.g.][]{keidinjens}, we only use the count of an uncensored transition $N(t):= \I_{\{Z_2 \le t, U \ge T_1 + T_2\}}$. The indicator of observable risk becomes $Y(t) :=\I_{\{X_{t-}=1,  U \ge t \}}= \I_{\{ Z_1 < t \le Z_2 \}} = \I_{\{ T_1 < t \le Z_2 \}} = \I_{\{ T_1 < t \le T_1 + T_2 \} \cap \{U \ge t\}}$. And the duration of $\mathcal{X}$ in state $1$ is modified so as to only contain information if $N$ is under risk of jumping, $D(t) := (t-T_1) Y(t)$.

As a component of counting process theory, an important part of a process' model is its filtration: 
\begin{equation} \label{deffilt}
\mathfrak F_t:=\sigma(N(u), D(u+), Y(u+),  u\le t)
\end{equation} 
The filtration needs to fulfil the (regular) Dellacherie conditions \cite[see][Definition 1.2.4]{Fleming.2011} in order to allow us the use of \cite{Linton.1995}, later on, in  Section \ref{esttest}.
Of the conditions, only the right-continuity is to show. Now define the almost surely piecewise constant $D_p(t):= T_1 Y(t)$ and note that $\mathfrak F_t  = \sigma(N(u), D_p(u+), Y(u+), u\le t)$ due to $D_p(t) = tY(t) - D(t)$, as a function of the linear $D$. The right-continuity follows from Theorem 4.2.3 of \cite{Fleming.2011}. Similar arguments for the semi-Markov process are omitted here.

\subsection{Intensity process}
In order to show asymptotic normality of an estimator for $\hat{\alpha}$ in the next Section \ref{sec31}, we need the Doob-Meyer decomposition of $N$ into compensator and martingale. For the martingale we will then need to show a Lindeberg-L\'{e}vy condition.
	
\begin{Sa} \label{martingale_property.th}
The intensity process of $N$ is $\alpha(t, D(t)) Y(t)$, note especially that it is predictable. Furthermore, $M(t) := N(t) - \int_0^t \alpha(s, D(s))Y(s)ds$ is a martingale. Both statements are with respect to $\mathfrak F_t$. 
\end{Sa}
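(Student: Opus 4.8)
The plan is to verify the three ingredients of an $\mathfrak F_t$-martingale — predictability of the integrand, integrability, and the defining conditional-expectation identity. Predictability of the candidate intensity $\lambda(t):=\alpha(t,D(t))Y(t)$ comes for free: $Y(t)=\I_{\{Z_1<t\le Z_2\}}$ and $D(t)=(t-T_1)Y(t)$ are left-continuous in $t$ (the at-risk set $(Z_1,Z_2]$ is left-open, right-closed), so $t\mapsto(t,D(t),Y(t))$ is left-continuous and $\mathfrak F_t$-adapted, whence $\lambda$, a measurable function thereof, is predictable. Adaptedness of $M$ is clear, and integrability is immediate because $N(t)\le 1$ while condition (L) bounds $\alpha$ uniformly, so $\int_0^t\lambda(s)\,ds$ is bounded on $t\in[0,1]$. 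Thus $M$ is bounded and it remains only to establish the martingale identity.

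To sidestep conditioning on the null events $\{D(s)=d\}$, I would instead prove the equivalent statement $\IE[(M(t)-M(s))\I_A]=0$ for every $A$ in a $\pi$-system generating $\mathfrak F_s$, namely the events describing the observed trajectory of $(N,Y,D)$ on $[0,s]$. These split by the state of $\mathcal X$ at time $s$: (i) absorbed or censored out of state $1$ (so $Z_2<s$); (ii) at risk in state $1$ (so $Z_1<s\le Z_2$, on which $\mathfrak F_s$ pins down $T_1=s-D(s)$, the survival $T_2>s-T_1$, and $U\ge s$); and (iii) still in state $0$ (so $Z_1\ge s$, on which $\mathfrak F_s$ records only $T_1\wedge U\ge s$). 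Case (i) is trivial: $N(t)-N(s)=0$ and $Y\equiv 0$ on $(s,t]$, so both terms vanish.

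On the remaining cases I would write each side out, using $U\perp(T_1,T_2)$ from independent right-censoring, writing $S_U(\cdot):=P(U\ge\cdot)$, and invoking \eqref{intensities_through_densities.lem}. On case (ii), with $d:=s-T_1$, the counting side equals $\int_d^{t-T_1}\tfrac{f_{T_2\mid T_1}(v)}{S_{T_2\mid T_1}(d)}\tfrac{S_U(T_1+v)}{S_U(s)}\,dv$, while in the compensator side the at-risk indicator contributes the factor $P(T_2\ge u-T_1,\,U\ge u\mid\cdots)=\tfrac{S_{T_2\mid T_1}(u-T_1)}{S_{T_2\mid T_1}(d)}\tfrac{S_U(u)}{S_U(s)}$; multiplying by $\alpha(u,u-T_1)=f_{T_2\mid T_1}(u-T_1)/S_{T_2\mid T_1}(u-T_1)$ cancels the survival factor and, after the substitution $v=u-T_1$, reproduces the counting side exactly. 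Case (iii) is the same computation one level deeper: both sides acquire an extra integration over the unobserved first-transition time $T_1\in(s,\cdot)$, and a Fubini swap over the region $\{s\le t_1\le u\le t\}$ combined with the identity $\alpha(u,u-t_1)\,S_{T_2\mid T_1=t_1}(u-t_1)\,f_{T_1}(t_1)=f_{T_1,T_2}(t_1,u-t_1)$ again yields equality. Assembling the three cases over the generating $\pi$-system gives $\IE[(M(t)-M(s))\I_A]=0$, hence the martingale property, and the predictable integrand is then the intensity by the uniqueness of the Doob–Meyer decomposition.

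I expect case (iii) to be the main obstacle. Unlike the at-risk case, the first-transition time is not observed by $\mathfrak F_s$, so the compensator's contribution must be matched against a double integral that mixes the conditional law of $(T_1,T_2)$ with the censoring survival $S_U$; getting the order of integration, the region of integration, and the $U$-independence bookkeeping exactly right is the delicate part. In this three-state model $N$ jumps at most once, so no remainder terms appear and the computation is exact; the multiple-jump bound of Lemma \ref{dMu.lem} is what would be needed to control such remainders in the general semi-Markov extension deferred to the appendix.
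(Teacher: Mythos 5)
Your proposal is correct, and its computational core coincides with the paper's: the same decomposition of $\Omega$ by the state at time $s$ (already jumped or censored; at risk in state $1$; not yet in state $1$), the same cancellation $\alpha(u,u-t_1)\,S_{T_2|T_1=t_1}(u-t_1)=f_{T_2|T_1=t_1}(u-t_1)$ from \eqref{intensities_through_densities.lem}, and the same Fubini swap in the not-yet-at-risk case; you fold the censoring survival $S_U$ in from the start, whereas the paper treats the uncensored model in the main text and defers censoring to Appendix \ref{withcens}, but the integrals agree. Where you genuinely diverge is the measure-theoretic packaging. The paper verifies $\IE[M(t)-M(s)\,|\,\mathfrak F_s]=0$ directly on each event, which forces it to make sense of conditional expectations restricted to events: on $\{T_1>s\}$ it argues constancy via the factorisation lemma, and on the at-risk set it needs a bespoke conditioning-swap result (Lemma \ref{Bedingungstausch.lem}) together with Lemma \ref{y_messbar.lem} to replace $\mathfrak F_s$ by $\sigma(T_1,\I_{\{T_1+T_2>s\}})$. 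You instead test $\IE[(M(t)-M(s))\I_A]=0$ against a $\pi$-system generating $\mathfrak F_s$ and conclude by Dynkin's $\pi$-$\lambda$ theorem, which is legitimate since $M$ is bounded (so the class of good sets $A$ is a $\lambda$-system); this is the more standard route and dispenses with Lemma \ref{Bedingungstausch.lem} entirely. Be aware, though, that the generation claim for your $\pi$-system is not free: you must show that on the at-risk event the trace of $\mathfrak F_s$ equals the trace of $\sigma(T_1)$, i.e.\ both that $T_1=s-D(s)$ is recoverable from the observed path (which you state) and, conversely, that the whole path $(N(u),D(u+),Y(u+))_{u\le s}$ on that event is a measurable function of $T_1$ (which you use implicitly). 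That bidirectional measurability is exactly the content of the paper's Lemma \ref{y_messbar.lem}, so this ingredient reappears in your argument, just in a different place. Your closing remarks also match the paper's architecture: in the three-state model $N$ jumps at most once so no remainder control is needed, while the multiple-jump bound of Lemma \ref{dMu.lem} is precisely what the semi-Markov extension in Appendix \ref{semimarkmart} relies on.
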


\begin{proof}
We start with the progressive model and without censoring. Predictability easily follows by the left-continuity of $Y$ and $D$, as well as the continuity of $\alpha$. Without censoring we have the following simplifications of the progressive model $Y(t) =\I_{\{\mathcal{X}(t-)=1 \}}= \I_{\{ T_1 < t \le T_1 + T_2\}}$ and $N(t) = \I_{\{T_1 + T_2 \le t\}}$.	
Let be $0<s<t$. To begin with, note that the increments of $M$ are given by
\begin{align} \label{dM.eq}
	M(t) - M(s) = \I_{\{ s<T_1 + T_2 \le t \}} - \int_s^t Y(u) \alpha_{12}(u, D(u)) du.
\end{align}
We decompose $\Omega$  into the disjunct sets `The process jumps twice until time $s$.', `The process has no jump until time $s$.' and `The process jumps once until time $s$.' and show that 
the conditional expectation of $M(t) - M(s)$, given $\mathfrak{F}_s$, is zero separately:  
\begin{equation} \label{EdM_comp.eq}
	\begin{split}
		\IE[M(t) - M(s)|\mathfrak{F}_s] =  \I_{\{T_1 + T_2 \le s\}} \IE[M(t) - M(s)|\mathfrak{F}_s] \Big| _{\{T_1 + T_2 \le s\}}\\
		+  \I_{\{T_1>s\}} \IE[M(t) - M(s)|\mathfrak{F}_s] \Big|_{\{T_1>s\}}  
		+  \I_{\{T_1 \le s < T_1 + T_2\}}\IE[M(t) - M(s)|\mathfrak{F}_s] \Big|_{\{T_1 \le s < T_1 + T_2\}}
	\end{split}
\end{equation}
For a random variable $X:(\Omega, \mathfrak{A},P)\rightarrow (\Xi, \mathfrak{B})$ and a set $A\subset\Omega$ define $X|_A:A\rightarrow \Xi$ by $X|_A(\omega) = X(\omega),~\omega\in A$.
On the first set $\{T_1 + T_2 \le s\}$, it is obviously $\I_{\{ s<T_1 + T_2 \le t \}} = 0$. On this set  $Y(u)\equiv 0$ also holds. Hence, by inserting into \eqref{dM.eq}, we have $\I_{\{T_1 + T_2 \le s\}}\IE[M(t) - M(s)|\mathfrak{F}_s] \Big| _{\{T_1 + T_2 \le s\}}=0.$
On the second set $\{T_1>s\}$, the conditional expectation must be constant, in the argument $\omega$. We can see this, because, by the factorisation lemma \cite[see e.g.][Theorem 11.7]{bauer1992}, the conditional expectation can be represented as the composition of a measurable function with $(N(u), D(u+), Y(u+))_{u\le s}$. These random variables are all identical for all $\omega\in \{ T_1>s \}$. 
Thus, denote $\IE[\I_{\{ s<T_1 + T_2 \le t \}}|\mathfrak{F}_s] \Big|_{\{T_1>s\}}=:c$. For this $c$ holds, by definition of the conditional expectation, $c P(T_1 > s) = \int_{\{T_1 > s \}} cdP = \int_{\{ T_1 > s \}} \I_{\{ s< T_1 + T_2 \le t \}}dP\\ = P(s<T_1, T_1+T_2\le t)$. 
Furthermore, for $\IE\left[ \int_s^t Y(u) \alpha_{12}(u, D(u)) du|\mathfrak{F}_s \right] \Big|_{\{T_1>s\}}=:\tilde c$ we have:
\begin{align*}
	\tilde c &P(T_1 > s) = \int_{\{ T_1 > s \}} \tilde cdP = \int_{\{ T_1 > s \}} \int_s^t Y(u) \alpha_{12}(u, D(u)) dudP\\
	&\stackrel{(i)}{=}\IE_{T_1} \left[ \IE \left[ \int_s^t \I_{\{ s<T_1<u\le T_1 + T_2\}} \alpha_{12}(u, u-T_1) du \Big| T_1=t_1\right]  \right]\\ 
	&\stackrel{(ii)}{=} \IE_{T_1} \left[ \int_s^t \I_{(s,u)}(t_1) \alpha_{12}(u, u-t_1)  P \left( T_1 + T_2 \ge u\Big| T_1=t_1 \right)  du \right]\\
	&\stackrel{(iii)}{=}\IE_{T_1} \left[ \int_s^t \I_{(s,u)}(t_1) \frac{f_{T_2|T_1=t_1}(u-t_1)}{S_{T_2|T_1=t_1}(u-t_1)} S_{T_2|T_1=t_1}(u-t_1)  du \right] \\
	&\stackrel{(iv)}{=}\IE_{T_1} \left[\I_{(s,\infty)}(t_1) \int_{t_1}^t f_{T_2|T_1=t_1}(u - t_1)   du \right]\\
	&\stackrel{(v)}{=}\IE_{T_1} \left[ \I_{(s,\infty)}(t_1) P(T_1+T_2 \le t |T_1 = t_1)  \right]
	\stackrel{(vi)}{=}P(s<T_1, T_1+T_2\le t)
\end{align*}

For (i), as a notation $\IE_{T_1}[\dots]:=\int \dots dP_{T_1}(t_1)$. Furthermore, the integration set is formulated as an indicator function, and then joined with $Y(u)$. Accordingly, the outer integral can be written as an expectation, and due to the law of the iterated expectation, results in a conditional expectation. The argument $D(u)$ can be replaced by $u-T_1$, because both are equal on $\{Y(u)\ne0\}$. For (ii), we slip the conditional expectation into the integral, use the multiplication theorem for conditional expectations, in order to slip the $T_1$-measurable factor outside the expectation. The remaining expectation becomes a conditional probability. For (iii),  due to  \eqref{intensities_through_densities.lem}, we have - using Theorems 1.122 and 1.126 of \cite{Witting.1985} for the first identity: 
	\begin{equation} \label{cond_surv.eq}
	\begin{split}
		P(T_1 + T_2 \ge u| T_1 = t_1)&= \int \I_{[u-t_1, \infty)}(t_2) f_{T_2|T_1=t_1}(t_2) dt_2 \\&= \int_{u-t_1}^\infty f_{T_2|T_1=t_1}(t_2) dt_2 = S_{T_2|T_1=t_1}(u-t_1)
	\end{split}
\end{equation}
 Here, and from now on, integrals without borders extend over the interval $[0,1]$. For (iv), the conditional survival functions cancel,  and with $\I_{(s,u)}(t_1) = \I_{(s,\infty)}(t_1)\I_{(t_1,\infty)}(u)$ the second indicator function is formulated as an integral set. Now (v) is due to the following calculation, similar to (iii):
	\begin{align*}
	P&(T_1 + T_2 \le t| T_1 = t_1) = \int \I_{[0,t]}(t_1 + t_2) f_{T_2|T_1=t_1}(t_2)dt_2 \\&= \int_0^{t - t_1} f_{T_2|T_1=t_1}(t_2) dt_2 = \int_{t_1}^{t} f_{T_2|T_1=t_1}(u - t_1) du
\end{align*}
In the last step, we substitute $t_2 = u-t_1$. Finally, (vi) results from 
	\begin{align*}
		\IE_{T_1} &\left[ \I_{(s,\infty)}(t_1) P(T_1+T_2 \le t |T_1 = t_1)  \right]	=\IE \left[ \I_{\{T_1>s\}} P(T_1+T_2 \le t |T_1)  \right]\\
		&= \IE \left[ \IE \left[ \I_{\{T_1>s, T_1+T_2 \le t \} }|T_1  \right] \right]=P(s<T_1, T_1+T_2\le t),
	\end{align*}
so that $c = \tilde c$ and hence $\I_{\{T_1>s\}}\IE[M(t) - M(s)|\mathfrak{F}_s] \Big|_{\{T_1>s\}}=0.$
It remains to investigate the conditional expectation on the third set, $\{T_1 \le s < T_1 + T_2\}$ and we first simplify $\IE[\cdot| \mathfrak{F}_s]\big|_{\{T_1\le s<T_1 + T_2\}} = \IE[\cdot| T_1, \I_{\{T_1+T_2 > s\}}]\big|_{\{T_1\le s<T_1 + T_2\}}.$ The heuristic reason for that is the following: When we know that event  $\{T_1\le s<T_1 + T_2\}$ has occurred, the only information needed, to reconstruct the process until time $s$, is the outcome of $T_1$. The indicator function must be added to the condition, in order to render the set $\{T_1\le s< T_1+T_2\}$ measurable, relative to the $\sigma$-Algebra of the condition. The idea is formalised in Lemma \ref{Bedingungstausch.lem}. To verify its conditions, recall the definition of $\mathfrak{F}_s$ in \eqref{deffilt}. Now, it is $\{T_1 \le s < T_1 + T_2\} = \{Y(s) = 1\} \in\mathfrak{F}_s$ as well as $\{T_1 \le s < T_1 + T_2\} = \{T_1\le s, \I_{\{T_1+T_2 > s\}} = 1\} \in\sigma(T_1, \I_{\{T_1+T_2 > s\}})$. For the application of  Lemma \ref{Bedingungstausch.lem} we still need to check that $\sigma(T_1, \I_{\{T_1+T_2 > s\}}|_{\{T_1 \le s < T_1 + T_2\}}) = \sigma((N(u), D(u+), Y(u+), u\le s)|_{\{T_1 \le s < T_1 + T_2\}})$, using Lemma \ref{y_messbar.lem}. 
We define the $Y(u)$-measurable random variable $W_u:=u \I_{\{Y(u) = 1\}} + \infty \I_{\{Y(u) = 0\}}$. For all $\omega \in \{T_1 \le s < T_1 + T_2\}$ the following holds on the one hand
$T_1(\omega) = \inf\{u \le s| Y(u, \omega) = 1\} = \inf\{W(u,\omega)|u\le s\} = \inf\{W(q,\omega)|q\le s, q\in\IQ\}$ (as well as $\I_{\{T_1+T_2 > s\}}(\omega)\equiv 1$),  
and on the other hand $N(u, \omega) = 0$, $Y(u, \omega) = \I_{\{T_1(\omega) \le u\}}$ and $D(u, \omega) = (u - T_1(\omega)) Y(u, \omega)$ (all for $u\le s$). Because especially $\inf:\bar\IR^{\IN}\rightarrow\bar\IR$ is Borel-measurable, it follows that Lemma \ref{y_messbar.lem} may be applied. We inspect firstly, 
\begin{align*}
	\IE \big[ & \I_{\{s < T_1 + T_2 \le t\}} |T_1 = t_1 \big] = \int \I_{(s,t]}(t_1 + u) f_{T_2|T_1 = t_1}(u)du \\
	&= \int \I_{(s - t_1,t - t_1]}(u) f_{T_2|T_1 = t_1}(u) du = \int_{s-t_1}^{t-t_1}  f_{T_2|T_1 = t_1}(u) du \\
	&= \int_{(s-t_1)\vee 0}^{t-t_1}  f_{T_2|T_1 = t_1}(u)du = \int_{s\vee t_1}^{t}  f_{T_2|T_1 = t_1}(u - t_1)du,
\end{align*}
where again, Theorems 1.122 and 1.126 of \cite{Witting.1985} are used, together with the a.s.-positivity of $T_2$. Additionally, it holds that 
\begin{align*}
	\IE \big[ \I_{\{s < T_1 + T_2 \le t\}} |T_1 = t_1, \I_{\{T_1+T_2 > s\} }= 1 \big] = \frac{\IE \big[ \I_{\{s < T_1 + T_2 \le t\}} |T_1 = t_1\big]}{P(T_1+T_2 > s|T_1 = t_1)},
\end{align*}
because $\{s < T_1 + T_2 \le t\} \subset \{T_1+T_2 > s\}$. The more general idea here is that for $A\subset B$
\begin{align*}
	\IE[\I_A | T=t, \I_B = 1] = P( A | T=t, B) = \frac{P(A\cap B| T=t)}{P(B|T=t)} =  \frac{\IE[\I_A| T=t]}{P(B|T=t)}.
\end{align*}
Conditional on $\I_{\{T_1+T_2 > s\} }= 0$, the result is zero with an analogous argumentation. Furthermore we have: 
\begin{align*}
	\IE &\left[ \left. \int_s^t Y(u) \alpha(u, D(u)) du \right| T_1 = t_1 \right] \stackrel{(i)}{=}  \int_s^t \alpha(u, u - t_1) \IE\big[ \I_{\{T_1 < u \le T_1 + T_2\}} | T_1 = t_1\big] du \\
	&\stackrel{(ii)}{=} \int_s^t \I_{(t_1,\infty)}(u) \frac{f_{T_2|T_1=t_1}(u - t_1)}{S_{T_2|T_1=t_1}(u - t_1)} P(u \le T_1 + T_2 | T_1 = t_1) du\\
	&\stackrel{(iii)}{=} \int_{s \vee t_1}^t \frac{f_{T_2|T_1=t_1}(u - t_1)}{S_{T_2|T_1=t_1}(u - t_1)}S_{T_2|T_1=t_1}(u - t_1) du = \int_{s \vee t_1}^t f_{T_2|T_1=t_1}(u - t_1) du
\end{align*}
For (i) we interchange the integrations and slip the $T_1$-measurable factor $\alpha$ out of the conditional expectation. For (ii), we again slip a $T_1$-measurable factor and use \eqref{intensities_through_densities.lem}. For (iii) see \eqref{cond_surv.eq}.
Again, we have 
\begin{multline*}
	\IE \left[ \left.  \int_s^t Y(u) \alpha_{12}(u, D(u)) du \right| T_1 = t_1, \I_{\{T_1+T_2 > s\} }= 1 \right] \\ = \frac{\IE \left[ \left.  \int_s^t Y(u) \alpha_{12}(u, D(u)) du \right| T_1 = t_1 \right] }{P(T_1+T_2 > s|T_1 = t_1)},
\end{multline*}
best seen in the above formula after step (i), because $\{T_1 < u \le T_1 + T_2\} \subset \{T_1+T_2 > s\}$ for all $u>s$. As a consequence, we have all in all $$   \IE \big[ M(t) - M(s) | T_1 = \cdot, \I_{\{T_1+T_2 > s\} }=\cdot \big]\Big|_{\{T_1 \le s < T_1 + T_2\}}  = 0 $$ and thereby also $\I_{\{T_1 \le s < T_1 + T_2\}}\IE[M(t) - M(s)|\mathfrak{F}_s] |_{\{T_1 \le s < T_1 + T_2\}} = 0.$ 
This ends the proof for the progressive model with completely observed histories. The right-censored design requires some additional, but elementary, calculations with the details provided in Appendix \ref{withcens}. For the semi-Markov process, the main effort is that transitions can occur arbitrarily often and the details are in Appendix \ref{semimarkmart}.  
\end{proof}

\subsection{Assumptions and hypotheses} \label{sec2_3}

We will need $f(t,d)$, the density of $D(t)$ conditional on $\{Y(t)=1\}$. Further, we denote $\mathrm{y}(t) := \IE Y(t) = P(Y_i(t) = 1)$ and write short $x=:(t,d)$ and $\varphi(x) = \varphi(t,d) :=  f(t,d) \mathrm{y}(t)$. Let be $\mathscr X = \mathscr X _1 \times \mathscr X _2$ a two-dimensional subset of $[0,1]^2$, so that each $\mathscr X _i$ is a compact interval. For  $x\in(0,1)^2$, let $\mathcal N := [x-\epsilon, x + \epsilon] \subset (0,1)^2$ define  the neighbourhood (with $\epsilon \in (0,1)^2$). It is helpful to label some assumptions. 
\begin{itemize}
	\item[(D)] The function $\varphi$ is strictly positive on $\mathcal N$.
	\item[(D')] It holds $\inf_{x\in\mathscr X} \varphi(x)> 0$.
	\item[(S)] The function $\alpha$ is twice and the function $\varphi$ once differentiable on $\mathcal N$.
	\item[(S')] Both $\alpha$ and $\varphi$ are continuous on $\mathscr X$;
\end{itemize}

For a test on the omission of the second `covariate' in $\alpha(t,D(t))$ we define  $\mathcal B := \big\{\beta:[0,1]^2 \rightarrow[0,\infty) \big| \beta(t ,d_1)=\beta(t, d_2), \forall t,d_1,d_2 \in[0,1]\big\}$. 
Extending the definition $x=(t,d)$ to $X(t):=(t,D(t))$, the hypotheses are:
\begin{equation} \label{hy}
	\begin{split}
		H_0:   \exists \; \text{function} \; \beta \in \mathcal B, \; \text{such that} \; P\big(\alpha(X(t)) = \beta(X(t)), t\in[0,1]\big) = 1  \\
				H_1:  \forall \; \text{functions} \; \beta \in \mathcal B \; \text{holds} \; P\big(\alpha(X(t)) = \beta(X(t)), t\in[0,1]\big) < 1
	\end{split}
\end{equation}


\section{Test for Duration Dependence} \label{esttest}

We now assume a simple sample $(N_1,D_1,Y_1), \ldots, (N_n,D_n,Y_n)$. Note that from here on, arguments hold for both the progressive model with censoring and the semi-Markov model, because by Theorem \ref{martingale_property.th} $\alpha(t, D(t)) Y(t)$ is the intensity process for both definitions of $N$, $Y$ and $D$.  

\subsection{Smoothing conditions and estimators} \label{sec31}
We will estimate $\alpha$ by kernel smoothing. To this end, let $k$ be a one-dimensional density with moments $\kappa_1:=\int_{-1}^1v^2k(v)dv$ and $\kappa_2 := \int_{-1}^1 k(v)^2 dv$. Now, for $b>0$, let be $k_b(\cdot) := b^{-1}k(\cdot / b)$, bivariate  $K(u) := k(u_1) k(u_2)$ and $K_b(u) := k_b(u_1) k_b(u_2).$ \\
Again, it is helpful to label some assumptions.
\begin{itemize}
	\item[(K)] The kernel function $k$ is supported on $[-1, 1]$, is symmetric around zero and continuous. 
		\item[(K')] It holds (K). Additionally $k$ is Lipschitz-continuous, i.e. there exists a $C>0$, so that  $|k(u)-k(v)|\le C |u-v|$ for all $u,v$.
		\item[(K'')] It holds (K'). Furthermore let be $k^2$ Lipschitz-continuous with constant $\tilde C>0$.
	\item[(B)] For $n\rightarrow \infty$ holds, $nb^2\rightarrow \infty$ and $b\rightarrow 0$.
	\item[(B')] For $n\rightarrow \infty$ holds, $nb^4\rightarrow \infty$ and $b\rightarrow 0$.  
		\item[(\~ B)] For $n\rightarrow \infty$ holds, $nb^6\rightarrow 0$.
\end{itemize}

Under the Markovian assumption, i.e. not taking the duration in state $1$ until time $t$ into account, \cite{Hjort.1994} recommends estimating the intensity $\alpha$ by 
\begin{equation} \label{hjort1994}
	\widehat \beta(t):= \frac{\sum_{i=1}^n \int k_b(t - s) dN_i(s)}{\sum_{i=1}^n \int k_b(t - s)Y_i(s) ds}.
\end{equation}
In the semi-Markovian model, we use the estimator of \cite{Linton.1995}:
\begin{equation} \label{unrestest}
	\widehat \alpha(t,d):= \frac{\sum_{i=1}^n \int k_b(d - D_i(s)) k_b(t - s) dN_i(s)}{\sum_{i=1}^n \int k_b(d - D_i(s)) k_b(t - s)Y_i(s) ds}
\end{equation}

In the sense of \cite{Jones.1994}, it is an `external' estimator, meaning that the normalisation is realised outside of the sum (in the numerator). Hence, it is similar in structure to the Nadaraya-Watson estimator for regression functions. Kernel smoothing of the Nelson-Aalen estimator \cite[see e.g.][]{Ramlau-Hansen.1983} and generalisations thereof with covariates \cite[see e.g.][]{McKeague.1990} are, by contrast, `internal' estimators.

\subsection{Multivariate normality of estimator}

We now study the asymptotic normality of the (unrestricted) estimator \eqref{unrestest} under the alternative hypothesis. The processes $M_i, i=1,...,n$, as of Theorem \ref{martingale_property.th}, are square-integrable local on the interval $[0,1]$. 
With the definition 
\begin{align} \label{alpha_star.eq}
	\alpha^*(x) =  \frac{\sum_{i=1}^n \int K_b(x-X_i(s))\alpha(X_i(s))Y_i(s)ds}{\sum_{i=1}^n \int K_b(x-X_i(s))Y_i(s)ds}
\end{align}
we may decompose the difference $(\hat\alpha - \alpha)(x)$ into two summands,
\begin{align} \label{alpha_decomp.eq}
	(\hat\alpha - \alpha)(x) = (\hat\alpha - \alpha^*)(x) + (\alpha^* - \alpha)(x) = \frac{\mathscr{V}_x + \mathscr B_x}{\mathscr{C}_x},
\end{align}
where
$\mathscr{V}_x := \frac 1n \sum_{i=1}^n \int K_b(x-X_i(s))dM_i(s)$, $\mathscr{C}_x := \frac 1n \sum_{i=1}^n \int K_b(x-X_i(s))Y_i(s)ds$ and $\mathscr{B}_x := \frac 1n \sum_{i=1}^n \int K_b(x-X_i(s))\left[ \alpha(X_i(s)) - \alpha(x) \right]Y_i(s)ds$. Such decompositions are typical in the asymptotic analysis of kernel smoothing, primarily to decouple the bias term. Here, \cite{Linton.1995} denote the first as the `variable' term and the second as `stable'.

As the usual conditions for $\mathfrak{F}_t$ are fulfilled and due to Theorem \ref{martingale_property.th}, the following holds due to Theorem 1 in \cite{Linton.1995}.
\begin{Sa} \label{satz3_1}
	Under assumptions (D), (S), (K) and (B) holds:
	\begin{itemize}
		\item[(a)] $n^{1/2} b \left( \widehat \alpha(t,d) -\alpha^* (t,d) \right) \Rightarrow \mathcal{N} \left[ 0, \sigma_{t, d}^2 \right],$ where $\sigma_{t,d}^2 := \kappa_2^{2} \frac{\alpha(t,d)}{\varphi(t,d)}$
		\item[(b)] $b^{-2} \left( \alpha^*(t,d) - \alpha(t,d) \right) \overset P \longrightarrow c(t, d),$ where
		\begin{align*}
			c(t,d) := \kappa_1 \Bigg[ \frac{(\partial \alpha(t,d)/\partial t)(\partial \varphi(t,d) / \partial t)}{\varphi(t,d)} + \frac{\partial^2 \alpha / \partial t^2}{2} \\+ \frac{(\partial \alpha(t,d)/\partial d)(\partial \varphi(t,d) / \partial d)}{\varphi(t,d)} + \frac{\partial^2 \alpha / \partial d^2}{2}\Bigg]
		\end{align*} 
		\item[(c)] $\widehat \sigma_{t,d}^2 \overset P \longrightarrow \sigma_{t,d}^2,$ where 
		$\widehat \sigma_{t,d}^2 := \frac{n^{-1}b^{2} \sum_{i=1}^n \int k_b^2(d-D_i(s))k_b^2(t-s)dN_i(s)}{\left( n^{-1} \sum_{i=1}^n \int k_b(d - D_i(s)) k_b(t - s)Y_i(s) ds \right)^{2}}$.
	\end{itemize}
\end{Sa}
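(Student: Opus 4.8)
The statement is, in effect, a corollary of Theorem 1 of \cite{Linton.1995} on marker-dependent hazard estimation, read with calendar time $t$ as one smoothing argument and the duration $d=D(t)$ as the (predictable) marker. The whole point of Theorem \ref{martingale_property.th} and of the filtration discussion in Section \ref{sect21} was to supply the two structural prerequisites of that theorem: a multiplicative intensity $\alpha(t,D(t))Y(t)$ that is predictable with respect to a filtration $\mathfrak F_t$ satisfying the usual conditions. The plan is therefore not to reprove the limit laws from scratch but to verify that the remaining hypotheses of \cite{Linton.1995} are exactly the labelled assumptions in force: smoothness of $\alpha$ (twice) and of $\varphi$ (once) is (S); strict positivity of $\varphi$ on the neighbourhood $\mathcal N$ is (D); the kernel requirements (compact, symmetric, continuous) are (K); and the bandwidth rate $nb^2\to\infty$, $b\to0$ is (B), the correct two-dimensional analogue of the one-dimensional $nh\to\infty$, so that the achievable rate is $\sqrt{nb^2}=n^{1/2}b$. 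Once these are matched, parts (a)--(c) are their conclusions transcribed into the present notation, the two factors of $\kappa_2$, the $\kappa_1$ in the bias, and the $b^2$-rate all reflecting that one smooths in both coordinates with a common bandwidth.

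For the variable term I would argue through the decomposition \eqref{alpha_decomp.eq}. The denominator $\mathscr C_x$ is an average of predictable integrals; by the law of large numbers and the definition $\varphi(x)=f(t,d)\mathrm{y}(t)$ one gets $\mathscr C_x\overset{P}{\longrightarrow}\varphi(x)$. The numerator $\mathscr V_x$ is a sum of martingale stochastic integrals with respect to the $M_i$ of Theorem \ref{martingale_property.th}; I would compute its predictable variation process, which after the $n^{1/2}b$ scaling converges in probability to $\kappa_2^2\,\alpha(x)\varphi(x)$ (the factor $\kappa_2^2$ arising from $\int k_b^2$ in each coordinate), and then verify the Lindeberg condition, so that the martingale central limit theorem (Rebolledo; see \cite{Fleming.2011}) yields $n^{1/2}b\,\mathscr V_x\Rightarrow\mathcal N[0,\kappa_2^2\alpha(x)\varphi(x)]$. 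Slutsky's theorem applied to $n^{1/2}b(\hat\alpha-\alpha^*)=n^{1/2}b\,\mathscr V_x/\mathscr C_x$ then produces the variance $\kappa_2^2\alpha\varphi/\varphi^2=\kappa_2^2\alpha/\varphi=\sigma_{t,d}^2$, which is (a).

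Part (b) is the bias, i.e.\ the stable term $\alpha^*-\alpha=\mathscr B_x/\mathscr C_x$. Here I would Taylor-expand $\alpha(X_i(s))-\alpha(x)$ to second order about $x$; the symmetry of $k$ annihilates the first-order contributions, since $\int v\,k(v)\,dv=0$, while the second-order terms give the two curvature pieces $\tfrac12\,\partial^2\alpha/\partial t^2$ and $\tfrac12\,\partial^2\alpha/\partial d^2$, and the interaction of the first-order expansion with the gradient of the weighting density $\varphi$ produces the two mixed terms $(\partial\alpha\,\partial\varphi)/\varphi$ in each coordinate. Dividing by $b^2$ and using $\kappa_1=\int v^2k(v)\,dv$ leaves exactly $c(t,d)$; this is where one needs $\alpha$ twice and $\varphi$ once differentiable, i.e.\ (S). For part (c) I would substitute $dN_i=\alpha(X_i(s))Y_i(s)\,ds+dM_i$ into $\hat\sigma_{t,d}^2$: the martingale part is asymptotically negligible, the compensator part in the numerator converges (the explicit $b^2$ factor cancelling the $b^{-2}$ from the two $k_b^2$'s) to $\kappa_2^2\alpha(x)\varphi(x)$, and the squared denominator converges to $\varphi(x)^2$, giving $\hat\sigma_{t,d}^2\overset{P}{\longrightarrow}\kappa_2^2\alpha/\varphi=\sigma_{t,d}^2$.

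The main obstacle is not in any single display but in certifying that the non-i.i.d., right-censored, possibly semi-Markov observation scheme genuinely meets the conditions of \cite{Linton.1995}: predictability of the marker process $D$, the usual conditions on $\mathfrak F_t$, and the moment and Lindeberg control needed for the martingale central limit theorem. The first two were the purpose of Theorem \ref{martingale_property.th} and of the right-continuity argument in Section \ref{sect21}, and the last is underwritten by the uniform boundedness assumption (L) together with the jump bound $P(\mu(t+h)-\mu(t)\ge k\mid\cdots)\le C^kh^k$ of Lemma \ref{dMu.lem}, which keeps the contribution of multiple transitions in any small window negligible. Given these, the theorem is a matter of matching notation rather than of new probabilistic content.
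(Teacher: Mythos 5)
Your proposal is correct and takes essentially the same route as the paper: the paper's entire justification is the remark that, since the usual conditions for $\mathfrak F_t$ are fulfilled (Section \ref{sect21}) and the predictable intensity and martingale property hold by Theorem \ref{martingale_property.th}, the statement follows from Theorem 1 of \cite{Linton.1995}. Your additional sketch of the internal mechanics (variable/stable decomposition, martingale CLT, Taylor expansion of the bias) is consistent with, but goes beyond, what the paper itself records.
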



This result, for one point $x$, should now be generalized to arbitrarily (but finitely) many. We find that - asymptotically - estimators at different points are independent. Similar results hold for kernel density estimation and kernel regression \cite[see e.g.][p. 88+120]{Nadaraya.1989}. However, first note that a short calculation yields, for a real-valued function $g$:
\begin{equation} \label{EV.lem} 
	\IE \int g(Z_i(s), s) Y_i(s) ds = \int_{[0,1]^2} g(w) \varphi(w) dw
\end{equation}

Also, we need as consequence of Proposition 1 in \cite{Linton.1995}, using the Cramer-Wold device \cite[see e.g.][Theorem 29.4]{Billingsley.2012}: 
\begin{Lem} \label{mvmlt.lem}
	For all $n\in\IN$, let $H_{i,j}^{(n)}, i=1,...,n, j=1,...,\zeta,$ be predictable stochastic processes on the interval $[0,1]$, where $\zeta\in\IN$ is independent of $n$. For $n\rightarrow \infty$ let the following assumptions be fulfilled.  
	\begin{itemize}
		\item[(G1')] For a positive definite, symmetric $\zeta\times \zeta$-matrix $A = \{a_{jk}\}_{j,k=1,...,\zeta}$ holds $$\sum_{i=1}^n \int H_{i,j}^{(n)}(s)H_{i,k}^{(n)}(s) d\langle M_i \rangle (s) \longrightarrow_P a_{jk}, ~ j,k=1,...,\zeta.$$
		\item[(G2')] For all $\epsilon>0$ let  $\sum_{i=1}^n \int \left\{ H_{i,j}^{(n)} \right\} ^2(s) \mathbbm{1}_{\left\{ \left| H_{i,j}^{(n)}(s) \right|>\epsilon \right\}} d\langle M_i \rangle (s) \longrightarrow_P 0, ~ j=1,...,\zeta$ hold.
	\end{itemize}
	Then it is $ \left[ \sum_{i=1}^n \int H_{i,j}^{(n)}(s)d M_i (s) \right]_{j=1,...,\zeta} \Longrightarrow  N(\mathbf{0}, A).$
\end{Lem}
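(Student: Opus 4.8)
The plan is to reduce the multivariate convergence to a one-dimensional martingale central limit theorem by the Cramér--Wold device \cite[][Theorem 29.4]{Billingsley.2012}, and to obtain the latter from Proposition~1 of \cite{Linton.1995}. Fix an arbitrary $\lambda=(\lambda_1,\dots,\lambda_\zeta)^\top\in\IR^\zeta$. By linearity of the stochastic integral, the corresponding linear combination of the coordinates collapses into a single integral,
\begin{align*}
	\sum_{j=1}^\zeta\lambda_j\sum_{i=1}^n\int H_{i,j}^{(n)}(s)\,dM_i(s)=\sum_{i=1}^n\int\tilde H_i^{(n)}(s)\,dM_i(s),\qquad \tilde H_i^{(n)}:=\sum_{j=1}^\zeta\lambda_j H_{i,j}^{(n)},
\end{align*}
where each $\tilde H_i^{(n)}$ is predictable as a finite linear combination of predictable processes. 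Since $\lambda^\top A\lambda$ is precisely the variance of $\lambda^\top Z$ for $Z\sim N(\mathbf 0,A)$, it suffices to prove $\sum_{i=1}^n\int\tilde H_i^{(n)}\,dM_i\Rightarrow N(0,\lambda^\top A\lambda)$ for every such $\lambda$.

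To invoke Proposition~1 of \cite{Linton.1995} for this scalar integral, I would verify its two hypotheses for the integrand $\tilde H_i^{(n)}$. The variance condition is immediate from (G1') by bilinearity: expanding $(\tilde H_i^{(n)})^2=\sum_{j,k=1}^\zeta\lambda_j\lambda_k H_{i,j}^{(n)}H_{i,k}^{(n)}$ and integrating yields
\begin{align*}
	\sum_{i=1}^n\int\big(\tilde H_i^{(n)}(s)\big)^2\,d\langle M_i\rangle(s)=\sum_{j,k=1}^\zeta\lambda_j\lambda_k\sum_{i=1}^n\int H_{i,j}^{(n)}(s)H_{i,k}^{(n)}(s)\,d\langle M_i\rangle(s)\longrightarrow_P\lambda^\top A\lambda,
\end{align*}
a finite sum of $\zeta^2$ terms each converging by (G1').

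The main obstacle is transferring the Lindeberg condition (G2') to $\tilde H_i^{(n)}$, because a direct expansion of $(\tilde H_i^{(n)})^2$ creates cross terms $H_{i,j}^{(n)}H_{i,k}^{(n)}$ with $j\ne k$ that (G2') does not control. I would avoid these by a pointwise bound. Setting $\Lambda:=\sum_{j=1}^\zeta|\lambda_j|$ (the case $\Lambda=0$ being trivial), one has $|\tilde H_i^{(n)}(s)|\le\Lambda\max_j|H_{i,j}^{(n)}(s)|$ and $(\tilde H_i^{(n)}(s))^2\le\Lambda^2\max_j(H_{i,j}^{(n)}(s))^2$; hence on $\{|\tilde H_i^{(n)}(s)|>\epsilon\}$ the maximising index $j^\ast$ satisfies $|H_{i,j^\ast}^{(n)}(s)|>\epsilon/\Lambda$, and bounding the single surviving term by the full sum gives
\begin{align*}
	\big(\tilde H_i^{(n)}(s)\big)^2\mathbbm{1}_{\{|\tilde H_i^{(n)}(s)|>\epsilon\}}\le\Lambda^2\sum_{j=1}^\zeta\big(H_{i,j}^{(n)}(s)\big)^2\mathbbm{1}_{\{|H_{i,j}^{(n)}(s)|>\epsilon/\Lambda\}}.
\end{align*}
Integrating against $d\langle M_i\rangle$, summing over $i$, and applying (G2') with threshold $\epsilon/\Lambda$ to each of the finitely many $j$ shows that the left-hand side tends to $0$ in probability. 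Proposition~1 of \cite{Linton.1995} then yields $\sum_{i=1}^n\int\tilde H_i^{(n)}\,dM_i\Rightarrow N(0,\lambda^\top A\lambda)$; as $\lambda$ was arbitrary, the Cramér--Wold device delivers the asserted joint convergence to $N(\mathbf 0,A)$.
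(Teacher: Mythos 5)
Your proof is correct and takes essentially the same approach as the paper: the paper merely states Lemma \ref{mvmlt.lem} as a consequence of Proposition 1 in \cite{Linton.1995} combined with the Cram\'er--Wold device \cite[Theorem 29.4]{Billingsley.2012}, without carrying out the verification. Your argument executes exactly that plan, and the only genuinely non-obvious step -- transferring the Lindeberg condition (G2') to the linear combination via the pointwise bound with $\Lambda=\sum_{j}|\lambda_j|$, thereby avoiding the uncontrolled cross terms -- is handled correctly (note also that positive definiteness of $A$ guarantees the strictly positive limit variance $\lambda^\top A\lambda$ needed for the scalar result when $\lambda\ne\mathbf{0}$).
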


This is a $\zeta$-dimensional generalisation of Proposition 1 in \cite{Linton.1995}. The relation between them is similar to that between the classic process-valued limit theorem of \citet{Rebolledo.1980} and Theorem I.2 in \cite{Andersen.1982}. The proof of the following is in Appendix \ref{furthres}.

\begin{Sa} \label{fdd_conv.th}
	For any pairwise different points $x_1,...,x_{\zeta} \in (0,1)^2$ let Assumptions (D) and (S) be fulfilled. Additionally let hold (K) and (B). Then it is   
	$$\sqrt{n}b(\hat\alpha - \alpha^*)(\bd{x})\Longrightarrow N \left( \bd{0}, diag\left( \kappa_2^2 \frac{\alpha(x_j)}{\varphi(x_j)}, j=1,...,\zeta \right) \right).$$
\end{Sa}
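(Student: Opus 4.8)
The plan is to reduce the joint statement to univariate ingredients already available and to Lemma \ref{mvmlt.lem}. By the decomposition \eqref{alpha_decomp.eq} we have, coordinate-wise, $(\hat\alpha - \alpha^*)(x_j) = \mathscr{V}_{x_j}/\mathscr{C}_{x_j}$, so that
\[
\sqrt{n}b(\hat\alpha - \alpha^*)(\bd{x}) = \left( \frac{\sqrt{n}b\,\mathscr{V}_{x_j}}{\mathscr{C}_{x_j}} \right)_{j=1,\dots,\zeta}.
\]
I would first establish a joint central limit theorem for the numerator vector $\sqrt{n}b(\mathscr{V}_{x_1},\dots,\mathscr{V}_{x_\zeta})$, then record that each denominator satisfies $\mathscr{C}_{x_j} \longrightarrow_P \varphi(x_j)$ (this is exactly the law of large numbers already underlying the one-point result of Theorem \ref{satz3_1}: the mean follows from \eqref{EV.lem} and a kernel change of variables, and the variance vanishes under (B)), and finally assemble the two pieces via the multivariate Slutsky theorem together with the continuous map $v\mapsto(v_j/\varphi(x_j))_j$, legitimate because $\varphi(x_j)>0$ by (D).

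The core is the numerator, which I would treat with Lemma \ref{mvmlt.lem}. Set $H_{i,j}^{(n)}(s) := \tfrac{b}{\sqrt{n}} K_b(x_j - X_i(s))$, which is predictable by the left-continuity noted in the proof of Theorem \ref{martingale_property.th}, and observe $\sum_i \int H_{i,j}^{(n)}\, dM_i = \sqrt{n}b\,\mathscr{V}_{x_j}$. Since $M_i$ is the compensated counting process of Theorem \ref{martingale_property.th}, its predictable variation is $d\langle M_i\rangle(s) = \alpha(X_i(s))Y_i(s)\,ds$. For (G1') I would compute, using \eqref{EV.lem},
\[
\IE \sum_{i=1}^n \int H_{i,j}^{(n)} H_{i,k}^{(n)} \, d\langle M_i\rangle = b^2 \int K_b(x_j - w)K_b(x_k - w)\,\alpha(w)\varphi(w)\,dw.
\]
The decisive observation is that for fixed distinct $x_j\ne x_k$ and $b\to0$ the two kernel windows eventually separate, so the off-diagonal integrand vanishes identically once $b$ is small; the diagonal terms, after substituting $w=x_j-bv$ and using continuity from (S), converge to $\kappa_2^2\,\alpha(x_j)\varphi(x_j)$ since $\int K(v)^2\,dv = \kappa_2^2$. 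Hence the limiting matrix $A = \mathrm{diag}(\kappa_2^2\alpha(x_j)\varphi(x_j))$ is diagonal and, by (D) and positivity of $\alpha$, positive definite. Passing from expectation to convergence in probability is the same estimate as in the one-point case: the per-subject contributions are i.i.d.\ in $i$, and their variance is of order $(nb)^{-1}\to0$ under (B).

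Condition (G2') is then immediate from the uniform bound $|H_{i,j}^{(n)}(s)| \le \|K\|_\infty/(\sqrt{n}b)$, which tends to zero under (B); for every fixed $\epsilon>0$ the indicator $\I_{\{|H_{i,j}^{(n)}(s)|>\epsilon\}}$ is therefore eventually zero for all $i$ and $s$, making the Lindeberg sum vanish deterministically for large $n$. Lemma \ref{mvmlt.lem} then yields $\sqrt{n}b(\mathscr{V}_{x_1},\dots,\mathscr{V}_{x_\zeta}) \Longrightarrow N(\bd{0}, A)$, and combining with $\mathscr{C}_{x_j}\longrightarrow_P\varphi(x_j)$ produces the asserted limit with covariance $\mathrm{diag}(\kappa_2^2\alpha(x_j)/\varphi(x_j))$. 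I expect the main obstacle to be the rigorous verification of (G1'), and within it the genuinely new content relative to Theorem \ref{satz3_1}: proving that the off-diagonal covariances vanish, i.e.\ the asymptotic independence of the estimator across distinct points. This rests on the support-separation argument, which must be made uniform enough to survive both the expectation computation and the variance bound; everything else essentially reproduces the single-point analysis.
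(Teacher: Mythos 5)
Your proposal matches the paper's own proof essentially step for step: the same choice $H_{i,j}^{(n)}(s)=n^{-1/2}b\,K_b(x_j-X_i(s))$ plugged into Lemma \ref{mvmlt.lem}, the same expectation computation via \eqref{EV.lem} with the kernel-support-separation argument annihilating the off-diagonal entries of (G1'), the same vanishing-variance step under (B), and the same Slutsky assembly with $\mathscr{C}_{x_j}\rightarrow_P\varphi(x_j)$. The only differences are cosmetic: you verify (G2') directly from the deterministic bound $|H_{i,j}^{(n)}|\le \|k\|_\infty^2/(\sqrt{n}b)\rightarrow 0$ where the paper simply defers it to \cite{Linton.1995}, and your stated variance order $(nb)^{-1}$ should read $(nb^2)^{-1}$ (the paper's $(nb^{d+1})^{-1}$ with $d=1$), which vanishes under (B) just the same.
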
 

As for confidence intervals and testing, the asymptotic normality around $\alpha$, instead of $\alpha^*$, is needed, under-smoothing, namely (\~B), can be applied in order to let the stable term converge faster to zero than the width of the confidence interval. 
\begin{Kor} \label{alpha_centered_conv.corollary}
It is $\sqrt{n}b(\hat\alpha - \alpha)(\bd{x})\Longrightarrow N \left( \bd{0}, diag\left( \kappa_2^2 \alpha(x_j)/\varphi(x_j), j=1,...,\zeta \right) \right)$, when (\~B) is added to the assumptions of Theorem \ref{fdd_conv.th}. 
\end{Kor}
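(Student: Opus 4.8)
The plan is to reduce the statement to Theorem~\ref{fdd_conv.th} by showing that replacing the centering $\alpha^*$ by $\alpha$ costs only a term that vanishes in probability after the $\sqrt n\,b$ scaling. I would start from the exact, coordinatewise (hence $\IR^{\zeta}$-valued) decomposition
\[
\sqrt n\,b\,(\hat\alpha - \alpha)(\bd x) = \sqrt n\,b\,(\hat\alpha - \alpha^*)(\bd x) + \sqrt n\,b\,(\alpha^* - \alpha)(\bd x).
\]
The first summand already converges weakly to the claimed Gaussian limit by Theorem~\ref{fdd_conv.th}, whose hypotheses (D), (S), (K), (B) are all in force. Everything therefore reduces to proving that the second (bias) summand is $o_P(1)$, and then reassembling by Slutsky.

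For the bias summand I would rewrite, for each fixed $j$,
\[
\sqrt n\,b\,(\alpha^* - \alpha)(x_j) = \sqrt{n b^{6}}\;\cdot\; b^{-2}(\alpha^* - \alpha)(x_j).
\]
By Theorem~\ref{satz3_1}(b) the right-hand factor satisfies $b^{-2}(\alpha^* - \alpha)(x_j) \overset P\longrightarrow c(x_j)$, where $c(x_j)$ is the finite bias constant (finite thanks to the differentiability in (S)); the left-hand factor is deterministic with $\sqrt{n b^{6}}\to 0$ by the added undersmoothing assumption (\~B). Hence each coordinate of the bias vector is a null sequence times an $O_P(1)$ sequence and tends to $0$ in probability. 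Since there are only finitely many coordinates, each converging in probability to the constant $0$, the whole bias vector converges to $\bd 0$ in probability in $\IR^{\zeta}$.

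Finally I would combine the two pieces with the multivariate converging-together (Slutsky) lemma \cite[see][]{Billingsley.2012}: a weakly convergent sequence plus a sequence tending to $\bd 0$ in probability keeps the same weak limit. This yields exactly
\[
\sqrt n\,b\,(\hat\alpha - \alpha)(\bd x) \Longrightarrow N\!\left(\bd 0,\; diag\!\left(\kappa_2^2\,\alpha(x_j)/\varphi(x_j),\ j=1,\dots,\zeta\right)\right).
\]
I do not expect a genuine obstacle here; the only point needing care is that Theorem~\ref{satz3_1}(b) is stated at a single point, so one must observe that coordinatewise convergence in probability to constants is equivalent to joint convergence of the $\IR^{\zeta}$-vector—this is what legitimizes handling the bias as one vanishing vector rather than point by point. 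Conceptually, the role of (\~B) is precisely to force the bias to shrink faster than the $n^{-1/2}b^{-1}$ width of the limiting Gaussian fluctuations, which is the whole purpose of undersmoothing; one may also note that (\~B) is compatible with (B) (e.g.\ $b\sim n^{-1/5}$ satisfies both $n b^{2}\to\infty$ and $n b^{6}\to 0$).
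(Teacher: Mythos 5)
Your proposal is correct and follows essentially the same route as the paper: the identical decomposition $\sqrt{n}b(\hat\alpha-\alpha)(\bd x)=\sqrt{n}b(\hat\alpha-\alpha^*)(\bd x)+(nb^6)^{1/2}\,b^{-2}(\alpha^*-\alpha)(\bd x)$, Theorem \ref{fdd_conv.th} for the first term, the pointwise bias limit (the paper cites Theorem 1 of Nielsen and Linton directly, which is the source of Theorem \ref{satz3_1}(b)) combined with the equivalence of coordinatewise and joint stochastic convergence for the second term, and (\~B) to kill the factor $(nb^6)^{1/2}$, finishing by Slutsky. No substantive difference.
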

\begin{proof}
	We note that
$\sqrt{n}b(\hat\alpha - \alpha)(\bd{x}) = \sqrt{n}b(\hat\alpha - \alpha^*)(\bd{x}) + \sqrt{n}b b^2 b^{-2} (\alpha^* - \alpha)(\bd x)$.
The first summand converges, due to Theorem \ref{fdd_conv.th}, towards the required distribution. According to \cite{Linton.1995}, Theorem 1, $b^{-2}(\alpha^* - \alpha)(\bd x)$ converges towards a deterministic vector, because the stochastic convergence of a vector is equivalent to the stochastic convergence of its coordinates. The leading factor $\sqrt{n}b b^2 = (nb^6)^{1/2}$ converges toward zero, due to (\~B).
\end{proof}

\subsection{Grid search test}

We construct here a consistent and powerful test for the hypotheses \eqref{hy}, based on \cite{Gozalo.1993}. With an unrestricted estimator of $\alpha$ as in \eqref{unrestest}, with an estimator of $\alpha$ under hypothesis $\hat\beta$ from \eqref{hjort1994} and with estimator of the standard error of $\hat{\alpha}(x)$, namely $\hat\sigma_x$ (from Theorem \ref{satz3_1}), we now define for the test statistic
\begin{align*}
	S_n(x) := \left( nb^2 \right)^{\frac 12} \frac{\hat\alpha(x) - \hat\beta(x)}{\hat\sigma_x}.
\end{align*}

We will need to ensure that the weak point-wise consistency of $\hat\sigma_x^2$ (given in Theorem \ref{satz3_1} (c)) also holds uniformly, and make use of Assumption (K''). With the proof in Appendix \ref{unifconvse} and recalling the definition of $\mathscr X$ from Section \ref{sec2_3}, we state: 

\begin{Sa} \label{uniform_var.th}
	Under the Assumptions (D'), (S'), (K'') and (B') holds $\sup_{x\in\mathscr X}|\hat\sigma_x^2 - \sigma_x^2| \longrightarrow_P0.$
\end{Sa}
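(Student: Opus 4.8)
The plan is to write the variance estimator as a ratio and establish uniform consistency of its numerator and denominator separately. Set $\widehat p(x) := n^{-1}b^{2}\sum_{i=1}^n\int k_b^2(d-D_i(s))k_b^2(t-s)\,dN_i(s)$ and $\widehat q(x) := n^{-1}\sum_{i=1}^n\int k_b(d-D_i(s))k_b(t-s)Y_i(s)\,ds$ (the latter coinciding with $\mathscr{C}_x$ of \eqref{alpha_decomp.eq}), so that $\widehat\sigma_x^2 = \widehat p(x)/\widehat q(x)^2$ by Theorem \ref{satz3_1}(c). Writing $x=(t,d)$, the targets are $p_0(x):=\kappa_2^2\,\alpha(x)\varphi(x)$ and $q_0(x):=\varphi(x)$, since $\sigma_x^2 = p_0(x)/q_0(x)^2$. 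If I can show $\sup_{x\in\mathscr X}|\widehat p(x)-p_0(x)|\overset{P}{\longrightarrow}0$ and $\sup_{x\in\mathscr X}|\widehat q(x)-q_0(x)|\overset{P}{\longrightarrow}0$, the claim follows by an elementary ratio argument: by (D') we have $\inf_{\mathscr X}\varphi =: c_0>0$, and by (S') both $\alpha$ and $\varphi$ are bounded on the compact $\mathscr X$; on the event $\{\sup_{\mathscr X}|\widehat q-q_0|<c_0/2\}$, whose probability tends to one, $\widehat q$ is bounded away from zero uniformly, so the decomposition $\widehat p/\widehat q^2 - p_0/q_0^2 = (\widehat p-p_0)/\widehat q^2 + p_0(q_0-\widehat q)(q_0+\widehat q)/(\widehat q^2 q_0^2)$ yields the uniform convergence of $\widehat\sigma_x^2-\sigma_x^2$.

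For each of $\widehat p$ and $\widehat q$ I would split into a deterministic bias part and a centred stochastic part, e.g. $\widehat q - q_0 = (\widehat q-\IE\widehat q) + (\IE\widehat q-q_0)$. The bias parts are handled by direct computation. For $\widehat q$, \eqref{EV.lem} gives $\IE\widehat q(x) = \int_{[0,1]^2} k_b(d-d')k_b(t-s')\varphi(s',d')\,ds'\,dd'$, which after the substitution $s'=t-bv_2,\ d'=d-bv_1$ equals $\int K(v)\varphi(t-bv_2,d-bv_1)\,dv$ and converges to $\varphi(x)$ uniformly on $\mathscr X$ by uniform continuity of $\varphi$ (from (S') on the compact set) together with (K). For $\widehat p$, the martingale property of Theorem \ref{martingale_property.th} lets me replace $dN_i$ by its compensator $\alpha(s,D_i(s))Y_i(s)\,ds$ in expectation (the predictable, bounded integrand annihilates the martingale part), so that \eqref{EV.lem} applies again and, after the same rescaling and using $b^2\cdot b^{-4}=b^{-2}$, $\IE\widehat p(x)\to(\int k^2)^2\alpha(x)\varphi(x)=\kappa_2^2\alpha(x)\varphi(x)=p_0(x)$ uniformly.

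The stochastic parts are the crux. Because the sample is i.i.d., $\widehat q-\IE\widehat q$ and $\widehat p-\IE\widehat p$ are averages of $n$ centred, independent summands, bounded by $O(b^{-1})$ and $O(b^{-2})$ with variances of order $b^{-1}$ and $b^{-2}$ respectively. I would control the supremum by a covering argument: cover $\mathscr X$ by a grid of mesh $\delta_n$ (hence $O(\delta_n^{-2})$ points), bound the maximum over the grid by a union bound combined with a Bernstein exponential inequality for the bounded i.i.d. summands, and bound the oscillation between neighbouring grid points by the Lipschitz property of the summands. This is exactly where (K'') enters: Lipschitz continuity of $k$ makes $x\mapsto\widehat q(x)$ Lipschitz with constant $O(b^{-2})$, and Lipschitz continuity of $k^2$ makes $x\mapsto\widehat p(x)$ Lipschitz with constant $O(b^{-3})$ (the same bounds hold for $\IE\widehat q$ and $\IE\widehat p$). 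Choosing $\delta_n$ polynomially small and $o(b^{3})$ forces the oscillation to vanish, while the Bernstein tail, of order $\exp(-c\,nb^{2})$ for $\widehat p$ and $\exp(-c\,nb)$ for $\widehat q$, beats the union-bound factor $\delta_n^{-2}$ precisely because (B') guarantees $nb^{4}\to\infty$, hence $nb^{2}/\log n\to\infty$ and a fortiori $nb/\log n\to\infty$ (note $b\ge n^{-1/4}$ eventually, so $\log(1/\delta_n)=O(\log n)$).

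The main obstacle is this last, uniform, stochastic step: pointwise consistency of $\widehat\sigma_x^2$ is already granted by Theorem \ref{satz3_1}(c), so all the new work lies in upgrading it to uniformity without strengthening the bandwidth condition beyond (B'). The delicate balance is between a grid fine enough that the $O(b^{-3})$-Lipschitz oscillation of $\widehat p$ is negligible and coarse enough that the union bound over $O(\delta_n^{-2})$ points is still absorbed by the concentration bound; the exponential (rather than Chebyshev) inequality is what makes (B') suffice. Everything else — the bias expansions and the final ratio manipulation — is routine given (D'), (S'), (K) and \eqref{EV.lem}.
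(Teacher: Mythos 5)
Your proposal is correct for the progressive right-censored model and reaches the conclusion by a genuinely different route than the paper. The outer skeleton is the same in spirit: your $\hat p(x)$ is exactly the paper's $\mathscr C_x^2\hat\sigma_x^2$ and your $\hat q(x)$ is $\mathscr C_x$, your bias term for $\hat p$ coincides with the paper's term $\sup_x|\IE\mathscr C_x^2\tilde\sigma_x^2-\varphi^2(x)\sigma_x^2|$ (handled there by Lemma \ref{uniform_int.lem}), and both proofs finish with a uniform ratio argument resting on (D'). The difference lies in how the supremum of the centred stochastic part is controlled. The paper inserts the compensator-based intermediate quantity $\tilde\sigma_x^2$ (with $dN_i$ replaced by $d\langle M_i\rangle$), splits into three terms, and invokes Lemma 1 of \cite{Linton.1995}: pointwise convergence is quoted from that paper, and only an $L^2$-continuity condition (L2) on increments is verified, via the martingale isometry $\IE\bigl(\int H\,dM\bigr)^2=\IE\int H^2\,d\langle M\rangle$ and the Lipschitz assumption (K''), giving a bound of order $n^{-1}b^{-4}|d^*-d|^2$ which (B') kills. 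You instead merge the martingale and compensator fluctuations into a single centred i.i.d.\ average and control its supremum by a covering grid, a union bound and Bernstein's inequality, using (K'') only for the $O(b^{-3})$ oscillation between grid points. Your route is more elementary and self-contained (no appeal to Linton's Lemma 1), and your observation that (B') gives $nb^2=\sqrt{n}\,(nb^4)^{1/2}\gg\log n$, so the exponential tail beats the polynomial union bound, is exactly the right accounting.

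One caveat you should address: your Bernstein step requires the $\hat p$-summands to be deterministically bounded by $O(b^{-2})$, which silently uses that each $N_i$ jumps at most once. That holds in the progressive censored model, but Section 3 of the paper claims the result for the semi-Markov model as well, where $N_i$ may jump arbitrarily often and the deterministic bound fails. This is repairable: by Lemma \ref{dMu.lem} the number of jumps in a window of length $2b$ has geometric tails, so a Bernstein inequality under moment (rather than boundedness) conditions, or a truncation argument, recovers the same $\exp(-c\,nb^2)$ tail; but as written the step is incomplete in that generality. The paper's $L^2$/isometry route avoids the issue entirely, since it needs only second moments of the martingale integrals, which Theorem \ref{martingale_property.th} supplies for both models.
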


 For the test to have asymptotic level $\gamma$, we need, under hypothesis, $\hat\beta$  to be consistent with a faster convergence rate than that, under alternative hypothesis, of $\sqrt{n}b$ for $\hat\alpha$ (see Corollary \ref{alpha_centered_conv.corollary}). In fact, Theorem 1 from \cite{Linton.1995} already suggests this (for the dimension of the covariate degenerated to zero), and it is given explicitly in \cite{Hjort.1994}. Hence, we have a real sequence of numbers $a_n^\beta$, with $(\sqrt{n}b)/ a_n^\beta\rightarrow 0$ under $H_0$ for $x\in\{x^1,...,x^{\zeta}\}$, with $a_n^\beta \left( \hat\beta(x) - \alpha(x) \right) \overset{d}{\rightarrow} \xi$, 
	where $\xi$ is a random variable with expectation $0$ and variance $\sigma_\beta^2(x)$, which can be consistently estimated by $\hat\sigma_\beta^2(x)$.

The test is now to reject $H_0$ if, on a grid of points, one searches for the largest difference of $\hat\alpha(x)$ and $\hat\beta(x)$, and rejects if the distance exceeds the critical value. By doing this, the test statistic's distribution does not depend on the number of grid points. We follow \cite{Gozalo.1993}, who finds that using a random grid is powerful. Let $f_{\tilde X}$ denote a density on $\mathscr X$. Further let $\{\tilde X^j\}_{j\in\IN}$ be a sequence of independent  random vectors, all distributed with $f_{\tilde X}$, and independent of the data. Furthermore, the asymptotic analysis is simplified by implementing a penalisation which ensures that earlier ages are more influential and that the maximum is asymptotically unique. 
Also let $\{\zeta_n\}_{n\in\IN}$ be a sequence of natural numbers with $\zeta_n \rightarrow \infty$, and $\zeta_n = o((nb^2)^\delta)$, for some $\delta\in(0,1)$, for $n\rightarrow\infty$. Now we define, for some $\eta>0$, 
$J := \arg\max_{1\le j\le \zeta_n}\left\{ S_n(\tilde X^j)^2 - \eta (nb^2)^\delta \mathbbm 1 _{\{j>1\}} \right\}$
 and $\hat X := \tilde X ^J.$ And as the penultimate prerequisite, by  \cite{Hjort.1994}, there is a function $\beta_0\in\mathcal B$, for which uniform consistency holds: 
\begin{equation} \label{astrich}
	\sup_{x\in\mathscr X} |\hat\beta (x) - \beta_0(x)| \rightarrow_P 0 \quad \text{for} \quad n\rightarrow\infty
\end{equation}

For the proof of the next Theorem \ref{Gozalo_2.th}, we state as an important part of it (with proof in Appendix \ref{furthres}).
\begin{Lem} \label{gozalo_problem.lem}
	Under $H_0$ and Assumptions (D'), (K''), (B'), (\~B) as well as (S) for all $x\in\mathscr X$ holds	$\frac 1 {\zeta_n} \sum_{j=1}^{\zeta_n} S_n(\tilde X^j)^2 = O_P(1)$.
\end{Lem}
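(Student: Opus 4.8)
The plan is to exploit that the grid $\{\tilde X^j\}$ is i.i.d.\ with density $f_{\tilde X}$ and independent of the sample, so that the average over the grid is, conditionally on the data, a Monte-Carlo approximation of a deterministic integral. Writing $\mathcal D_n$ for the data $\sigma$-field and recalling that $S_n(\cdot)$ is $\mathcal D_n$-measurable as a function on $\mathscr X$, one has
\begin{equation*}
\IE\Big[\tfrac{1}{\zeta_n}\sum_{j=1}^{\zeta_n} S_n(\tilde X^j)^2 \,\Big|\, \mathcal D_n\Big] = \int_{\mathscr X} S_n(x)^2 f_{\tilde X}(x)\,dx =: T_n .
\end{equation*}
Since $S_n^2 \ge 0$, a conditional Markov inequality gives $P\big(\tfrac{1}{\zeta_n}\sum_j S_n(\tilde X^j)^2 > L \mid \mathcal D_n\big) \le T_n/L$, so that $\tfrac{1}{\zeta_n}\sum_j S_n(\tilde X^j)^2 = O_P(1)$ follows once I show the integrated statistic $T_n = O_P(1)$. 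This reduces the claim to a statement about the (data-dependent but grid-free) quantity $T_n$.

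Second, I would remove the estimated standard error from the denominator. By Theorem \ref{uniform_var.th}, $\sup_{x\in\mathscr X}|\hat\sigma_x^2 - \sigma_x^2| \to_P 0$, while $\sigma_x^2 = \kappa_2^2 \alpha(x)/\varphi(x)$ is continuous and, by (D') together with the positivity of the intensity $\alpha$, bounded below on the compact $\mathscr X$ by some $\sigma_{\min}^2>0$. Hence on an event $A_n$ with $P(A_n)\to 1$ one has $\hat\sigma_x^2 \ge \sigma_{\min}^2/2$ uniformly, and on $A_n$
\begin{equation*}
T_n \le \frac{2}{\sigma_{\min}^2}\int_{\mathscr X} nb^2\big(\hat\alpha(x) - \hat\beta(x)\big)^2 f_{\tilde X}(x)\,dx =: \frac{2}{\sigma_{\min}^2}\,R_n ,
\end{equation*}
so it suffices to prove $R_n = O_P(1)$. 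Invoking the uniform consistency of $\hat\sigma_x^2$ on $A_n$ is essential here, as it avoids ever having to bound $\IE[1/\hat\sigma_x^2]$.

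Third, under $H_0$ there is $\beta_0 \in \mathcal B$ with $\alpha = \beta_0$ on $\mathscr X$ (independent of $d$), and I would decompose $\hat\alpha - \hat\beta = (\hat\alpha - \alpha^*) + (\alpha^* - \alpha) - (\hat\beta - \alpha)$, so that $(a+b+c)^2 \le 3(a^2+b^2+c^2)$ leaves three integrals to bound. For the variable term, Theorem \ref{satz3_1}(a) yields $nb^2\,\IE(\hat\alpha - \alpha^*)^2(x) \to \sigma_x^2$, a second moment bounded uniformly in $x$ and $n$ through the variance computation behind \eqref{EV.lem}; integrating against $f_{\tilde X}$ over the compact $\mathscr X$ keeps the expectation bounded, so this contribution is $O_P(1)$ by Markov. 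For the stable term, the rate in Theorem \ref{satz3_1}(b) gives $\alpha^* - \alpha = O_P(b^2)$ with uniformly bounded second moment, whence $nb^2(\alpha^* - \alpha)^2 = nb^6\,O_P(1) = o_P(1)$ by (\~B). For the restricted estimator, $nb^2(\hat\beta - \alpha)^2 = \big((nb^2)/(a_n^\beta)^2\big)\big(a_n^\beta(\hat\beta - \alpha)\big)^2$; the prefactor tends to $0$ since $\sqrt n\,b/a_n^\beta \to 0$ under $H_0$, while $a_n^\beta(\hat\beta - \alpha)$ has bounded second moment by the rate of \cite{Hjort.1994}, so this term is $o_P(1)$ too. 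Summing gives $R_n = O_P(1)$.

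The main obstacle is the passage from the pointwise limit statements of Theorem \ref{satz3_1} and Corollary \ref{alpha_centered_conv.corollary} to the integrated control needed for $R_n$: one cannot integrate a merely pointwise $O_P(1)$ bound. What is genuinely required is a second-moment bound on each of the three pieces that is uniform in $x\in\mathscr X$ (and in $n$), so that Fubini and Markov apply to the $x$-integral; deriving these uniform variance and bias bounds from the kernel structure---controlled by (D'), (K''), the continuity of $\alpha,\varphi$ under (S), and (B')---is the technical heart of the argument, and the interplay of the under-smoothing rate (\~B) with the faster rate $a_n^\beta$ of the restricted estimator is what ultimately closes it.
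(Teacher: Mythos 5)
Your two reduction steps are sound, and in fact streamline the paper's argument: conditioning on the data and applying a conditional Markov inequality reduces the grid average to the integral $T_n$, which would only require second-moment control, whereas the paper bounds the grid average via Chebyshev and therefore has to establish \emph{fourth}-moment bounds (its Assertions 2--5). Stripping out $\hat\sigma_{x}$ on a high-probability event via Theorem \ref{uniform_var.th} and the lower bound on $\sigma_x^2$ also matches the first step of the paper's proof (its Assertion 1).

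The genuine gap is in your third step, and it is not merely that the uniform moment bounds are ``deferred'': the quantities you propose to bound need not have the moments you assert. Theorem \ref{satz3_1}(a) and (b) are weak-convergence and convergence-in-probability statements; they imply nothing about $\IE\big[(\hat\alpha-\alpha^*)^2(x)\big]$, and the claim $nb^2\,\IE(\hat\alpha-\alpha^*)^2(x)\to\sigma_x^2$ is a non sequitur. Worse, $\hat\alpha-\alpha^*=\mathscr V_x/\mathscr C_x$ and $\alpha^*-\alpha=\mathscr B_x/\mathscr C_x$ carry the random denominator $\mathscr C_x=\frac1n\sum_i\int K_b(x-X_i(s))Y_i(s)\,ds$, which is zero with positive probability (no individual at risk inside the kernel window), so these ratios are not even well defined there and their second moments need not be finite; the same objection applies to $\hat\beta-\alpha$ with its own random denominator, and $a_n^\beta(\hat\beta-\alpha)\Rightarrow\xi$ likewise yields no moment bound. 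This is precisely why the paper never takes moments of the ratios: it factors out $\mathscr C_x$ as well (uniformly consistent for $\varphi$ and hence bounded below on a high-probability event, by Theorem 2 of Nielsen and Linton) and then proves $\sup_x$ moment bounds only for the \emph{numerators}, i.e.\ the martingale term $\mathscr V_x$ (via a Poisson-domination argument, Assertion 5) and the bias term $\mathscr B_x$ (via the kernel bias computation and (\~B), Assertion 4). To close your argument you would need to do the same: on the high-probability event replace $\hat\alpha-\hat\beta$ by numerators over quantities bounded below, then establish $\sup_x nb^2\,\IE\,\mathscr V_x^2=O(1)$ (from the predictable variation computation and \eqref{EV.lem}), $\sup_x nb^2\,\IE\,\mathscr B_x^2=o(1)$ under (\~B), and an analogous bound for the restricted estimator's numerator. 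As written, the proposal's technical heart is missing, and the route it names for supplying it would fail.
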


\begin{Sa} \label{Gozalo_2.th}
	Let (D'), (K''), (B') and (\~B) hold, as well as (S) for all $x\in\mathscr X$. \\
	(a) Under $H_0$ holds, for $n\rightarrow\infty$, $P(J=1) \rightarrow 1$ and $S_n(\hat X)^2 \Longrightarrow \chi_1^2$. \\
	(b) Given $H_1$ and  
	\begin{align} \label{gozalo_2_ann_b.eq}
		P(\alpha(\tilde X)\ne\beta_0(\tilde X)) > 0,
	\end{align}
then, for $\{k_n\}_n=O((nb^2)^\delta)$, it is $P(S_n(\hat X)^2>k_n) \stackrel{n\rightarrow\infty}{\longrightarrow}  1$.
\\ 
 (c) Let $\{k_n\}_n$ be as in (b) and furthermore $P(\hat\beta\in\mathcal B) = 1$. Under $H_1$ and given that there is an $\epsilon> 0$ with $\inf_{\beta\in\mathcal B} P(|\alpha(\tilde X) - \beta(\tilde X) | > \epsilon) =: p > 0$, then again $P(S_n(\hat X)^2>k_n) \stackrel{n\rightarrow\infty}{\longrightarrow}  1$.
\end{Sa}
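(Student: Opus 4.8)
The plan is to treat the three parts in turn, using the penalisation to collapse (a) onto the single unpenalised grid point, and the $(nb^2)^{1/2}$-versus-$a_n^\beta$ rate gap to force rejection in (b) and (c). For (a) I would first prove $P(J=1)\to1$. By Lemma \ref{gozalo_problem.lem}, $\frac1{\zeta_n}\sum_{j=1}^{\zeta_n}S_n(\tilde X^j)^2=O_P(1)$, hence $\max_{1\le j\le\zeta_n}S_n(\tilde X^j)^2\le\sum_{j=1}^{\zeta_n}S_n(\tilde X^j)^2=O_P(\zeta_n)=o_P((nb^2)^\delta)$, the last step using $\zeta_n=o((nb^2)^\delta)$. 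Since the penalised objective is $S_n(\tilde X^1)^2\ge0$ at $j=1$ and at most $\max_{2\le j\le\zeta_n}S_n(\tilde X^j)^2-\eta(nb^2)^\delta$ for $j>1$, and the latter is negative with probability tending to one, the maximiser is $J=1$ w.h.p.; thus $S_n(\hat X)^2=S_n(\tilde X^1)^2$ on an event of probability $\to1$, so the two share the same weak limit. To identify it, fix $x\in\mathscr X$ and write $\hat\alpha(x)-\hat\beta(x)=(\hat\alpha(x)-\alpha(x))-(\hat\beta(x)-\alpha(x))$. Under $H_0$, $\alpha=\beta_0$ on the support, so Corollary \ref{alpha_centered_conv.corollary} (with $\zeta=1$) gives $(nb^2)^{1/2}(\hat\alpha(x)-\alpha(x))\Rightarrow N(0,\sigma_x^2)$, whereas $(nb^2)^{1/2}(\hat\beta(x)-\alpha(x))=\big((\sqrt nb)/a_n^\beta\big)\,a_n^\beta(\hat\beta(x)-\alpha(x))\to_P0$, the second factor being $O_P(1)$ as it converges in distribution to $\xi$ and the first tending to $0$ by $(\sqrt nb)/a_n^\beta\to0$. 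With $\hat\sigma_x^2\to_P\sigma_x^2$ from Theorem \ref{satz3_1}(c) and Slutsky, $S_n(x)^2\Rightarrow\chi_1^2$, a limit independent of $x$. Since $\{\tilde X^j\}$ is independent of the data, conditioning on $\tilde X^1$ and applying dominated convergence to the distribution functions yields $P(S_n(\tilde X^1)^2\le t)=\int P(S_n(x)^2\le t)f_{\tilde X}(x)\,dx\to F_{\chi_1^2}(t)$, whence $S_n(\hat X)^2\Rightarrow\chi_1^2$.

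For (b) the governing bound is $S_n(\hat X)^2\ge\max_{1\le j\le\zeta_n}\{S_n(\tilde X^j)^2-\eta(nb^2)^\delta\I_{\{j>1\}}\}\ge S_n(\tilde X^{j_0})^2-\eta(nb^2)^\delta$ for any index $j_0$. From \eqref{gozalo_2_ann_b.eq} and $\{\alpha\ne\beta_0\}=\bigcup_m\{|\alpha-\beta_0|>1/m\}$ I pick $\epsilon_0>0$ with $q:=P(\tilde X\in B)>0$, where $B:=\{x\in\mathscr X:|\alpha(x)-\beta_0(x)|>\epsilon_0\}$; as $\zeta_n\to\infty$ and the grid is independent of the data, $P(\exists\,j\in\{2,\dots,\zeta_n\}:\tilde X^j\in B)\ge1-(1-q)^{\zeta_n-1}\to1$. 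Combining $\sup_{x\in\mathscr X}|\hat\beta(x)-\beta_0(x)|\to_P0$ from \eqref{astrich}, $\sup_{x\in\mathscr X}|\hat\sigma_x^2-\sigma_x^2|\to_P0$ from Theorem \ref{uniform_var.th}, the boundedness of $\sigma_x^2=\kappa_2^2\alpha(x)/\varphi(x)$ on $\mathscr X$ under (D') and (S'), and the uniform consistency $\sup_{x\in\mathscr X}|\hat\alpha(x)-\alpha(x)|\to_P0$ of \eqref{unrestest}, I obtain $\inf_{x\in B}|\hat\alpha(x)-\hat\beta(x)|\ge\epsilon_0/2$ and $\sup_{x\in\mathscr X}\hat\sigma_x^2\le M$ with probability tending to one. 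Hence $S_n(\tilde X^{j_0})^2\ge nb^2(\epsilon_0/2)^2/M$ for the selected $j_0$, so $S_n(\hat X)^2\ge nb^2\epsilon_0^2/(4M)-\eta(nb^2)^\delta$, which exceeds $k_n=O((nb^2)^\delta)=o(nb^2)$ w.h.p., giving $P(S_n(\hat X)^2>k_n)\to1$.

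For (c) only the production of a separated grid point changes. Conditioning on the data, $\hat\beta\in\mathcal B$ almost surely, so applying the hypothesis $\inf_{\beta\in\mathcal B}P(|\alpha(\tilde X)-\beta(\tilde X)|>\epsilon)=p>0$ to the data-measurable $\hat\beta$ gives $P(|\alpha(\tilde X^j)-\hat\beta(\tilde X^j)|>\epsilon\mid\text{data})\ge p$ for every $j$; by independence of the grid and $\zeta_n\to\infty$, with probability tending to one some $\tilde X^{j_0}$, $j_0\ge2$, satisfies $|\alpha(\tilde X^{j_0})-\hat\beta(\tilde X^{j_0})|>\epsilon$. Uniform consistency of $\hat\alpha$ then forces $|\hat\alpha(\tilde X^{j_0})-\hat\beta(\tilde X^{j_0})|>\epsilon/2$ w.h.p., and the lower bound from (b) yields $S_n(\hat X)^2\ge nb^2(\epsilon/2)^2/M-\eta(nb^2)^\delta>k_n$ w.h.p.

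The hard part will be the uniform consistency $\sup_{x\in\mathscr X}|\hat\alpha(x)-\alpha(x)|\to_P0$ of the unrestricted estimator, which is indispensable in (b) and (c) yet is not among the stated prerequisites (only the uniform controls of $\hat\sigma_x^2$ and $\hat\beta$ are supplied); it is precisely what lets me bound $S_n$ at the random, data-independent grid points, where pointwise consistency fails to suffice across a grid of growing size $\zeta_n$. I would establish it from the decomposition $\hat\alpha-\alpha=(\mathscr V_x+\mathscr B_x)/\mathscr C_x$ in \eqref{alpha_decomp.eq}, bounding the stable part $\mathscr B_x/\mathscr C_x$ uniformly via (S) and (S') and the variable part $\mathscr V_x/\mathscr C_x$ uniformly through a maximal inequality for the kernel-weighted martingale transform under (K'') and (B'), mirroring the proof of Theorem \ref{uniform_var.th}. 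The remaining work is bookkeeping: that the grid is independent of the data (legitimising the conditioning in (a) and (c)), and that the penalty order $(nb^2)^\delta$ lies strictly between $\zeta_n$ and the signal $nb^2$, which is exactly what separates the null degeneracy $P(J=1)\to1$ from the divergence under the alternative.
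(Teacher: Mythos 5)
Your proposal is correct and follows the same skeleton as the paper's proof: part (a) via Lemma \ref{gozalo_problem.lem} together with the gap between $\zeta_n$ and $(nb^2)^\delta$; parts (b) and (c) via the penalty lower bound $S_n(\hat X)^2 \ge \max_j S_n(\tilde X^j)^2 - \eta (nb^2)^\delta$, a grid point falling w.h.p.\ into the set where $|\alpha-\beta_0|>\epsilon_0$ (resp., for (c), conditioning on the data so that the event $|\alpha(\tilde X^j)-\hat\beta(\tilde X^j)|>\epsilon$ has probability at least $p$ per grid point, giving $1-(1-p)^{\zeta_n}\to1$), and the uniform controls on $\hat\sigma_x$, $\hat\beta$ and $\hat\alpha$. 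You deviate in two sub-steps, both soundly. First, for the limit $S_n(\tilde X^1)^2\Rightarrow\chi_1^2$ the paper simply cites Theorem 2.3 of Gozalo (1993), whereas you derive it yourself from Corollary \ref{alpha_centered_conv.corollary}, the rate condition $(\sqrt n b)/a_n^\beta\to0$ for $\hat\beta$ under $H_0$, Slutsky, and dominated convergence over the grid density; this is a legitimate, self-contained substitute. Second, the ``hard part'' you single out --- $\sup_{x\in\mathscr X}|\hat\alpha(x)-\alpha(x)|\to_P 0$ --- is indeed indispensable, but the paper does not prove it either: it invokes Theorem 2 of Nielsen and Linton (1995) (noting that (S) is stronger than (S')), so the uniform consistency of the unrestricted estimator is available by citation under the stated assumptions. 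Your proposed route via the decomposition \eqref{alpha_decomp.eq} and a maximal-inequality argument mirroring the proof of Theorem \ref{uniform_var.th} is exactly the type of argument underlying that cited theorem, so nothing in your plan fails; it is simply work you could have avoided.
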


\begin{proof}
(a) For all $\mu>0$ holds 
\begin{align*}
	P\left( \max_{j=1,...,\zeta_n}S_n(\tilde X^j)^2  \ge \mu (nb^{d+1})^\delta \right) \le 	P\left(\frac1{\zeta_n} \sum_{j=1}^{\zeta_n}S_n(\tilde X^j)^2 \ge \mu\frac{(nb^{d+1})^\delta}{\zeta_n}   \right) \underset{n\rightarrow\infty}{\longrightarrow} 0.
\end{align*}
Firstly, one bounds the maximum by the sum and divides by $\zeta_n$. The convergence now follows with Lemma \ref{gozalo_problem.lem}, because the ratio $(nb^{d+1})^\delta/\zeta_n$ diverges to infinity. This is because in general, it holds  for $X_n=O_P(1)$, $a_n\rightarrow\infty$, that for each $\epsilon>0,  M>0$, there are $N_1>0$ and $N_2>0$ such that: For all $n\ge N_1$ holds $P(|X_n|>M) < \epsilon$; for all $n\ge N_2$ holds $a_n> M$. Finally, by this for all  $n\ge N_1 \vee N_2$ it holds that $P(|X_n|>a_n) < \epsilon$. Now, the first part of (a) results from 
\begin{align*}
	P(J\ne1) &= P(\exists j\in\{2,...,\zeta_n\}: S_n(\tilde X^j)^2 - S_n(\tilde X^1)^2 > \eta (nb^{d+1})^\delta) \\ &\le P(\exists j\in\{1,...,\zeta_n\}: S_n(\tilde X^j)^2 > \eta (nb^{d+1})^\delta) \\ &\le P\left( \max_{j=1,...,\zeta_n}S_n(\tilde X^j)^2  \ge \eta (nb^{d+1})^\delta \right)\underset{n\rightarrow\infty}{\longrightarrow} 0.
\end{align*}
The second part results, because $S_n(\tilde X^1)^2\Rightarrow \chi_1^2$ \cite[see][Theorem 2.3]{Gozalo.1993}). For (b), we note first that $\max_j |S_n(\tilde X^j)|$ 
\begin{equation} \label{S_n_ge.eq}
	\begin{split}
	\stackrel{(i)}{=} \max_j \frac{(nb^{d+1})^{1/2}}{\hat\sigma_{\tilde X^j}} \left| \hat\alpha(\tilde X^j) - \alpha(\tilde X^j) +\alpha(\tilde X^j) - \beta_0 (\tilde X^j) +\beta_0(\tilde X^j) - \hat\beta(\tilde X^j) \right| \\
	\stackrel{(ii)}{\ge}  \frac{(nb^{d+1})^{1/2}}{\max_j \hat\sigma_{\tilde X^j}} \left(\max_j \left|\alpha(\tilde X^j) - \beta_0 (\tilde X^j) \right| - \max_j \left|\hat\alpha(\tilde X^j) - \alpha(\tilde X^j) \right| \right. \\
	\left.  - \max_j \left| \beta_0(\tilde X^j) - \hat\beta(\tilde X^j) \right| \right)
\end{split}
\end{equation}
For (i), we include two valuable zeros in the definition of $S_n$. For (ii), we use the fact that for the norm $\| \cdot \|$ the inequality $\|x+y\|\ge \|x\| - \|y\|$ holds. In the sequel,  
\begin{align*}
	P(S_n(\hat X)^2 > k_n) &\stackrel{(i)}{\ge} P\left( \max_{j=1,...,\zeta_n} |S_n(\tilde X^j)| > \left( \eta ( n b^{d+1})^\delta + k_n \right)^{1/2} \right) \\
	&\stackrel{(ii)}{\ge} P\Bigg( \max_j |\alpha(\tilde X^j) - \beta_0(\tilde X^j)| > \max_j \hat \sigma _{\tilde X^j} \left( \frac{ \eta (n b^{d+1})^\delta + k_n  }{nb^{d+1}} \right)^{1/2} \\&~~~~~~~~+ \max_j \left|\hat\alpha(\tilde X^j) - \alpha(\tilde X^j) \right|+ \max_j \left| \beta_0(\tilde X^j) - \hat\beta(\tilde X^j) \right|\Bigg) \stackrel{(iii)}{\longrightarrow} 1.
\end{align*}
For (i), due to the definition of $\hat X$, $S_n(\hat X)^2 \ge \max_j S_n(\tilde X^j)^2 - \eta (n b^{d+1})^\delta$ holds. We insert this, move one term on the other side of the inequality and apply the square root. For (ii), we insert \eqref{S_n_ge.eq}. For (iii), all terms on the right hand side of the inequality in the probability converge in probability towards zero: The first because $\hat\sigma_x$ converges uniformly in $x$ by Theorem \ref{uniform_var.th} and $k_n=O((nb^{d+1})^\delta)$ (note Assumption (B'), which implies (B)); the second according to Theorem 2 in \cite{Linton.1995} (note that Assumption (S) is stronger than (S')), and the third according to \eqref{astrich}. The left side is now, due to \eqref{gozalo_2_ann_b.eq}, asymptotically bounded away from zero. For (c), due to an argument analogous to one in the proof of (b), it holds that $P(S_n(\hat X)^2 > k_n)$ is larger or equal to
\begin{equation} \label{S_n_ge_2.eq}
 P\Bigg( \max_j |\alpha(\tilde X^j) - \hat\beta(\tilde X^j)| > \max_j \hat \sigma _{\tilde X^j} \left( \frac{ \eta (n b^{d+1})^\delta + k_n  }{nb^{d+1}} \right)^{1/2} + \max_j \left|\hat\alpha(\tilde X^j) - \alpha(\tilde X^j) \right|\Bigg).
\end{equation}
For equally analogous reasons, it follows that the right hand side of the inequality, within the probability, converges to zero stochastically. Furthermore, we have for $\epsilon$ from the assumptions of the Theorem that  
\begin{align*}
	P\left( \max_j |\alpha(\tilde X^j) - \hat \beta (\tilde X^j)| > \epsilon \right) &= \int P\left( \left. \max_j |\alpha(\tilde X^j) - \hat \beta (\tilde X^j)| > \epsilon \right| \hat\beta = \beta \right) dP_{\hat\beta}(\beta) \\
	&=\int P\left( \max_j |\alpha(\tilde X^j) - \beta (\tilde X^j)| > \epsilon \right)dP_{\hat\beta}(\beta),
\end{align*}
where the second identity used the independence of $\tilde X^j$ from the data. Additionally holds the following
\begin{multline*}
	P\left( \max_j |\alpha(\tilde X^j) - \beta (\tilde X^j)| > \epsilon \right) = 1- P\left( |\alpha(\tilde X) - \beta (\tilde X)| > \epsilon \right)^{\zeta_n} \\ \ge 1- \sup_{\beta\in\mathcal B} P\left( |\alpha(\tilde X) - \beta (\tilde X)| > \epsilon \right)^{\zeta_n}  = 1 - (1-p)^{\zeta_n} \rightarrow 1,
\end{multline*}
with $p>0$ by the assumptions of the Theorem. Altogether, the left-hand side, within the probability in \eqref{S_n_ge_2.eq}, is asymptotically bounded away from zero and hence, \eqref{S_n_ge_2.eq} converges towards one.
\end{proof}


\section{Data example: Mortality with dementia} \label{app}

Germany's largest health insurance company `Allgemeine Ortskrankenkasse (AOK)' supplied us with a simple sample of a quarter million people, born before 1954 and observed between 01/01/2004 and 31/12/2013 \cite[see also][]{Weissbach.2021}. We count age $t$ from year 50 onwards in years; hence the youngest person is 50 years old at the beginning of the period. The oldest is 113, because we restrict the sample to people born from 1900 onwards. We concentrate here on the chronic disease dementia in the semi-Markovian progressive healthy-ill-dead model  \cite[see e.g.][Chapt. 3.3]{Andersen.2002}. 
The data contain date of birth, date of dementia onset and/or date of death, both or either if having occurred during the observation period.  Roughly 14\% of the observed persons had a dementia onset, i.e. 32,000 insured. Around 15\% of the people get lost to follow-up during the period and are hence right-censored.  Also right-censoring are those 52\% who are alive in 2015. Fifty-six percent of the data are women, because they live longer than men and are hence also relatively more affected by dementia. Nonetheless their onset of dementia is later than for men. We restrict our analysis to the sample of $n=130,168$ observed women.

We are interested in the transition intensity $\alpha(t, d)$ from `dementia' ($r$) to `dead' ($s$),  with $t$ as age and $d$ as time-since-dementia-onset, both in years (see \eqref{alpha_def.leam}). 
In terms of Section \ref{sect21}, $Z_1$ is the minimum of the age at dementia onset or the age at censoring. And $Z_2$ is the minimum age at death or censoring. 
Of the $130,168$ observations, only those $20,721$ women with dementia onset in the observation period enter the unrestricted estimator \eqref{unrestest} (see Figure \ref{surface.plot}, left), all others have $dN_i \equiv0$ and $Y_i\equiv0$. We use a triangular kernel $k(u) := \I_{[-1, 1]}(u) (1 - |u|)$. 
The bandwidth selection, including aspects of censoring is not considered here \cite[see e.g.][]{a-general-:2006,a-rule-of-:2006}. We use a fixed bandwidth, for the $t$-axis it is $b=2$ years. Because $d$ is on a shorter domain, we deviate slightly from the universal bandwidth in Section \ref{sec31}, and use in $d$-direction $b^d=1.33$ years.

\begin{figure}[htb] 
	\centering
	\includegraphics[height=0.25\textheight]{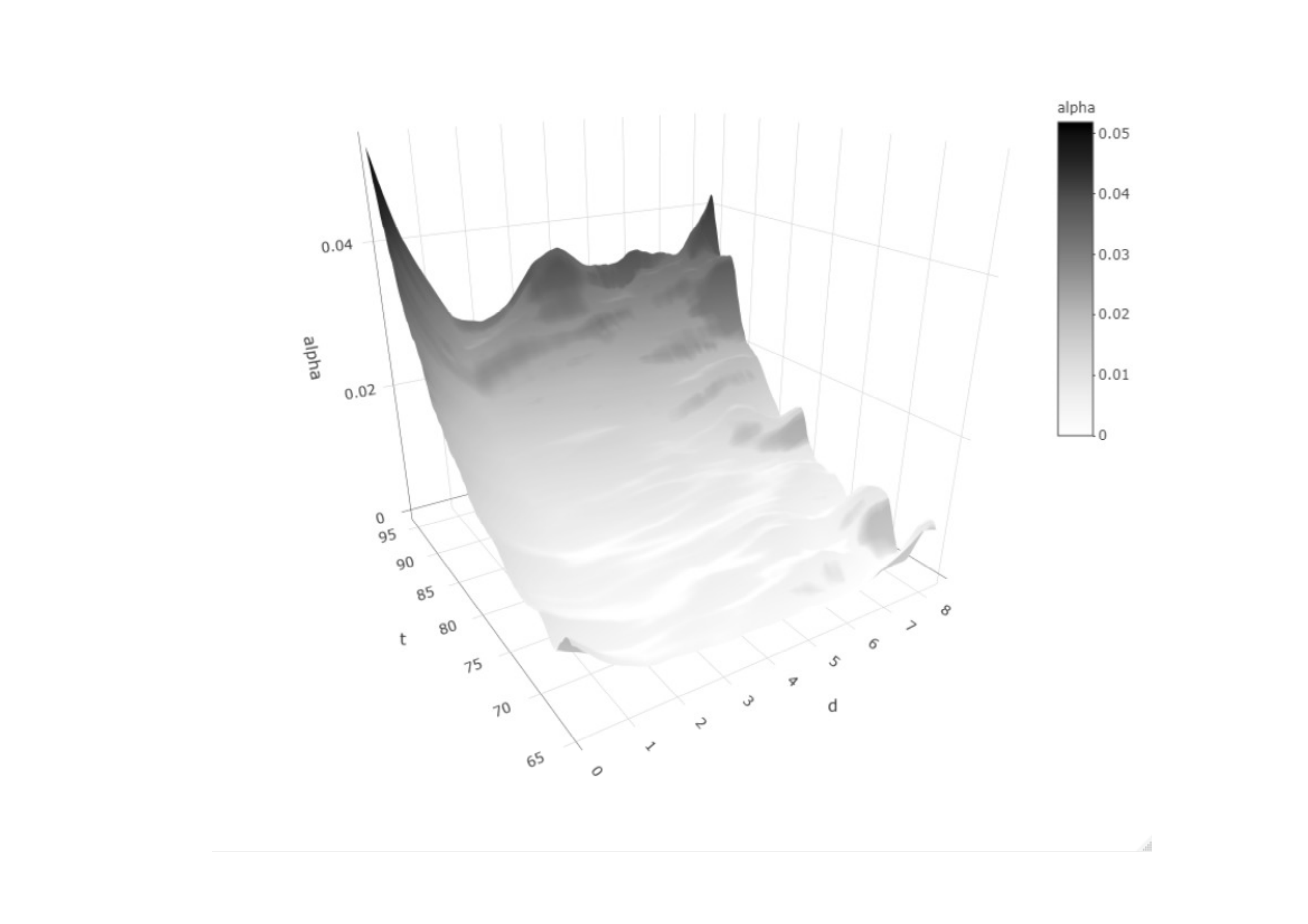}
	\includegraphics[height=0.2\textheight]{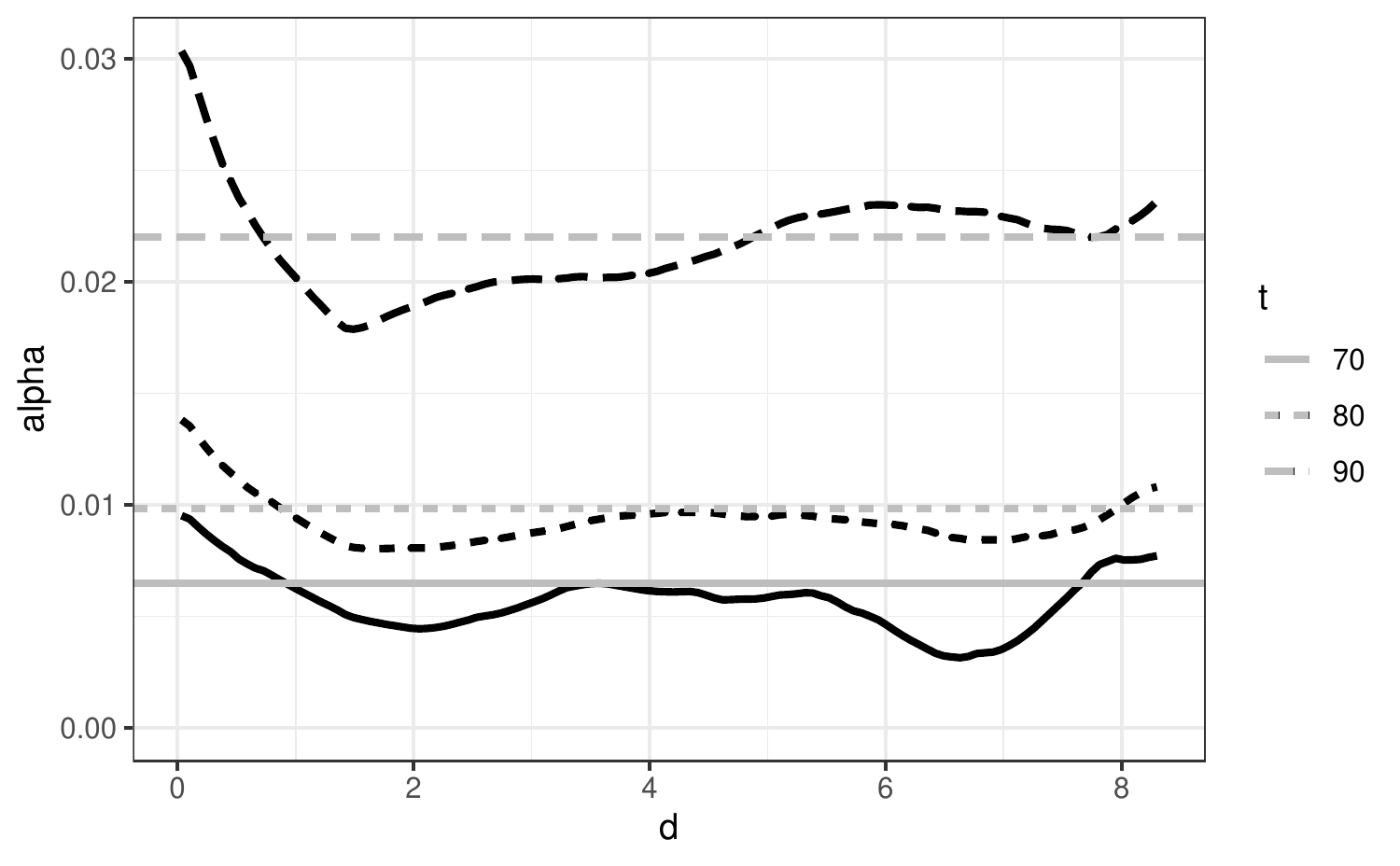}
	\caption[]{Left: Female intensity of transition from dementia to death  $\hat\alpha(t,d)$ (see \eqref{unrestest}), Right: Cross-sections of left panel at ages $t=70,80,90$ (including Markovian estimates \eqref{hjort1994}, grey lines)}
	\label{surface.plot}
\end{figure}

For fixed $d$, we see exponential growth in $t$-direction. In $d$-direction, the estimate is large at the beginning, decreases rapidly and grows later on. However, for large $d$, the trend per $t$ looks different, even though it may also be a small-sample impression, because observations thin out considerably in that area. The pattern is also evident for three cross-sections in Figure \ref{surface.plot} (right panel) for fixing $t=70, 80$ and $90$ years. The cross-sections are also compared to the estimator \eqref{hjort1994} of $\alpha(t,d)$, under the Markov hypothesis. i.e. constant in $d$-direction. We now apply the test in Theorem \ref{Gozalo_2.th}(a), by rejecting for $S_n(\hat X)^2$ being too large. Note that for the variance $\sigma_x^2$ and its estimator $\hat{\sigma}_x^2$ from Theorem \ref{satz3_1}(a)+(c), all women now enter the analysis, but due to multiplication by $\sqrt{n}$, again only the women with dementia become relevant. For a closer look at such truncation aspects, also of women deceased before 2004, see \cite{weissbachm2021effect}.

\begin{figure}[htb] 
	\centering
	\includegraphics[height=0.4\textheight]{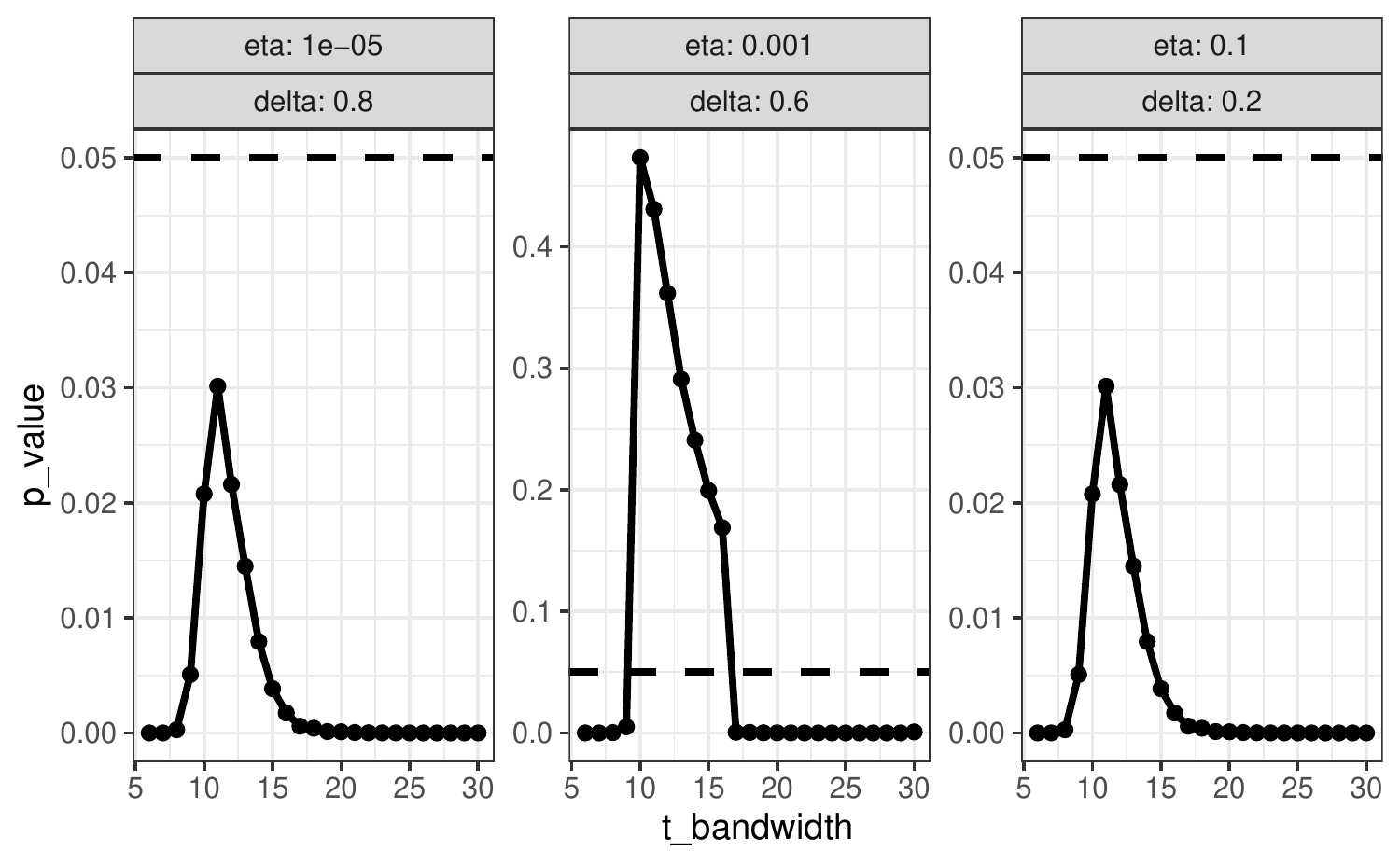}
	\caption[]{P-values dependent of bandwidths in $t$-direction  for tests at 5\% level (indicated as line).(Bandwidth in $d$-direction is 65\% thereof.)}
	\label{random_test.plot}
\end{figure}

We select $\zeta=15$ grid points. The grid is determined by drawing from a uniform distribution on $[70, 90]\times[2,4]$. In this interval, sufficiently many observations allow an estimation. Together with the bandwidth, hyper-parameters  $\eta$ and $\delta$ need to be chosen. Figure \ref{random_test.plot} exhibits the resulting p-values as a function of the bandwidths in $t$-direction for three choices. The same grid is used for all tests, and for newly drawn grids for each parameter combination, the plots are less smooth. The p-values are uniformly below the 5\%-level for many choices of $\eta$ and $\delta$, but can occasionally exceed the level for some, even up to $0.4$. All in all, still some doubt remains about the significance of the death intensity of those suffering from dementia on the disease duration.

\section{Discussion}

One can extend from one covariate, here representing the state duration, to several. We can test more hypotheses like that of a Cox regression. Also, one could consider testing the hypothesis of Markovianity by simultaneously testing duration dependence for all possible state combinations. For such a test, the asymptotic correlation between estimators for different combinations needs to be investigated. In the case of independence, the test statistics will then be distributed with the number of combinations as degrees of freedom.



Strictly speaking, we do not allow for right-censoring in the case of a semi-Markov process. However, one of its states can be defined as a censoring state, so that - with slightly stronger assumptions - censoring can be accounted for.

The data are independent in the cross-section, but account for a longitudinal dependency that is stronger than  Markovian. Hence, asymptotic analysis could use the aforementioned independence. However, it is interesting to note that to a certain degree, the results rely on Theorem 2.3 from \cite{Gozalo.1993}, which itself relies on a martingale limit theorem of \cite{Bierens.1984}. This author analyses a time series model, i.e. without being able to exploit cross-sectional independence and relies on the same result from the 1960s from Robert Jennrich. So do \cite{weisradl2019} who again use the cross-sectional independence in a panel model.

\textbf{Acknowledgment}:
The financial support from the Deutsche Forschungsgemeinschaft (DFG)
for R. Wei\ss bach and G. Doblhammer is gratefully acknowledged (Project 386913674  `Multi-state, multi-time, multi-level analysis of health-related demographic events: Statistical aspects and applications'). We thank M. Trede and P.K. Andersen for their valuable input in discussions. For the generous support with data, we thank the AOK Research Institute (WIdO). The linguistic and idiomatic advice of Brian Bloch is also gratefully acknowledged. Some results of this article are contained in the dissertation of L. Radloff at the Department of Economics, Universit\"at Rostock.


\begin{appendix}

\section{Proofs of intermediate results} \label{furthres}

\subsection{Minor Lemmas and formula}

\begin{Lem} \label{Bedingungstausch.lem}
	Let $X,Y$ be random variables on the probability space $(\Omega, \mathfrak{A},P)$ and $Z\in L_1(\Omega, \mathfrak{A},P)$. Let further $A\in\sigma(X)\cap \sigma(Y)\subset \mathfrak{A}$ with $P(A)>0$.\\If $\sigma(X|_A)=\sigma(Y|_A)~\big(\subset \mathfrak{A}|_A\big)$, then also $ \IE[Z|X]\big|_A=\IE[Z|Y]\big|_A, \hspace{0.5cm} P_A\text{-a.s.}.$
\end{Lem}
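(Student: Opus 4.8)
The plan is to pass from the original space to the traced probability space on $A$ and to recognise each restricted conditional expectation as a genuine conditional expectation there. Write $\mathfrak{A}|_A := \{C\cap A : C\in\mathfrak{A}\}$ for the trace $\sigma$-algebra and let $P_A$ be the normalised restriction of $P$ to $A$, i.e. $P_A(B) := P(B)/P(A)$ for $B\in\mathfrak{A}|_A$; this is a probability measure because $P(A)>0$, and $Z|_A\in L_1(A,\mathfrak{A}|_A,P_A)$ since $\int_A |Z|\,dP_A = P(A)^{-1}\int_A|Z|\,dP<\infty$. The goal is to show
$$\IE[Z|X]\big|_A = \IE_{P_A}\!\big[Z|_A \,\big|\, \sigma(X|_A)\big]\quad P_A\text{-a.s.},$$
and symmetrically for $Y$; since $\sigma(X|_A)=\sigma(Y|_A)$ by hypothesis, the two right-hand sides are the conditional expectation of one and the same integrand with respect to one and the same $\sigma$-algebra, hence agree $P_A$-a.s., which is exactly the claim.

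To establish the displayed identity I would check the two defining properties of conditional expectation on $(A,\mathfrak{A}|_A,P_A)$. For measurability, note first the elementary trace identity $\sigma(X|_A)=\{C\cap A : C\in\sigma(X)\}$; since $\IE[Z|X]$ is $\sigma(X)$-measurable, its restriction to $A$ is $\sigma(X|_A)$-measurable. For the averaging property, fix $B\in\sigma(X|_A)$ and write $B=C\cap A$ with $C\in\sigma(X)$. Here is the one place the hypothesis $A\in\sigma(X)$ enters: it guarantees $C\cap A\in\sigma(X)$, so the defining property of $\IE[Z|X]$ on the \emph{original} space applies to the set $C\cap A$, giving
$$\int_B \IE[Z|X]\,dP_A = \frac{1}{P(A)}\int_{C\cap A}\IE[Z|X]\,dP = \frac{1}{P(A)}\int_{C\cap A} Z\,dP = \int_B Z|_A\,dP_A.$$
By uniqueness of conditional expectations (up to $P_A$-null sets) this proves $\IE[Z|X]\big|_A = \IE_{P_A}[Z|_A \mid \sigma(X|_A)]$ $P_A$-a.s.

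Running the identical argument with $Y$ in place of $X$ — now invoking $A\in\sigma(Y)$ — gives $\IE[Z|Y]\big|_A = \IE_{P_A}[Z|_A \mid \sigma(Y|_A)]$ $P_A$-a.s. Because the conditioning $\sigma$-algebras coincide, the right-hand sides are literally the same conditional expectation, so $\IE[Z|X]\big|_A = \IE[Z|Y]\big|_A$ $P_A$-a.s. The only delicate point is the bookkeeping in the second paragraph: one must confirm that restricting a conditional expectation computed on $(\Omega,\mathfrak{A},P)$ really does produce a version of the conditional expectation computed on the traced space, and this hinges precisely on $A$ lying in both $\sigma(X)$ and $\sigma(Y)$, so that intersecting a test set with $A$ stays inside the relevant $\sigma$-algebra. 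Everything else is the routine verification of the two conditional-expectation axioms.
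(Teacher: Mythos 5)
Your proof is correct and takes essentially the same route as the paper's: both hinge on characterising $\sigma(X|_A)$ via sets of $\sigma(X)$ intersected with $A$ (which is exactly where the hypothesis $A\in\sigma(X)\cap\sigma(Y)$ is needed), then apply the defining integral property of $\IE[Z|X]$ and $\IE[Z|Y]$ to such sets, and conclude by an a.e.-uniqueness argument. The only difference is packaging: the paper compares $\int_B \IE[Z|X]\big|_A\,dP_A$ and $\int_B \IE[Z|Y]\big|_A\,dP_A$ directly for every $B\in\sigma(X|_A)$ and cites the theorem that functions with equal integrals over all measurable sets agree a.e., whereas you identify each restriction as a version of $\IE_{P_A}\big[Z|_A\,\big|\,\sigma(X|_A)\big]$ on the traced probability space and invoke uniqueness of conditional expectations --- the same computation in slightly different clothing.
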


\begin{proof}

We start by proving the following \\
\textit{Assertion (*):} From $X:(\Omega, \mathfrak{A})\rightarrow (\Xi, \mathfrak{B})$ measurable and $A\in\sigma(X)$ follows
$ \sigma(X|_A) = \sigma(X)\cap \mathfrak{P}(A). $
This means that, for $A\in\sigma(X)$, the $\sigma$-algebra generated by $X|_A$ is composed of those sets of $\sigma(X)$ which are subsets of $A$.\\
\textit{Proof of (*):} ``$\subset$'': By $C\in\sigma(X|_A)$ there is a set $B\in\mathfrak{B}$ such that $C = \{\omega\in A| X(\omega) \in B\} = \{\omega \in \Omega| X(\omega)\in B\} \cap A.$
Since both sets of that intersection are in $\sigma(X)$, the same is true for $C$.\\
``$\supset$'': By $C \in \sigma(X)\cap \mathfrak{P}(A)$ there is a set $B\in\mathfrak{B}$ such that $C=\{\omega \in \Omega| X(\omega)\in B\}$ and $C\subset A$. This already implies $C = \{\omega \in A| X(\omega)\in B\} \in \sigma(X|_A),$ which proves Assertion (*).\\
We will now utilize that two functions with identical $\mu$-integrals over all measurable sets are $\mu$-a.e. identical \cite[see e.g.][chapter IV, Theorem 4.4]{Elstrodt.2009}. Hence, it is sufficient to prove that for all $B\in\sigma(X|_A)$ we have $\int_B \IE[Z|X]\big|_A dP_A = \int_B \IE[Z|Y]\big|_A dP_A .$
In that case the Lemma's statement would be shown $P_A|_{\sigma(X|_A)}$-a.s. and thus $P_A$-a.s.. We obtain
\begin{align*}
	\int_B \IE[Z|X]\big|_A dP_A \stackrel{(i)}{=} \int_B \IE[Z|X] dP \stackrel{(ii)}{=} \int_B Z dP \stackrel{(ii)}{=} \int_B \IE[Z|Y] dP \stackrel{(i)}{=}\int_B \IE[Z|Y]\big|_A dP_A. 
\end{align*}
For the first and last indentity (i) $A\in\mathfrak{A}$, $B\in\mathfrak{A}$ and $B\subset A$ guarantee that both sides are well defined. Equality follows since $P_A=P$ on its domain. The other identities (ii) are true according to the definition of conditional expectations: Assertion (*) provides both $B\in\sigma(X)$ and $B\in\sigma(Y)$ due to $B\in\sigma(X|_A)$ and $A\in\sigma(X)$.
\end{proof}

A short calculation yields:
\begin{Lem} \label{y_messbar.lem}
	Let $\mathcal{S}$ and $\mathcal{T}$ be index sets. Further let $(\Omega, \mathfrak{A})$, $(\mathcal{X}_s, \mathfrak{B}_s), s\in\mathcal{S},$ and $(\mathcal{Y}_t, \mathfrak{C}_t), t\in\mathcal{T},$ be measurable spaces. Let $X_s:\Omega\rightarrow\mathcal{X}_s, s\in\mathcal{S},$ and $Y_t:\Omega\rightarrow\mathcal{Y}_t, t\in\mathcal{T},$ be measurable mappings. 
	Assume that for all $s\in\mathcal{S}$ there is a subset $\mathcal{T}_s \subset \mathcal{T}$ and a $(\bigotimes_{t\in\mathcal{T}_s}\mathfrak{C}_t)$-$\mathfrak{B}_s$-measurable mapping $g:\bigtimes_{t\in\mathcal{T}_s} \mathcal{Y}_t\rightarrow \mathcal{X}_s$ with $X_s=g( \{Y_t\}_{t\in\mathcal{T}_s} ). $
	Then we have $\sigma(X_s, s\in\mathcal{S}) \subset \sigma(Y_t, t\in\mathcal{T}).$ 
\end{Lem}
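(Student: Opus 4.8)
The plan is to reduce the asserted $\sigma$-algebra inclusion to a statement about the measurability of each single $X_s$. By definition, $\sigma(X_s, s\in\mathcal{S})$ is the smallest $\sigma$-algebra on $\Omega$ with respect to which every $X_s$ is measurable. Hence, to obtain $\sigma(X_s, s\in\mathcal{S}) \subset \sigma(Y_t, t\in\mathcal{T})$, it is enough to show that $\sigma(Y_t, t\in\mathcal{T})$ is itself a $\sigma$-algebra rendering every $X_s$ measurable; the inclusion then follows purely from this minimality. So the whole task collapses to verifying, for each fixed $s$, that $X_s$ is $\sigma(Y_t,t\in\mathcal{T})$-$\mathfrak{B}_s$-measurable.

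Fixing $s\in\mathcal{S}$, I would introduce the tuple map $Y^{(s)}:\Omega\to\bigtimes_{t\in\mathcal{T}_s}\mathcal{Y}_t$ given by $\omega\mapsto(Y_t(\omega))_{t\in\mathcal{T}_s}$, so that the hypothesis reads $X_s=g\circ Y^{(s)}$. The argument then factors into proving that each of the two maps is measurable for the appropriate pair of $\sigma$-algebras and composing. The factor $g$ is $\big(\bigotimes_{t\in\mathcal{T}_s}\mathfrak{C}_t\big)$-$\mathfrak{B}_s$-measurable by assumption, so the only point requiring work is the measurability of $Y^{(s)}$.

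The key step is to verify that $Y^{(s)}$ is $\sigma(Y_t,t\in\mathcal{T})$-$\big(\bigotimes_{t\in\mathcal{T}_s}\mathfrak{C}_t\big)$-measurable. Here I would invoke that the product $\sigma$-algebra $\bigotimes_{t\in\mathcal{T}_s}\mathfrak{C}_t$ is generated by the coordinate cylinders $\pi_{t_0}^{-1}(B)$ for $t_0\in\mathcal{T}_s$ and $B\in\mathfrak{C}_{t_0}$, where $\pi_{t_0}$ is the projection onto the $t_0$-th factor. Since measurability of a map into a space equipped with a generated $\sigma$-algebra needs to be checked only on a generating family, it suffices to note that $\big(Y^{(s)}\big)^{-1}\big(\pi_{t_0}^{-1}(B)\big)=Y_{t_0}^{-1}(B)$, which lies in $\sigma(Y_t,t\in\mathcal{T})$ by the very definition of that $\sigma$-algebra. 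This establishes the measurability of $Y^{(s)}$.

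Composing, $X_s=g\circ Y^{(s)}$ is then $\sigma(Y_t,t\in\mathcal{T})$-$\mathfrak{B}_s$-measurable, and since $s\in\mathcal{S}$ was arbitrary the reduction in the first paragraph delivers the claim. The only genuinely non-formal ingredient is the product-measurability of the tuple map $Y^{(s)}$, i.e.\ the principle that measurability into a product space is equivalent to coordinate-wise measurability; everything else is bookkeeping with the universal property of generated $\sigma$-algebras, which matches the paper's description of this as a ``short calculation''.
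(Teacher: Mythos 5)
Your proof is correct and complete: the reduction via minimality of the generated $\sigma$-algebra, the factorisation $X_s = g\circ Y^{(s)}$, and checking measurability of the tuple map $Y^{(s)}$ on the coordinate cylinders generating $\bigotimes_{t\in\mathcal{T}_s}\mathfrak{C}_t$ constitute exactly the ``short calculation'' the paper alludes to, as the paper states this lemma without giving any proof at all. There is nothing to add or correct.
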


\begin{Lem} \label{uniform_int.lem}
	Let $A\subset\IR^2$ and $Q\subset\IR^2$ be compact and $g:\IR^2\rightarrow \IR$ continuous. Let $b_n$ be a sequence approaching zero. Then $\sup_{x\in A} \left|\int_Q \left\{ g(x) - g(x+b_nq) \right\}dq \right| \rightarrow 0$.
\end{Lem}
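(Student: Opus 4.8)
The plan is to reduce everything to the uniform continuity of $g$ on a suitable compact set, exploiting the fact that a continuous function on a compact domain is automatically uniformly continuous. First I would pin down a single compact set $K$ that contains all points arising as arguments of $g$. Since $b_n\to 0$, we have $|b_n|\le 1$ for all $n$ beyond some index, so the set $\{bq:|b|\le 1,\, q\in Q\}$ --- the continuous image of the compact set $[-1,1]\times Q$ under $(b,q)\mapsto bq$ --- is compact. Consequently $K:=A+\{bq:|b|\le 1,\, q\in Q\}$, a Minkowski sum of compact sets, is compact and contains both $x$ and $x+b_nq$ for every $x\in A$, every $q\in Q$, and every sufficiently large $n$.

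Next I would invoke uniform continuity of $g$ on $K$: given $\epsilon>0$ there is $\delta>0$ with $|g(u)-g(v)|<\epsilon$ whenever $u,v\in K$ satisfy $|u-v|<\delta$. Writing $R:=\sup_{q\in Q}|q|<\infty$ (finite because $Q$ is compact, hence bounded), the separation of the two arguments is controlled uniformly by $|x-(x+b_nq)|=|b_n|\,|q|\le |b_n|R$. Hence for $n$ large enough that $|b_n|R<\delta$, we obtain $|g(x)-g(x+b_nq)|<\epsilon$ simultaneously for all $x\in A$ and all $q\in Q$.

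Finally I would pull this estimate through the integral: for such $n$,
\[
	\left|\int_Q \{g(x)-g(x+b_nq)\}\,dq\right|\le \int_Q |g(x)-g(x+b_nq)|\,dq \le \epsilon\,\lambda(Q),
\]
where $\lambda(Q)$ denotes the (finite) Lebesgue measure of the bounded set $Q$. Since this bound does not depend on $x$, taking the supremum over $x\in A$ yields $\sup_{x\in A}|\cdots|\le \epsilon\,\lambda(Q)$ for all large $n$; as $\epsilon>0$ is arbitrary and $\lambda(Q)$ is fixed, the supremum tends to $0$.

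The only step that demands any care --- and the one I would flag as the main obstacle --- is the first: ensuring that the arguments of $g$ all lie in one \emph{fixed} compact set, so that continuity can be strengthened to uniform continuity with a single modulus $\delta$ valid for every $x\in A$ and every large $n$ at once. Everything after that is a routine $\epsilon$-estimate, which is presumably why the author introduces the result with ``a short calculation yields''.
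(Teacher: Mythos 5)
Your proposal is correct and follows essentially the same route as the paper's proof: construct a fixed compact set containing all arguments $x+b_nq$ (the paper uses $C=\{x+bq\mid x\in A,\,q\in Q,\,b\in[0,1]\}$, you use a Minkowski sum), invoke uniform continuity of $g$ there, and push the resulting $\epsilon$-bound through the integral using $\lambda_2(Q)<\infty$. The only cosmetic difference is that your set $\{bq:|b|\le 1,\,q\in Q\}$ also covers negative $b_n$, a slightly more careful reading of the hypothesis, but the argument is otherwise identical.
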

\begin{proof} With $C=\{x+bq|x\in A, q\in Q, b \in[0,1]\}$ $g$ is uniformly continuous on $C$, since this set is compact. That means, for each $\epsilon>0$ there is $\delta>0$, such that for all $x,y\in C$ we have $|x-y|<\delta ~\Rightarrow~|g(x)-g(y)|<\epsilon$, where $|\cdot|$ denotes euclidean norm on $\IR^2$, too.  Let $\epsilon>0$ and $\delta>0$ according to this definition. Because $Q$ is compact, there is $N$, such that for all $n\ge N$, $b_n \le 1$ and $|b_n q| < \delta~ \forall q\in Q$. For all $n\ge N$ then
	\begin{align*}
		\sup_{x\in A} \left|\int_Q g(x) - g(x+b_nq) dq \right| \le \sup_{x\in A} \int_Q \epsilon dq = \epsilon  \lambda_2(Q),
	\end{align*}
	where $\lambda_2$ is the $2$-dimensional Lebesgue-measure. Since $\lambda_2(Q)<\infty$, the proof is complete. 
\end{proof}

For the proof of formula \eqref{alpha_def.leam}, we first note that
\begin{align} \label{set_in_salg.eq}
\{\mu(t) = m\}&\in\sigma(X(u),u\le t) ~\text{and} \nonumber\\
\{\mu(t) = m\}&\in\sigma(S_0, Z_0, ...,S_m, Z_m, \I_{\{Z_{m+1}>t\}}).
\end{align} 
Further, we recognize
\begin{align} \label{sigma_alg_id.eq}
\sigma((X(u), u\le t)|_{\{\mu(t) = m\}}) = \sigma((S_0, Z_0, ..., S_m, Z_m, \I_{\{Z_{m+1}>t\}})|_{\{\mu(t) = m\}}),
\end{align} 
because for all $\omega\in\{\mu(t) = m\}$ we have $X(u,\omega) = \sum_{j=0}^m\I_{[Z_m(\omega), Z_{m+1}(\omega))} S_j(\omega)$ on the one hand, and on the other
$Z_0(\omega) = 0$, $S_0(\omega) = X(0,\omega)$,
\[
Z_{j+1}(\omega) = \inf\left\{ \left. q \I_{\{q>Z_j(\omega), X(q, \omega) \ne S_j(\omega)\}} + \infty (\I_{\{q\le Z_j(\omega)\}} + \I_{\{X(q, \omega) = S_j(\omega)\}}) \right| q\in\IQ \right\} 
\]
for $j=0,...,m-1$, $S_{j+1}(\omega) = X(Z_{m+1}(\omega))$ (for $j=0,...,m-1$) and 
$\I_{\{Z_{m+1}>t\}}(\omega)\equiv 1$. Lemma \ref{y_messbar.lem} then yields (\ref{sigma_alg_id.eq}). We utilize this for a transformation:
\begin{align}
P&(X(t+h) = s| X(t) = r , \tilde D(t) = d, X(u), u\le t)|_{\{\mu(t)=m\}}\nonumber\\
&\stackrel{(i)}{=}P(X(t+h) = s| S_0, Z_0, ...,S_{m-1}, Z_{m-1}, S_m = r , Z_m = t-d, T_{m+1}>d)|_{\{\mu(t)=m\}}\nonumber\\	
&\stackrel{(ii)}{=}  \sum_{k=1}^\infty P( \sum_{j=1}^kT_{m+j} \le d+h, \sum_{j=1}^{k+1}T_{m+j} > d+h, S_{m+k} = s |  S_m = r, Z_m = t-d, T_{m+1}>d ) \label{sm_intensity.eq}
\end{align}
(\ref{sigma_alg_id.eq}) and (\ref{set_in_salg.eq}) allow application of Lemma \ref{Bedingungstausch.lem} to obtain (i). For (ii) we form a countable partition of the events $\{X(t+h) = s, \mu(t) = m, S_m = r\}$ and exploit $\sigma$-additivity. For these smaller events the definition of a semi-Markov process makes clear that $S_0, Z_0, ..., S_{m-1}, Z_{m-1}$ can be removed from the condition. Since the new term is independent of $\omega$, we can omit restriction of the domain at this point. \\
We start by examining the first summand of this series. The condition $T_{m+1}>d$ can be handled according to rules for calculation with elementary conditional probabilities. Thus, we get a fraction with denominator
\begin{align} 
P&(T_{m+1}>d|S_m=r, Z_m = t-d) = S_{t-d}^r(d) \label{denom_first.eq}
\intertext{and numerator}
P&(d<T_{m+1}\le d+h < T_{m+1}+T_{m+2}, S_{m+1} = s | S_m = r, Z_m = t-d) \nonumber\\
&=\int_d^{d+h} \int_{d+h-t_1}^\infty f_{t-d}^{rs}(t_1) f_{t-d+t_1}^s(t_2) dt_2 dt_1 \nonumber\\
&=\int_d^{d+h} \int_{0}^\infty f_{t-d}^{rs}(t_1) f_{t-d+t_1}^s(t_2) dt_2 dt_1 - \int_d^{d+h} \int_{0}^{d+h-t_1} f_{t-d}^{rs}(t_1) f_{t-d+t_1}^s(t_2) dt_2 dt_1 \nonumber\\
&=\int_d^{d+h} f_{t-d}^{rs}(t_1) dt_1 + O(h^2). \label{num_first.eq}
\end{align}
The second density in the first summand of the third line integrates to one. The second summand's integrand is bounded due to (L). These explain the last identity. The Landau-$O$-notation here, as in the following, refers to the limit $h\rightarrow0$. \\
The $k$th summand can be bounded by
\begin{align*}
P(\mu(t+h) - \mu(t) \ge k | S_m = r, Z_m = t-d, T_{m+1} > d ) \le C^k h^k,
\end{align*} 
according to Lemma \ref{dMu.lem}. This establishes that the sum in (\ref{sm_intensity.eq}), excluding the first summand, is $O(h^2)$ because $\sum_{k=2}^\infty C^kh^k = (C^2 h^2)/(1-Ch) = O(h^2)$ for $h$ sufficiently small.
That, combined with (\ref{denom_first.eq}) and (\ref{num_first.eq}), plugged into (\ref{sm_intensity.eq}), gives
\begin{align*}
\frac 1h 	P&(X(t+h) = s| X(t) = r , \tilde D(t) = d, X(u), u\le t)|_{\{\mu(t)=m\}} \\
&= \frac 1h \left\{\frac{\int_d^{d+h} f_{t-d}^{rs}(t_1) dt_1 + O(h^2)}{S_{t-d}^r(d)} + O(h^2) \right\}
 \underset{h\rightarrow0}\longrightarrow \frac{f_{t-d}^{rs}(d)}{S_{t-d}^r(d)}.
\end{align*}
Since this is true for all $m\in\IN$, and $\{\mu(t)=m\}$ form a partition of $\Omega$, formula \eqref{alpha_def.leam} is proved.

\subsection{Proof of Theorem \ref{fdd_conv.th}} 

Lemma \ref{mvmlt.lem} shall be applied with
$H_{i,j}^{(n)}(t) = n^{-1/2}b^{(d+1)/2}K_b(x_j - X_i(t))$, $j=1,...,\zeta, i=1,...,n$.
Furthermore, we notice $d\langle M_i\rangle (s)= \alpha(s, Z_i(s))Y_i(s)ds$ \cite[see e.g.][S. 74, formula (2.4.3)]{Andersen.1993}. We aim to verify (G1') by first proving the convergence of expected values and subsequently the asymptotic vanishing of variances. We have
\begin{align*}
	\IE &\left[ \sum_{i=1}^n \int H_{i,j}^{(n)}(s)H_{i,k}^{(n)}(s) d\langle M_i \rangle (s) \right]\\
	&\stackrel{(i)}{=}b^{d+1} \int_{[0,1]^{d+1}}K_b(x_l - w)K_b(x_j-w)\alpha(w)\varphi(w)dw\\
	&\stackrel{(ii)}{=} \int_{-x_j/b}^{(1-x_j)/b} K(q) K\left( \frac{x_j - x_l}b +q \right) \alpha(x_j + bq)\varphi(x_j + bq) dq \\
	&\stackrel{(iii)}{\longrightarrow} 
	\begin{cases}
		\kappa_2^{d+1}\alpha(x_l)\varphi(x_l) & l = j \\
		0 & l \ne j.
	\end{cases}
\end{align*}
Due to $d\langle M_i\rangle (s)= \alpha(s, Z_i(s))Y_i(s)ds$ (i) is provided by \eqref{EV.lem}. We then substitute $q = (x_j - w) / b$ and use symmetry of $K$ for (ii). The case $l=j$ in (iii) is already covered by \cite{Linton.1995}. For $l \ne j$ we first note that for sufficiently large $n$, hence, small $b$, the integration are contains the support of $K$ and we thus can integrate over $[-1, 1]^2$ instead. For sufficiently small $b$ we have $K\left( (x_l - x_j)/b +q \right) = 0 $ for all $q\in[-1,1]^2$. Therefore the integrand and, hence, the integral, equals zero.
For the variance we obtain
\begin{align*}
	Var&\left[ \sum_{i=1}^n \int H_{i,j}^{(n)}(s)H_{i,k}^{(n)}(s) d\langle M_i \rangle (s) \right]\\
	&\stackrel{(i)}{\le}\frac 1n b^{2d+2}\int_{[0,1]^{d+1}}K_b^2(x_l - w)K_b^2(x_j-w)\alpha^2(w)\varphi(w)dw\\
	&\stackrel{(ii)}{=} \frac 1n b^{2d+2 -4d - 4 + d+1} \int_{-x_j/b}^{(1-x_j)/b} K^2(q)K^2\left( \frac{x_j-x_l}{b}+q \right) \alpha^2(x_j+bq)\varphi(x_j+bq) dq\\
	&\stackrel{(iii)}{=} o(1).
\end{align*}
Formula (10) from \cite{Linton.1995} provides (i). We then substitute $q = (x_j - w) / b$ and utilize symmetry of $K$ (by assumption (K)) for (ii). Again, for sufficiently large $n$ the support of $K$ is contained in the integration area and we can integrate over $[0,1]^2$. Since all integrands are bounded on this set, the same is true for the integral as a whole.  The prefactor is $o(1)$ according to assumption (B). Therefore, (iii) is clear. For $j\ne l$ the integral even equals zero for sufficiently large $n$ for the same reasons as in the calculation of the expected values.

\noindent (G2') is already covered by \cite{Linton.1995}, as well as $\mathscr C_x \rightarrow_P \varphi(x)$ for all $x_j, j=1,...,\zeta$.

\subsection{Proof of Lemma \ref{gozalo_problem.lem} (condensed version)} 

It suffices to prove the statement when substituting $S_n$ with the first summand of the following decomposition, because asymptotically, under $H_0$, $\hat\beta$ is closer to $\alpha$ than $\hat\alpha$. 
\begin{align*}
	S_n(\tilde X) = (nb^{d+1})^{\frac12} \Bigg\{ \left( \frac{\hat\alpha(\tilde X) - \alpha(\tilde X)}{\hat\sigma_{\tilde X}} \right) +  \left( \frac{\alpha(\tilde X) - \hat\beta(\tilde X)}{\hat\sigma_{\tilde X}} \right) \Bigg\}
\end{align*} 
We denote this summand with $S_n'$ and can represent it as follows (see \eqref{alpha_decomp.eq}).
\begin{align*}
	S_n'(\tilde X):&=(nb^{d+1})^{\frac12} \left( \frac{\hat\alpha(\tilde X) - \alpha(\tilde X)}{\hat\sigma_{\tilde X}} \right) \\&= \frac 1 {\hat\sigma_{\tilde X}} \frac 1{\mathscr{C}_{\tilde X}} (nb^{d+1})^{\frac12} \Bigg\{ \frac 1n \sum_{i=1}^n \int K_b(\tilde X-X_i(s))dM_i(s) \\&~~+ \frac 1n \sum_{i=1}^n \int K_b(\tilde X-X_i(s))\left[ \alpha(X_i(s)) - \alpha(\hat X) \right]Y_i(s)ds\Bigg\}\\&=: \frac 1 {\hat\sigma_{\tilde X}} \frac 1{\mathscr{C}_{\tilde X}} \left\{  R_{n1}(\tilde X) + R_{n2}(\tilde X)\right\} =:\frac 1 {\hat\sigma_{\tilde X}} \frac 1{\mathscr{C}_{\tilde X}} R_{n}(\tilde X)
\end{align*}
The remainder of the proof is structured by five assertions that, combined, establish the lemma. Here, we will only point to the most important ideas for their respective proofs.\\
Firstly, uniform convergence of $\hat\sigma_x$ to $\sigma_x$ and $\mathscr C_x$ to $\varphi(x)$, together with the fact that both $\sigma_x$ and $\varphi(x)$ are bounded away from zero on the compact $\mathscr X$, gives 
\textit{Assertion 1: } If $1/\zeta_n \cdot \sum_{j=1}^{\zeta_n} R_n(\tilde X^j)^2 = O_P(1)$, then $1/\zeta_n \cdot \sum_{j=1}^{\zeta_n} S_n'(\tilde X^j)^2 = O_P(1)$, too.\\
Secondly, we have
\textit{Assertion 2: } If $\sup_x \IE R_{nl}(x)^4 = O(1), l=1,2$, then $1/\zeta_n \cdot \sum_{j=1}^{\zeta_n} R_n(\tilde X^j)^2 = O_P(1)$, too.\\
For proof one first shows that $\IE R_{nl}(\tilde X)^4 = O(1), l=1,2$, too, by conditioning on $\tilde X = x$ and exploiting the independence between $\tilde X$ and the data. Using Minkowski inequality $\IE R_n(\tilde X)^4=O(1)$ follows. Now, expectations and variances of $1/\zeta_n \cdot \sum_{j=1}^{\zeta_n} R_n(\tilde X^j)^2$ stay bounded and Chebychev's inequality delivers the assertion.\\
As each of the random variables $R_{nl}(x)$ is of the form $1/n\sum_{i=1}^n H_{ni}$ for certain random variables $H_{in}$, the third assertion will investigate this particular structure.\\
\textit{Assertion 3: } For $n\in\IN$ let $H_{ni}, i=1,...,n$, be i.i.d. random variables. If $\IE H_{ni} = O((nb^{d+1})^{-1/2})$, $VarH_{n1} = O(b^{-d-1})$ and $\IE H_{n1}^4 = O(b^{-3d-3})$, then also $$\IE \left[(nb^{d+1})^{1/2} 1/n \sum_{i=1}^n H_{ni} \right]^4 = O(1).$$ If instead the summands are stochastic processes dependent on some parameter $x$, and the prerequisites are valid uniformly in $x$, then the conclusion is valid uniformly in $x$, too. Uniformity here means that suprema of expected values are considered, not expected values of suprema.\\
The proof starts by ascertaining, using Hölder inequality, that we can instead examine $\IE \left[(nb^{d+1})^{1/2} 1/n \sum_{i=1}^n (H_{ni} -\IE H_{ni}) \right]^4 $ if the sequence $\IE H_{n1}$ is bounded. This term can be described in terms of multiples of $\IE\left[ H_{n1}-\IE H_{n1} \right]^4$ and $Var[H_{n1}]^2$, which are suitably bounded by assumption.\\
Finally, the random variables $R_{nl}, l=1,2$, are examined individually in order to complete the proof.\\
\textit{Assertion 4: } $\sup_x \IE R_{n2}(x)^4 = O(1)$.\\
Using notation of Assertion 3, choose $H_{ni}(x)=\int K_b(x-X_i(s))\left[ \alpha(X_i(s)) - \alpha(x) \right]Y_i(s)ds$. From the proof of Theorem 1 b) in \cite{Linton.1995} it is easy to see that $\sup_x \IE H_{ni}(x) = O(b^2)$ and $\sup_x Var H_{ni}(x) = O(b^{-d-1})$ which, using (\~ B), also implies $\sup_x \IE H_ni(x) = O((nb^{d+1})^{-1/2})$. It is left to show $\sup_x \IE H_{ni}(x)^4 = O(b^{-3d-3})$ which can be done using Hölder inequality and \eqref{EV.lem}.\\
\textit{Assertion 5: } $\sup_x \IE R_{n1}(x)^4 = O(1)$.\\
$R_{n1}(x)=\frac 1n \sum_{i=1}^n \int K_b(x-X_i(s))dM_i(s)$ is a martingale for each $x$. Hence the expected value equals zero for all $x$. The bound $\sup_x Var \int K_b(x-X_i(s))dM_i(s)=O(b^{-d-1})$ is found easily. For the fourth moments a decomposition is used:
\begin{align*}
	\int K_b(x-X_i(s))dM_i(s) = \int K_b(x-X_i(s))dN_i(s) - \int K_b(x-X_i(s))\alpha(X_i(s))Y_i(s)ds \\\le \int K_b(x-X_i(s))dN_i(s) + \int K_b(x-X_i(s))\alpha(X_i(s))Y_i(s)ds
\end{align*}
It suffices if both summands fulfil the assumptions of Assertion 3 individually. For the second summand argumentation follows that from Assertion 4. For the first summand we substitute the counting process $N$ with a homogeneous Poisson process $\tilde N$ that has strictly greater intensity process. This is possible since $\alpha(x)$ is bounded. Then Hölder inequality provides
\begin{align*}
	\left[ \int K_b(x-X_i(s))Y_i(s)d\tilde N(s) \right]^4 \le \tilde N(1)^3 \int K_b^4(x-X_i(s))Y_i(s)d\tilde N(s)  .
\end{align*}
Next, we condition on the number of jumps of $\tilde N$, obtaining $$\IE \left[ \left. \tilde N(1)^3 \int K_b^4(x-X_i(s))Y_i(s)d\tilde N(s) \right| \tilde N(1) = k \right] = k^4 O(b^{-3d-3}).$$ Recognizing that the Poisson distribution has finite fourth moment the proof can be completed.

\section{Martigale properties (Theorem \ref{martingale_property.th})} 

\subsection{With censoring} \label{withcens}
Proof of right-continuity of $\mathfrak F_t$ and predictability are identical to the uncensored case. Argumentation for martingale property follows a similar path, too. Again, let $0<s<t$. Analogous to (\ref{dM.eq}) we notice, that
\begin{align} \label{dM2.eq}
	M(t) - M(s) = \I_{\{ s<Z_2 \le t, \Delta = 2 \}} - \int_s^t Y(u) \alpha_{12}(u, D(u)) du.
\end{align}	
Analogous to (\ref{EdM_comp.eq}) we examine the conditional expectation on three different events separately:
\begin{equation*}
	\begin{split}
	\IE[M(t) - M(s)|\mathfrak{F}_s] =  \I_{\{Z_2 \le s\}}\IE[M(t) - M(s)|\mathfrak{F}_s] \Big| _{\{Z_2 \le s\}}\\
	+  \I_{\{Z_1>s\}} \IE[M(t) - M(s)|\mathfrak{F}_s] \Big|_{\{Z_1>s\}}  
	+  \I_{\{Z_1 \le s < Z_2\}} \IE[M(t) - M(s)|\mathfrak{F}_s] \Big|_{\{Z_1 \le s < Z_2\}}
	\end{split}
\end{equation*}
On the set $\{Z_2\le s\}$ we obviously have $\I_{\{ s<Z_2 \le t, \Delta = 2 \}} = 0$. Since on this event $Y(t) \equiv 0$, too, it follows directly by (\ref{dM2.eq}) that $\I_{\{Z_2 \le s\}}\IE[M(t) - M(s)|\mathfrak{F}_s] \Big| _{\{Z_2 \le s\}} = 0$.
On the event $\{ Z_1 > s \}$ conditional expectations must be constant, following the identical argument as in the uncensored case. For $c:=\IE[\I_{\{ s<Z_2 \le t, \Delta = 2 \}}|\mathfrak{F}_s] \Big|_{\{Z_1>s\}}$ we obtain according to the definition of conditional expectations
\begin{align*}
	c P(Z_1 > s) = \int_{\{ Z_1 > s \}} cdP = \int_{\{ Z_1 > s \}} \I_{\{ s<Z_2 \le t, \Delta = 2 \}}dP\\ = P(s<Z_1<Z_2\le t, T_1+T_2\le U). 
\end{align*}
Further, for  $\tilde c:=\IE\left[ \int_s^t Y(v) \alpha_{12}(v, D(v)) dv|\mathfrak{F}_s \right] \Big|_{\{Z_1>s\}}$ we have
\begin{align*}
	\tilde c  &P(Z_1 > s) = \int_{\{ Z_1 > s \}} \tilde cdP = \int_{\{ Z_1 > s \}} \int_s^t Y(v) \alpha_{12}(v, D(v)) dvdP\\
	&\stackrel{(i)}{=}\IE_{T_1} \left[ \IE \left[ \int_s^t \I_{\{ s<T_1<v\le Z_2\}} \alpha_{12}(v, v-T_1) dv \Big| T_1 = t_1\right]  \right]\\ 
	&\stackrel{(ii)}{=} \IE_{T_1} \left[ \int_s^t \alpha(v, v-t_1)  P \left( (T_1+T_2)\wedge U \ge v\Big| T_1 =t_1 \right) \I_{(s,v)}(t_1) dv \right]\\
	&\stackrel{(iii)}{=}\IE_{T_1} \left[ \int_s^t \frac{f_{T_2|T_1=t_1}(v-t_1)}{S_{T_2|T_1=t_1}(v-t_1)} S_{T_2|T_1=t_1}(v-t_1)P(U\ge v|T_1 = t_1) \I_{(s,v)}(t_1) dv \right] \\
	&\stackrel{(iv)}{=}\IE_{T_1} \left[ \I_{(s,\infty)}(t_1) \int_{t_1}^t f_{T_2|T_1}(v - t_1) P(U\ge v|T_1 = t_1)  dv \right]\\
	&\stackrel{(v)}{=}\IE_{T_1} \left[ \I_{(s,\infty)}(t_1) P(T_1+T_2 \le t, U\ge T_1+T_2|T_1 = t_1)  \right]\\
	&\stackrel{(vi)}{=}P(s<T_1<T_1+T_2\le t, U\ge T_1+T_2) \\
	&= P(s<Z_1<Z_2\le t, T_1+T_2\le U)
\end{align*}
For (i) we first combine $Y(v)$ with the integration area to a single indicator function. Then the outer integral is an expexted value, whereby we also add a conditional expectation. Furthermore we have $D(v) = v - T_1$  on the event $\{Y(v) \ne 0 \}$. Next, we change order of integration and make use of multiplication theorem for conditional expectations in order to extract the $T_1$-measurable factors from the conditional expectation. This provides (ii). (iii) follows from \eqref{intensities_through_densities.lem} and  
\begin{align} \label{TminU.eq}
	\begin{aligned}
		P &\left( (T_1+T_2)\wedge U \ge v\Big| T_1 =t_1 \right) = P \left( T_1+T_2 \ge v,  U \ge v\Big| T_1 =t_1 \right) \\
		&=P \left( T_1+T_2 \ge v\Big| T_1 =t_1 \right)P \left( U \ge v \right)\\
		&=S_{T_2|T_1=t_1}(v-t_1) P \left( U \ge v \big| T_1=t_1\right).
	\end{aligned}
\end{align}
This factorization is valid according to \cite[Theorem 1.120]{Witting.1985}, because the random vectors
	$$ \begin{pmatrix}
		\I_{\{T_1 + T_2 \ge v\}} \\ T_1
	\end{pmatrix} 
	\text{ and } 
	\begin{pmatrix}
		\I_{\{U \ge v\}} \\ 1
	\end{pmatrix} $$ 
are stochastically independent by assumption. Also the calculation (\ref{cond_surv.eq}) has to be applied again. We obtain (iv) by reducing the fraction plus $\I_{(s,v)}(t_1) = \I_{(s,\infty)}(t_1) \I_{(t_1, \infty)}(v)$. The second indicator is moved to the integral limit. For (v) we first notice that $T_2$ and $U$ are stochastically independent also dependent on $T_1 = t_1$. This can be shown with similar arguments as in step (iii) of this enumeration. We then obtain
\begin{align*}
	P&(T_1+T_2 \le t, U\ge T_1+T_2|T_1 = t_1) \\
	&= \int \int \I_{(0,t]}(t_1 + t_2) \I_{[t_1 + t_2,\infty)}(u) dP_{(U,T_2)|T_1=t_1}(u,t_2) \\
	&= \int \I_{(-t_1, t-t_1]}(t_2) \left[ \int \I_{[t_1 + t_2, \infty)}(u) dP_{U|T_1 = t_1}(u) \right] dP_{T_2|T_1=t_1}(t_2) \\
	&= \int_0^{t - t_1} P(U\ge t_1+ t_2 | T_1 = t_1) f_{T_2|T_1 = t_1}(t_2) dt_2\\
	&= \int_{t_1}^{t} P(U\ge v | T_1 = t_1) f_{T_2|T_1 = t_1}(v - t_1) dv.
\end{align*}
Here again Theorems 1.122 and 1.126 from \cite{Witting.1985} are applied. The third identity exploits that $T_2$ is an a.s. positive random variable. Finally, the calculation of the outer expected value in (vi) is explained by the intermediate steps
\begin{align*}
	\IE_{T_1} &\left[ \I_{(s,\infty)}(t_1) P(T_1+T_2 \le t, U\ge T_1+T_2|T_1 = t_1)  \right]\\ 
	&= \IE \left[ \I_{\{T_1 > s\}} \IE\left( \I_{\{T_1+T_2 \le t, U\ge T_1+T_2 \}} \big|T_1 \right) \right] = \IE \left[ \IE\left( \I_{\{T_1 > s, T_1+T_2 \le t, U\ge T_1+T_2 \}} \big|T_1 \right) \right]\\
	&= P(s < T_1 < T_1 + T_2 \le t, U \ge T_1 + T_2).
\end{align*}
It remains to examine the conditional expectation on the set $\{Z_1 \le s < Z_2\}$. By reasons analogue to the uncensored case $\IE[\cdot| \mathfrak{F}_s]\big|_{\{Z_1 \le s < Z_2\}} = \IE[\cdot| T_1, \I_{\{Z_2 > s\}}\big|_{\{Z_1 \le s < Z_2\}}$. 
We first consider
\begin{align*}
	\IE \big[ & \I_{\{s < Z_2 \le t, \Delta = 2\}} |T_1 = t_1 \big] = \int \int \I_{(s,t]}(t_1 + t_2) \I_{[t_1+t_2, \infty)}(u) dP_{(U, T_2)|T_1 = t_1}(u, t_2) \\
	&=\int \I_{(s-t_1, t-t_1]}(t_2) \left[ \int \I_{[t_1 + t_2, \infty)}(u) dP_{U|T_1 = t_1}(u) \right] dP_{T_2|T_1=t_1}(t_2) \\
	&= \int_{s-t_1}^{t - t_1} P(U\ge t_1+ t_2 | T_1 = t_1) f_{T_2|T_1 = t_1}(t_2) dt_2\\
	&= \int_{s \vee t_1}^{t} P(U\ge v | T_1 = t_1) f_{T_2|T_1 = t_1}(v - t_1) dv
\end{align*}
where we again apply Theorems 1.122 and 1.126 from \cite{Witting.1985} as well as the fact that $T_2$ is a.s. positive. We further obtain 
\begin{align*}
	\IE &\left[ \int_s^t Y(v) \alpha_{12}(v, D(v)) dv \Big| T_1 = t_1 \right] \stackrel{(i)}{=}  \int_s^t \alpha_{12}(v, v - t_1) \IE\big[ \I_{\{T_1 < v \le Z_2\}} | T_1 = t_1\big] dv \\
	&\stackrel{(ii)}{=} \int_s^t \I_{(t_1,\infty)}(v) \frac{f_{T_2|T_1=t_1}(v - t_1)}{S_{T_2|T_1=t_1}(v - t_1)} P\left( (T_1 + T_2) \wedge U \ge v | T_1 = t_1 \right) dv\\
	&\stackrel{(iii)}{=} \int_{s \vee t_1}^t \frac{f_{T_2|T_1=t_1}(v - t_1)}{S_{T_2|T_1=t_1}(v - t_1)}S_{T_2|T_1=t_1}(v - t_1) P(U \ge v | T_1 = t_1) dv\\
	&\stackrel{}{=} \int_{s \vee t_1}^t f_{T_2|T_1=t_1}(v - t_1)  P(U \ge v | T_1 = t_1) dv
\end{align*}
Changing order of integration and moving the $T_1$-measurable factor $\alpha_{12}$ out of the conditional expectation gives (i). Extracting another $T_1$-measurable factor and \eqref{intensities_through_densities.lem} provides (ii). For (iii) see \eqref{TminU.eq} together with \eqref{cond_surv.eq}.\\
For reasons analoguous to the uncensored case we have
\begin{align*}
	\IE \big[ \I_{\{s < Z_2 \le t, \Delta = 2\}} |T_1 = t_1, \I_{\{Z_2>s\}} = 1\big] &= \frac{\IE \big[ \I_{\{s < Z_2 \le t, \Delta = 2\}} |T_1 = t_1\big]}{P(Z_2>s|T_1=t_1)}
	\intertext{and}
	\IE \left[ \int_s^t Y(v) \alpha_{12}(v, D(v)) dv \Big| T_1 = t_1, \I_{\{Z_2>s\}} = 1 \right] &= \frac{\IE \left[ \int_s^t Y(v) \alpha_{12}(v, D(v)) dv \Big| T_1 = t_1\right]}{P(Z_2>s|T_1=t_1)}.
\end{align*}
With this we conclude $\IE \big[ M(t) - M(s) | T_1 = \cdot, \I_{\{Z_2>s\} } = \cdot \big]\Big|_{\{Z_1 \le s < Z_2\}}  = 0$ and, hence, also $\I_{\{Z_1 \le s < Z_2\}}\IE[M(t) - M(s)|\mathfrak{F}_s] \Big|_{\{Z_1 \le s < Z_2\}} = 0$. This completes the proof of martingale property for $M(t)$.

\subsection{For semi-Markov model} \label{semimarkmart}

Let the number of respective transitions in $\mathcal{X}$ up to time $t$ be denotes as   $N(t):=\sum_{m=1}^\infty \I_{\{Z_m\le t\}} \I_{\{S_{m-1} = r\}} \I_{\{S_{m} = s\}}$. Let  $Y(t):=\I_{\{\mathcal{X}(t-) = r\}} = \sum_{m=1}^\infty \I_{\{Z_{m-1} <t\le Z_m\}} \I_{\{S_{m-1} = r\}}$ further indicate whether, shortly before $t$, $N$ is at risk to migrate. And let further denote $D(t):=\tilde D(t)Y(t)$ the duration of $\mathcal{X}$ in state $r$, given $\mathcal{X}(t)=r$. Note that $N$ may jump several times.
We set $\alpha(t, d) :=   \alpha_{rs}(t, d)$. Again, it is to prove that these processes fit the framework of \cite{Linton.1995}. Therefore, once again we operate with the filtration $\mathfrak F_t:= (N(u), D(u+), Y(u+), u\le t).$ Including the right limits of $Y$ and $D$ respectively ensures right-continuity of the filtration and, thus, ``les conditions habituelles''. It has to be shown that $\alpha(t, D(t))Y(t)$ is the intensity process of $N(t)$. So we have to prove predictability of $\alpha Y$ (this follows directly from the continuity of $\alpha$ and left-continuity of $Y$) and particularly the martingale property of $M(t)= N(t) - \int_0^t \alpha(u, D(u))Y(u)du.$
We define a random element $\mathcal F_t:= (N(u), D(u+), Y(u+), u\le t)$ such that $\mathfrak F_t = \sigma(\mathcal F_t)$. Since obviously $D(t)$ and $Y(t)$ are $\mathcal F_t$-measurable there exist measurable mappings $d_t$ and $y_t$ such that $D(t)=d_t(\mathcal F_t)$ and $Y(t) = y_t(\mathcal F_t)$. We prove the following lemma.
\begin{Lem} \label{dM_cond_F.lem}
	Given the previous definitions we have (a) 
	\begin{multline*}
		\IE \left[\left. M(t+h) - M(t) \right| \mathcal F_t = f\right]\big|_{\{f|y_t(f)=1\}} \\
		= \IE\left[\left. M(t+h)-M(t) \right| Z_m=t-d_t(f),  T_{m+1}>d_t(f), S_m = r\right], ~~ \text{für}~P_{\mathcal F_t}\text{-f.a.} ~f,
	\end{multline*}
	for arbitrary $m\in\IN$, and
	(b)
	\begin{align*}
		\big| \IE&\left[\left. M(t+h) - M(t) \right| \mathcal F_t = f\right]\big|\Big|_{\{f|y_t(f)=0\}} \\
		&\le \sup_{s\in\IS\backslash\{r\}, z\in[0,1]}  \IE\left[\left. N(t+h)-N(t) \right| Z_m=z,  T_{m+1}>t-z, S_m = s\right] \\
		&+\sup_{s\in\IS\backslash\{r\}, z\in[0,1]}  \IE\left[\left. \int_t^{t+h}  \alpha(u,D(u))Y(u) du  \right| Z_m=z,  T_{m+1}>t-z, S_m = s\right], \\
		&~~~~~~~~~~~~~~~~~~~~~~~~~~~~~~~~~~~~~~~~~~~~~~~~~~~~~~~~~~~~~~~~~~~~~~~~~~~~~~~~~~~~\text{für}~P_{\mathcal F_t}\text{-f.a.} ~f,
	\end{align*}
	again for arbitrary $m\in\IN$.
\end{Lem}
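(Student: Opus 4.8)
The plan is to mirror the computation of the instantaneous intensity \eqref{alpha_def.leam} in Appendix \ref{furthres}, exploiting the decisive asymmetry that on $\{y_t(f)=1\}$ the observable history $\mathcal F_t$ already pins down the current state and its elapsed duration, whereas on $\{y_t(f)=0\}$ it does not. For (a) I would partition $\Omega$ by the events $\{\mu(t)=m\}$, as in the derivation of \eqref{alpha_def.leam}, and observe that on $\{Y(t)=1\}\cap\{\mu(t)=m\}$ one has $S_m=r$, $Z_m=t-d_t(f)$ and $T_{m+1}>d_t(f)$, all three recoverable from $f$ by exactly the path-reconstruction argument used in \eqref{set_in_salg.eq}--\eqref{sigma_alg_id.eq}. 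Invoking Lemma \ref{y_messbar.lem} to obtain the requisite restricted $\sigma$-algebra identity and then Lemma \ref{Bedingungstausch.lem} to swap the conditioning, the conditional expectation given $\mathcal F_t=f$ collapses to one given $(S_m=r,Z_m=t-d_t(f),T_{m+1}>d_t(f))$. The semi-Markov property \eqref{semmarkprop} then removes the earlier coordinates $(S_0,Z_0,\dots,S_{m-1},Z_{m-1})$ and, crucially, the index $m$ itself from the transition law, which is precisely why the identity holds ``for arbitrary $m$''.

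For (b) the essential new feature is that $N,D,Y$ only track state $r$, so on $\{Y(t)=0\}$ neither the current state $S_m\in\IS\setminus\{r\}$ nor its entry time $Z_m$ is determined by $f$; indeed $\mu(t)$ is not even $\mathcal F_t$-measurable, since transitions among non-$r$ states are invisible. I would therefore begin from the triangle inequality, bounding $\big|\IE[M(t+h)-M(t)\mid\mathcal F_t=f]\big|$ by $\IE[N(t+h)-N(t)\mid\mathcal F_t=f]+\IE[\int_t^{t+h}\alpha(u,D(u))Y(u)\,du\mid\mathcal F_t=f]$, which is legitimate because both conditioned quantities are non-negative. For each such summand $X\ge 0$, the tower property over the finer $\sigma$-algebra generated by $(S_m,Z_m,\I_{\{Z_{m+1}>t\}})$ expresses $\IE[X\mid\mathcal F_t=f]$ as an average of inner conditional expectations; by the same swap-and-semi-Markov reduction as in (a), each inner term equals $\IE[X\mid S_m=s,Z_m=z,T_{m+1}>t-z]$ for the now-hidden realised values $s\ne r$ and $z\in[0,t]\subseteq[0,1]$. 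Since any average is dominated by the supremum of its integrand, replacing the realised $(s,z)$ by $\sup_{s\ne r,\,z\in[0,1]}$ yields the asserted bound, once more uniformly in $m$ by \eqref{semmarkprop}.

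The main obstacle is the measurability bookkeeping in (b): one must justify rigorously that, conditionally on $\{Y(t)=0\}$, the conditional expectation given $\mathcal F_t=f$ really is an average of the regeneration-type expectations $\IE[X\mid S_m=s,Z_m=z,T_{m+1}>t-z]$ over the conditional law of the unobserved pair $(S_m,Z_m)$. This needs the analogue of \eqref{sigma_alg_id.eq} in a setting with arbitrarily many, and partly invisible, transitions, so one argues the inclusion $\sigma(\mathcal F_t)\subset\sigma(S_0,Z_0,\dots,S_m,Z_m,\I_{\{Z_{m+1}>t\}})$ on each $\{\mu(t)=m\}$ through Lemma \ref{y_messbar.lem}, and then controls the sum over $m$, whose tail is summable by Lemma \ref{dMu.lem} exactly as in the derivation of \eqref{alpha_def.leam}. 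Once the reduction to the last transition is secured, the inequality itself is immediate from non-negativity and the elementary fact that a supremum bounds an average; it is the passage from the coarse observable filtration to the fine jump-chain $\sigma$-algebra, and the attendant loss of information that forces a supremum rather than an equality, that carries the genuine weight of the proof.
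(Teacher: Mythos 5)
There is a genuine gap, and it sits in part (a). Your route collapses $\IE[\,\cdot\,|\mathcal F_t=f]$ to $\IE[\,\cdot\,| S_m=r, Z_m=t-d_t(f), T_{m+1}>d_t(f)]$ by applying Lemma \ref{Bedingungstausch.lem} on the sets $A=\{Y(t)=1\}\cap\{\mu(t)=m\}$. That lemma is not applicable here, for two reasons. First, it requires $A\in\sigma(X)\cap\sigma(Y)$ with $X=\mathcal F_t$; but $\{\mu(t)=m\}\notin\sigma(\mathcal F_t)$, because $N$, $D$, $Y$ only record state-$r$ events and transitions among states in $\IS\setminus\{r\}$ are invisible --- an obstruction you yourself point out in (b), apparently without noticing that it equally destroys your (a): being in state $r$ at time $t$ pins down the current state and duration, but not $\mu(t)$ and not the jump chain. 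Second, the lemma requires \emph{equality} of the restricted $\sigma$-algebras, and neither inclusion holds: the jump chain $(S_0,Z_0,\dots,S_m,Z_m,\I_{\{Z_{m+1}>t\}})$ is strictly richer than $\mathcal F_t$ on $A$ (two histories with different invisible excursions through $\IS\setminus\{r\}$ give the same $f$), while $\mathcal F_t$ is strictly richer than the last-jump data $(S_m,Z_m,\I_{\{Z_{m+1}>t\}})$ (it records earlier sojourns in $r$ and earlier jumps of $N$, which a semi-Markov process may well have). Mere recoverability of the three quantities $S_m=r$, $Z_m$, $T_{m+1}>d_t(f)$ from $f$ is far weaker than the $\sigma$-algebra identity \eqref{sigma_alg_id.eq}, which was available in the derivation of \eqref{alpha_def.leam} only because there one conditioned on the \emph{full} path $(\mathcal X(u), u\le t)$, not on the observable $\mathcal F_t$.

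This is precisely why the paper's proof avoids the swap lemma in the semi-Markov appendix. It works with the defining integral property of conditional expectation: for $B\in\sigma(\mathcal F_t)\cap\mathfrak P(\{Y(t)=1\})$ it writes $\int_B\IE[\Delta M|\mathcal F_t]\,dP=\int_B\Delta M\,dP$, decomposes \emph{inside the integral} over $\{\mu(t)=m\}$, and uses only the one-directional inclusion (each $B\cap\{\mu(t)=m\}$ is measurable for the finer jump-chain $\sigma$-algebra, via Lemma \ref{y_messbar.lem}). The computational heart --- entirely absent from your proposal --- is the paper's Assertion~1: an explicit multiple-integral calculation, summing over all invisible intermediate paths, showing that $\IE[\Delta M\,|\,S_0=s_0,Z_0=z_0,\dots,S_m=r,Z_m=z,T_{m+1}>t-z]$ equals a function $\prescript{M}{}g_t^r(z)$ that does not depend on $m$ or on $(s_0,z_0,\dots,s_{m-1},z_{m-1})$. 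Only this $m$-independence lets the pieces over the partition $\{\mu(t)=m\}$ reassemble into the single $\sigma(\mathcal F_t)$-measurable function $\prescript{M}{}g_t^r(t-d_t(f))$, after which a.e.\ uniqueness of the conditional expectation finishes (a). Your part (b) outline (triangle inequality, tower property over the finer jump-chain $\sigma$-algebra, domination of an average by a supremum) does match the paper's argument in structure --- and note that the same tower-property argument, not the swap lemma, is what proves (a) --- but as written it defers to ``the same swap-and-semi-Markov reduction as in (a)'' and therefore inherits the gap.
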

\begin{proof}
	
(a) Subsequently we denote $\Delta M:= M(t+h) - M(t)$, $\Delta N:= N(t+h)-N(t)$ and $\Delta A:=\int_t^{t+h}  \alpha(u,D(u))Y(u) du$. We start by showing the following assertion.\\
\textit{Assertion 1: }For all $t\in[0,1]$ and every $r\in\IS$ there is a mapping $\prescript{M}{}g_t^r(z)$ independent of $m$ and $ s_0, z_0,...,s_{m-1}, z_{m-1}$ with
\begin{align*}
	\IE\left[\left. \Delta M \right| S_0=s_0, Z_0=z_0, \dots, S_m = r, Z_m=z,  T_{m+1}>t-z\right] = \prescript{M}{}g_t^r(z).
\end{align*}
Analoguously, there are mappings $\prescript{N}{}g_t^r(z)$ and $\prescript{A}{}g_t^r(z)$, for $\Delta N$ and $\Delta A$ instead of $\Delta M$.\\
\textit{Proof of Assertion 1: } $\Delta M =\Delta N - \Delta A$, therefore the verification for $\Delta N$ and $\Delta A$ suffices. First, we consider $\Delta N$. According to the definition of $N$ ist is easy to see that $\Delta N = \sum_{\nu=0}^\infty  \I_{\{S_{\nu} = r\}} \I_{\{S_{\nu+1} = s\}} \I_{\{t<Z_{\nu+1}\le t+h\}}.$ For the single summands we have for $\nu \ge m$
\begin{align*}
	\IE &\left[ \left. \I_{\{S_{\nu} = r\}} \I_{\{S_{\nu+1} = s\}} \I_{\{t<Z_{\nu+1}\le t+h\}} \right| S_0 = s_0, Z_0 = z_0,...,S_m = r, Z_m = z, T_{m+1} > t-z \right] \\
	&= P(S_\nu = r, S_{\nu+1} = s, Z_{\nu+1} \in (t,t+h] | S_m = r, Z_m = z, T_{m+1} > t-z) \\
	&= \int\limits_{t-z}^{t-z+h} \int\limits_{t-z+t_1}^{t-z+t_1+h}\cdots\int\limits_{t-z+t_1+...+t_{\nu-m}}^{t-z+t_1+...+t_{\nu-m} + h} \sum\limits_{\substack{s_1, ..,s_{\nu - m -1}\in\IS\\s_j\ne s_{j+1}}} f_z^{rs_1}(t_1)\prod_{j=2}^{\nu - m - 1} f_{z+t_1+...+t_{j-1}}^{s_{j-1}s_j}(t_j) \cdot \\ 
	&~~~~~~~~~~~~~~~~~~~~~f_{z+t_1+...+t_{\nu-m-1}}^{s_{\nu-m-1}r}(t_{\nu-m}) f_{z+t_1+...+t_{\nu-m}}^{rs}(t_{\nu-m + 1}) dt_{\nu-m+1}\dots dt_1.
\end{align*}
The first identity applies due to the definition of semi-Markov processes. The last term only depends on $\nu-m$, but not on $m$ itself. For $\nu < m$ the conditional expectation is zero. When we denote this term as $\prescript{\nu-m}{}h_t^r(z)$ we obtain $$\IE \left[ \left.N(t+h)-N(t) \right| S_0 = s_0, Z_0 = z_0,...,S_m = r, Z_m = z, T_{m+1} > t-z \right] = \sum_{j=0}^\infty   \prescript{j}{}h_t^r(z),$$
where the series on the right-hand side meets the requirements for the function $\prescript{N}{}g_t^r(z)$ in Assertion 1. A similar argumentation yields an analoguous result for $\Delta A.$ Here, the representation $\Delta A = \sum_{\nu = 0}^\infty \int_{t\vee Z_\nu}^{(t+h)\wedge Z_\nu} \I_{\{S_\nu = r\}}\alpha(u, u - Z_\nu)$ can be applied. This completes the proof of \textit{Assertion 1}.\\
This assertion can be put to use in the following chain of equations: Let $B\in\sigma(\mathcal F_t) \cap \mathfrak P (\{Y(t) = 1\})$. Further let $\mathcal F_t(B)$ be a measurable set of the image space of $\mathcal F_t$, whose pre-image is $B$.
\begin{align*}
	\int_{\mathcal F_t(B)} & \IE \left[ \left. \Delta M \right| \mathcal F_t = f \right] dP_{\mathcal F_t}(f) \stackrel{(i)}{=} \int_{B} \IE \left[\left. \Delta M \right| \mathcal F_t \right] dP\\
	&\stackrel{(ii)}{=}  \int_{B}  \Delta M  dP \stackrel{(iii)}{=} \sum_{m=0}^\infty \int_{B \cap \{\mu(t)=m\}}  \Delta M  dP \\
	&\stackrel{(iv)}{=} \sum_{m=0}^\infty \int_{B \cap \{\mu(t)=m\}}  \IE [ \Delta M |  S_0, Z_0, \dots, S_m, Z_m,  \I_{\{Z_{m+1}>t\}}]  dP \\
	&\stackrel{(v)}{=} \sum_{m=0}^\infty \int_{B \cap \{\mu(t)=m\}} \prescript{M}{}g_t^r(Z_m) dP
	\stackrel{(vi)}{=} \sum_{m=0}^\infty \int_{B \cap \{\mu(t)=m\}} \prescript{M}{}g_t^r(t - D(t)) dP\\
	&\stackrel{(vii)}{=} \int_{B} \prescript{M}{}g_t^r(t - D(t)) dP
	\stackrel{(viii)}{=} \int_{\mathcal F_t(B)} \prescript{M}{}g_t^r(t - d_t(f)) dP_{\mathcal F_t}(f) 
\end{align*} 
Transformation theorem provides (i), definition of conditional expectations (ii), due to $B\in\sigma(\mathcal F_t)$. Then the events $\{\mu(t)=m\}$ form a partition of $\Omega$ and (iii) follows. Note that $B \cap \{\mu(t)=m\}$ is a $(S_0, Z_0, \dots, S_m, Z_m,  \I_{\{Z_{m+1}>t\}})$-measurable event, recognizable by the fact that for all $\omega\in\{\mu(t)=m\}$ the complete paths of $\mathcal F_t$ can be reconstructed. The definition of conditional expectations then gives (iv). For (v) Assertion 1 can be applied because the conditional expectation is the composition of the factorized conditional expectation and the conditioning variable. Recognize therefore that for all $\omega\in B \cap \{\mu(t)=m\}$ we have $S_m = r$ and $\I_{\{Z_{m+1}>t\}}=1\Leftrightarrow T_{m+1}>t-Z_m.$ For (vi) simply note that for all $\omega\in B\cap\{\mu(t)=m\}$ $Z_m = t-D(t)$. Finally, the summands' independence of $m$ enables (vii) and transformation theorem gives (viii).\\
Since this calculation is valid for any $B\in\sigma(\mathcal F_t) \cap \mathfrak P (\{Y(t) = 1\})$ and, hence, for all corresponding image sets $\mathcal F_t(B)$, part (a) of the lemma is proven \cite[by applying e.g.][Chapter IV, Theorem 4.4]{Elstrodt.2009}.\\
(b) First we have $|\Delta M|\le \Delta N + \Delta A$. We find that the right-hand side of (b) equals $\sup_{s\ne r, z}\prescript{N}{}g_t^s(z) + \sup_{s\ne r, z}\prescript{A}{}g_t^s(z)$. Our argumentation starts similar to that in the proof of part (a). This time we choose $B\in\sigma(\mathcal F_t) \cap \mathfrak P (\{Y(t) = 0\})$. Again, let $\mathcal F_t(B)$ be a measurable subset of the image space of $\mathcal F_t$ whose pre-image is $B$.

\begin{align*}
	\int_{\mathcal F_t(B)} & \IE \left[ \left. \Delta N \right| \mathcal F_t = f \right] dP_{\mathcal F_t}(f) \stackrel{}{=} \int_{B} \IE \left[\left. \Delta N \right| \mathcal F_t \right] dP\\
	&\stackrel{}{=}  \int_{B}  \Delta N  dP \stackrel{c.}{=} \sum_{m=0}^\infty \int_{B \cap \{\mu(t)=m\}}  \Delta N  dP \\
	&\stackrel{}{=} \sum_{m=0}^\infty \sum_{s\ne r}\int_{B \cap \{\mu(t)=m\}\cap\{S_m = s\}}  \IE [ \Delta N |  S_0, Z_0, \dots, S_m, Z_m,  \I_{\{Z_{m+1}>t\}}]  dP \\
	&\stackrel{}{=} \sum_{m=0}^\infty \sum_{s\ne r}\int_{B \cap \{\mu(t)=m\}\cap\{S_m = s\}}  \prescript{N}{}g_t^s(Z_m)  dP \\
	&\stackrel{}{\le} \sum_{m=0}^\infty \sum_{s\ne r}\int_{B \cap \{\mu(t)=m\}\cap\{S_m = s\}}  \sup_{s\ne r, z}\prescript{N}{}g_t^s(z)  dP \\
	&\stackrel{}{=} \int_{B}  \sup_{s\ne r, z} \prescript{N}{}g_t^s(z)  dP = \int_{\mathcal F_t(B)} \sup_{s\ne r, z}\prescript{N}{}g_t^s(z) dP_{\mathcal F_t}(f)
\end{align*}
Arguments for the first four lines are identical to those in (a). Then we can bound the integrand by the deterministic supremum which is furthermore independent of $m$. Then again, we combine the integration areas and apply transformation theorem. In total we thereby obtain $ \IE \left[ \left. \Delta N \right| \mathcal F_t = f \right]\le \sup_{s\ne r, z}\prescript{N}{}g_t^s(z), P_{\mathcal F_t}(f)$-a.s.. In the same manner we obtain an analogous result for $\Delta A$ und and the proof is completed. 

\end{proof}

\begin{Lem} \label{O_sq.lem}
	There is a constant $C > 0$, independent of $t$ and $\omega$, such that, for sufficiently small $h>0$, $|\IE [M(t+h) - M(t)| \mathfrak F_t]|\le C h^2$.
\end{Lem}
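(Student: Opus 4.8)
The plan is to bound the two cases furnished by Lemma~\ref{dM_cond_F.lem} separately and uniformly, since $\{Y(t)=1\}$ and $\{Y(t)=0\}$ partition $\Omega$ and $\mathfrak F_t=\sigma(\mathcal F_t)$. Throughout I would write $\Delta M=\Delta N-\Delta A$ as in the proof of Lemma~\ref{dM_cond_F.lem}, use that by (L) both $\alpha$ and the densities $f_z^{rs}$ are uniformly bounded, and use that the constant in Lemma~\ref{dMu.lem} depends neither on the current state nor on the duration. (The $O(h^2)$ bound is exactly what is needed to derive the martingale property afterwards, by telescoping over a vanishing partition of any interval $(s,t]$ and applying the tower property, since $n$ increments each $O(h^2)=O((t-s)^2/n^2)$ sum to $O((t-s)^2/n)\to0$.)

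On $\{y_t(f)=1\}$, part (a) identifies the conditional expectation with $\IE[\Delta M\mid S_m=r,Z_m=t-d,T_{m+1}>d]$, where $d=d_t(f)$. I would show that both $\IE[\Delta N\mid\cdots]$ and $\IE[\Delta A\mid\cdots]$ equal $\big(\int_d^{d+h}f_{t-d}^{rs}(t_1)\,dt_1\big)/S_{t-d}^r(d)+O(h^2)$, so that they cancel to the required order. For $\Delta N$ this is the computation already carried out for \eqref{alpha_def.leam}: the first sojourn ending in a jump $r\to s$ inside $(t,t+h]$ contributes the displayed ratio up to $O(h^2)$ (see \eqref{num_first.eq}), whereas any path producing an $r\to s$ count via detours, or a second count, needs at least two jumps and is $O(h^2)$ by Lemma~\ref{dMu.lem}. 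For $\Delta A=\int_t^{t+h}\alpha(u,D(u))Y(u)\,du$ I would split the integrand, for each $u$, according to whether a jump has occurred in $(t,u]$: on the no-jump event $Y(u)=1$ and $D(u)=u-t+d$, and the at-risk probability $S_{t-d}^r(u-t+d)/S_{t-d}^r(d)$ cancels the denominator of $\alpha(u,u-t+d)=f_{t-d}^{rs}(u-t+d)/S_{t-d}^r(u-t+d)$, leaving exactly $f_{t-d}^{rs}(u-t+d)/S_{t-d}^r(d)$; the complementary event has probability $\le C(u-t)$ by Lemma~\ref{dMu.lem} and, $\alpha$ being bounded, contributes at most $\int_t^{t+h}C'(u-t)\,du=O(h^2)$. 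Substituting $t_1=u-t+d$ in the main term recovers the same ratio, hence $\IE[\Delta N\mid\cdots]-\IE[\Delta A\mid\cdots]=O(h^2)$.

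On $\{y_t(f)=0\}$, part (b) bounds the conditional expectation by $\sup_{s\ne r,\,z}\IE[\Delta N\mid Z_m=z,T_{m+1}>t-z,S_m=s]+\sup_{s\ne r,\,z}\IE[\Delta A\mid\cdots]$. Starting from a state $s\ne r$, any $r\to s$ transition first requires the process to reach $r$ and then to leave it, i.e.\ at least two jumps in $(t,t+h]$; since $\Delta N\le\mu(t+h)-\mu(t)$ and $\Delta N=0$ unless there are at least two jumps, $\IE[\Delta N\mid\cdots]\le\sum_{k\ge2}P(\mu(t+h)-\mu(t)\ge k\mid\cdots)\le\sum_{k\ge2}C^kh^k=O(h^2)$ by Lemma~\ref{dMu.lem}. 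Likewise $Y(u)=\I_{\{\mathcal X(u-)=r\}}$ can equal $1$ only after at least one jump, so $P(Y(u)=1\mid\cdots)\le C(u-t)$ and, with $\alpha$ bounded, $\IE[\Delta A\mid\cdots]\le(\sup\alpha)\int_t^{t+h}C(u-t)\,du=O(h^2)$. Both suprema are therefore $O(h^2)$ uniformly in $s$ and $z$.

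The main obstacle is uniformity: the constant $C$ must be free of $t$, $\omega$ and of the conditioning values $z,s$. This is precisely what (L) (uniform boundedness of $\alpha$ and of the $f_z^{rs}$) and the state/duration-independent constant of Lemma~\ref{dMu.lem} provide, and it is why the two leading terms in the $\{Y(t)=1\}$ case must cancel \emph{exactly} rather than merely to leading order---any $O(h)$ discrepancy surviving there would ruin the bound. Combining the two cases, and taking $h$ small enough that the geometric-series estimates of Lemma~\ref{dMu.lem} hold, yields $|\IE[M(t+h)-M(t)\mid\mathfrak F_t]|\le Ch^2$, as claimed.
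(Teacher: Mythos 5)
Your proposal is correct and takes essentially the same route as the paper's proof: both reduce the claim via Lemma \ref{dM_cond_F.lem} to the events $\{Y(t)=1\}$ and $\{Y(t)=0\}$, show on the former that $\IE[\Delta N\mid\cdot]$ and $\IE[\Delta A\mid\cdot]$ share the exact leading term $\int_d^{d+h}f_{t-d}^{rs}(u)\,du\,/\,S_{t-d}^r(d)$ so that only $O(h^2)$ remainders survive (the paper's Statements 1--2), and on the latter bound both conditional expectations directly by $O(h^2)$ using Lemma \ref{dMu.lem} and (L) (Statements 3--4). Your handling of $\Delta A$ by splitting on whether a jump occurs in $(t,u]$, instead of the paper's sum over sojourn indices with a geometric-series bound, is an inessential streamlining of the same argument.
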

Before we approach the proof of Lemma \ref{O_sq.lem}, we want to point out how it enables establishing the martingale property for $M$: 
\begin{equation} \label{dM_decomposition.eq}
	\begin{split}
	\big|\IE[ M(t + h) - M(t)\big|\mathfrak F_t] \big| \le \sum_{j=1}^k \left| \IE\left[ \left.  M\left( t + \frac{jh}{k} \right) - M\left( t + \frac{(j-1)h}{k} \right) \right|  \mathfrak F_t \right] \right|  \\
	=\sum_{j=1}^k  \left| \IE \left[ \IE\left[ \left. \left. M\left( t + \frac{jh}{k} \right) - M\left( t + \frac{(j-1)h}{k} \right) \right|  \mathfrak F_{t + \frac{(j-1)h}{k}} \right] \right| \mathfrak F_t \right] \right| \\
	\le \sum_{j=1}^k C \left( \frac 1k \right)^2 = \frac Ck.
	\end{split}
\end{equation}
$\big|\IE[ M(t + h) - M(t)| \mathfrak F_t]\big|,$ itself independent of $k$, therefore is bounded by $C/k$ for arbitrary $k$ and must equal zero. So $\IE[ M(t + h) - M(t)|\mathfrak F_t]$ equals zero, too, and martingale property has been shown -- provided we can prove Lemma \ref{O_sq.lem}. \\

\begin{proof}[Proof of Lemma \ref{O_sq.lem}:]
We show four sub-statements. All of those are valid for sufficiently small $h>0$. \\
\textit{Statement 1:} There is a constant $C$, independent of $t$ and $d$, such that $$\IE [\Delta N | Z_m=t-d, S_m = r, T_{m+1}>d] = \frac{\int_d^{d+h} f_{t-d}^{rs}(u)du}{S_{t-d}^r(d)} + R_N(t, d, h),$$
where $|R_N(t,d,h)|\le C h^2$.\\ 
\textit{Statement 2:} There is a constant $C$, independent of $t$ and $d$, such that $$\IE [\Delta A | Z_m=t-d, S_m = r, T_{m+1}>d] = \frac{\int_d^{d+h} f_{t-d}^{rs}(u)du}{S_{t-d}^r(d)} + R_A(t, d, h),$$
where $|R_A(t,d,h)|\le C h^2$.\\
\textit{Statement 3:} There is a constant $C$, independent of $t$ and $d$, such that $$\IE [\Delta N | Z_m=t-d, S_m = s, T_{m+1}>d] \le Ch^2, \text{ for all } s\ne r.$$
\textit{Statement 4:} There is a constant $C$, independent of $t$ and $d$, such that $$\IE [\Delta A | Z_m=t-d, S_m = s, T_{m+1}>d] \le Ch^2, \text{ for all } s\ne r.$$
Statements 1 and 2, together with part (a) of Lemma \ref{dM_cond_F.lem}, prove the lemma on the event $\{Y(t)=1\}$. Statements 3 and 4, together with part (b) of Lemma \ref{dM_cond_F.lem}, prove it on the event $\{Y(t)=0\}$.\\
\textit{Proof of Statement 1: } First we have
\begin{equation} \label{EW_Delta_N.eq}
	\begin{split}
	\IE [\Delta N | Z_m=t-d, S_m = r, T_{m+1}>d] \\= \sum_{k=1}^\infty P(\Delta N \ge k | Z_m=t-d, S_m = r, T_{m+1}>d).
	\end{split}
\end{equation}
Due to $\{\Delta N \ge k\} \subset \{\mu(t+h) - \mu(t) \ge k \}$ we can bound the $k$th summand by $\tilde C^kh^k$ according to Lemma \ref{dMu.lem} where $\tilde C$ is independent of $t$, $d$. Therefore the series except the first summand is bounded by $\sum_{k=2}^\infty \tilde C^k h^k = (\tilde C^2 h^2)/ (1-\tilde C h)  \le \bar C h^2 $ for $h$ small. For the first summand in (\ref{EW_Delta_N.eq}) we have
\begin{align} \label{EW_DeltaN_Summand1.eq}
	P&(\Delta N \ge 1 | Z_m=t-d, S_m = r, T_{m+1}>d) \nonumber\\ &= P(Z_{m+1} \le t+h, S_{m+1} = s | Z_m=t-d, S_m = r, T_{m+1}>d) \nonumber\\
	&+ P( \text{``$\mathcal X$ jumps $\ge2$-times in $(t, t+h]$, $\ge1$-times from $r$ to $s$''} |Z_m=t-d, \nonumber\\&~~~~~~~~~~~~~~~~~~~~~~~~~~~~~~~~~~~~~~~~~~~~~~~~~~~~~~~~~~~~~~~~~~~~~~~S_m = r, T_{m+1}>d).
\end{align}
The event of the second probability, which for the sake of simplicity was only described in words here, is a subset of $\{\mu(t+h)-\mu(t)\ge2\}$ and, hence, by Lemma \ref{dMu.lem} bounded by $\tilde C h^2$ for small $h$. It remains to investigate the first summand of (\ref{EW_DeltaN_Summand1.eq}).
\begin{align*}
	P&(Z_{m+1} \le t+h, S_{m+1} = s | Z_m=t-d, S_m = r, T_{m+1}>d) \\
	&= \frac{	P(d<T_{m+1} \le d+h, S_{m+1} = s | Z_m=t-d, S_m = r)}{P(T_{m+1}>d|Z_m = t-d, S_m = r)} \\
	&= \frac{\int_d^{d+h} f_{t-d}^{rs}(u)du}{S_{t-d}^r(d)},
\end{align*}
according to the definitions of $f_{t-d}^{rs}$ and $S_{t-d}^r$. This proves \textit{Statement 1}.\\
\textit{Proof of Statemant 2: }Due to
\begin{align*}
	\Delta A = \int_t^{t+h}\alpha(u, D(u)) Y(u) du = \sum_{l=0}^\infty \int_t^{t+h} \alpha(u, u-Z_l) \I_{(Z_l, Z_{l+1}]}(u)\I_{\{S_l = r\}} du
\end{align*}
we obtain 
\begin{align} \label{EW_Delta_A.eq}
	\IE &[\Delta A | Z_m=t-d, S_m = r, T_{m+1}>d] \nonumber\\
	&= \sum_{l=0}^\infty \int_t^{t+h} \IE\big[ \alpha(u, u-Z_l) \I_{(Z_l, Z_{l+1}]}(u)\I_{\{S_l = r\}} \big| Z_m=t-d, S_m = r, T_{m+1}>d \big] du \nonumber\\
	&= \sum_{l=m}^\infty \int_t^{t+h} \IE\big[ \alpha(u, u-Z_l) \I_{(Z_l, Z_{l+1}]}(u)\I_{\{S_l = r\}} \big| Z_m=t-d, S_m = r, T_{m+1}>d \big] du, 
\end{align}
because for $l<m$ one of the indicator functions equals zero, given the condition. For $l>m$ the $l^{th}$ summand can be bounded according to (L) by
\begin{align} \label{EW_DeltaA_Absch.eq}
	\int_t^{t+h} &\IE\big[ \alpha(u, u-Z_l) \I_{(Z_l, Z_{l+1}]}(u)\I_{\{S_l = r\}} \big| Z_m=t-d, S_m = r, T_{m+1}>d \big] du \nonumber\\
	& \le C_1 \int_t^{t+h} P\big(Z_l \le u \big| Z_m=t-d, S_m = r, T_{m+1}>d \big)du \nonumber\\
	& \le C_1 \int_t^{t+h} P\big( \mu(u) - \mu(t) >l-m \big| Z_m=t-d, S_m = r, T_{m+1}>d \big)du \nonumber\\
	& \le C_1 \int_t^{t+h} \bar C_2^{l-m} h^{l-m} du \le \tilde C^{l-m+1} h^{l-m+1}
\end{align}
for $h$ small, taking advantage of Lemma \ref{dMu.lem}. Therefore the series (\ref{EW_Delta_A.eq}) from index $l=m+1$ upwards can be bounded for small $h$ by $\sum_{l=m+1}^\infty \tilde C^{l-m+1} h^{l-m+1} \le (\tilde C^2 h^2)/(1-\tilde C h)  \le \bar C h^2$.
For the first summand, i.e. $l = m$, we obtain
\begin{align*}
	\int_t^{t+h} &\IE\big[ \alpha(u, u-Z_m) \I_{(Z_m, Z_{m+1}]}(u)\I_{\{S_m = r\}} \big| Z_m=t-d, S_m = r, T_{m+1}>d \big] du\\
	&\stackrel{(i)}{=} \int_t^{t+h} \alpha(u, u-t+d) P(T_{m+1}>d-t+u| Z_m=t-d, S_m = r, T_{m+1}>d ) du \\
	&\stackrel{(ii)}{=} \int_t^{t+h} \frac{\int_t^{t+h} f_{t-d}^{rs}(u-t+d)}{S_{t-d}^r(u-t+d)} \frac{S_{t-d}^r(u-t+d)}{S_{t-d}^r(d)} du \\
	&\stackrel{(iii)}{=} \frac{\int_d^{d+h} f_{t-d}^{rs}(u)du}{S_{t-d}^r(d)}.
\end{align*}
For (i) first recognize $\I_{(Z_m, Z_{m+1}]}(u)=\I_{\{Z_m\le u\}} \I_{\{T_{m+1}> t-Z_m\}}.$ All factors that are measurable with respect to the condition are extracted from the conditional expectation whereby both of those indicators equal one. The remaining indicator is rewritten as a conditional probability. Next, remember the definition of $\alpha$, see Lemma \ref{alpha_def.leam}. Together with
\begin{align*}
	P&(T_{m+1}>d-t+u| Z_m=t-d, S_m = r, T_{m+1}>d ) \\&= \frac{P(T_{m+1}>d-t+u| Z_m=t-d, S_m = r) }{P(T_{m+1}>d | Z_m=t-d, S_m = r)},
\end{align*}
and the definition of $S_{t-d}^r$ (ii) is explained. Finally, we reduce the fraction and perform linear substitution for (iii).
Thus, Statement 2 is proven.\\
\textit{Proof of Statement 3: } For $s\ne r$ we observe, according to Lemma \ref{dMu.lem},
\begin{align*}
	P&(\Delta N \ge k | Z_m=t-d, S_m = s, T_{m+1}>d) \\
	&\le P(\mu(t+h)-\mu(t) \ge k + 1 | Z_m=t-d, S_m = s, T_{m+1}>d) \le \tilde C^{k+1} h^{k+1}, k\ge 1.
\end{align*}
This is explained by the idea that for $S_m \ne r$ at least $2k\ge k+1$ jumps have to take place in order to have $k$ jumps from $r$ to $s$. Due to (\ref{EW_Delta_N.eq}) with $s$ instead of $r$ we obtain, for small $h$,
\begin{align*}
	\IE[\Delta N |  Z_m=t-d, S_m = s, T_{m+1}>d ] \le \sum_{k=1}^\infty \tilde C ^{k+1} h^{k+1} = \frac{\tilde C^2 h^2}{1-\tilde C h} \le C h^2. 
\end{align*}
That shows \textit{Statement 3}.\\
\textit{Proof of Statement 4:} We can repeat the argumentation from the proof of Statement 2 with the difference that the summand $l=m$ in (\ref{EW_Delta_A.eq}) equals zero this time, because $\I_{\{S_m=r\}}$ is constantly zero given the condition. This suffices as proof of \textit{Statement 4} with which the proof of Lemma \ref{O_sq.lem} is completed, too. 
\end{proof}


\section{Uniform consistency of $\hat{\boldsymbol{\sigma}}_{\mathbf{x}}^{\mathbf{2}}$ (Theorem \ref{uniform_var.th})} \label{unifconvse}

Uniform consistency in probability of a nonparametric estimator is a rather old research topic \cite[for a survey for kernel density estimation see][]{consistenc:2010}. It was followed by almost sure uniform consistency \cite[for an example including censoring see e.g.][]{a-general-:2006}. Recall the definition of $\hat\sigma_x^2$ in Theorem \ref{satz3_1}. Furthermore we adopt from \cite{Linton.1995} the definition
\begin{align*}
	\tilde\sigma_x^2&:=\mathscr C _x^{-2} \frac 1n b^2 \sum_{i=1}^n \int K_b^2(x-X_i(s)) d\langle M_i \rangle (s)\\&=\mathscr C_x ^{-2} \frac 1n b^2 \sum_{i=1}^n \int K_b^2(x-X_i(s)) \alpha(X_i(s))Y_i(s) ds
\end{align*}
decompose
\begin{equation*}
	\sup_x|\hat\sigma_x^2 - \sigma_x^2| \le \sup_x|\hat\sigma_x^2 - \tilde \sigma_x^2| + \sup_x|\tilde \sigma_x^2 - \IE\tilde\sigma_x^2| + \sup_x|\IE \tilde \sigma_x^2 - \sigma_x^2|.
\end{equation*}
It suffices to prove convergence of all right-hand side terms. Since the proof of \cite{Linton.1995}, Theorem 2, already covers $\sup_x|\mathscr C_x - \varphi(x)|\rightarrow_P0$, it is sufficient to show convergence of $\sup_x|\mathscr C_x^2\hat\sigma_x^2 - \mathscr C_x^2\tilde \sigma_x^2|$, $\sup_x|\mathscr C_x^2\tilde \sigma_x^2 - \IE\mathscr C_x^2\tilde\sigma_x^2|$ and $\sup_x|\IE \mathscr C_x^2\tilde \sigma_x^2 - \varphi^2(x)\sigma_x^2|$. This is because from $f_n\rightarrow f$ uniformly, $g_n\rightarrow g$ uniformly, and $\inf g >0$, it follows, that, also uniformly, $f_n/g_n\rightarrow f/g$ .

For the first two of these terms, Lemma 1 of \cite{Linton.1995} shall be applied. Since point-wise convergence has been established in \cite{Linton.1995}, we restrict ourselves to the proof of (L2). Within (L2) w.l.o.g. we only consider the case $j=1$. Therefore let $x^*=(t, d^*)$. For this we have 
\begin{align} \label{L2_bound.f}
	\IE&\left[ \mathscr C_x^2\hat\sigma_x^2 - \mathscr C_x^2\tilde \sigma_x^2 - (\mathscr C_{x^*}^2\hat\sigma_{x^*}^2 - \mathscr C_{x^*}^2\tilde \sigma_{x^*}^2) \right]^2 \notag\\
	&\stackrel{(i)}{=} \IE\left[ \frac 1n b^2 \sum_{i=1}^n \int \left\{ K_b^2(x-X_i(s)) - K_b^2(x^*-X_i(s)) \right\} dM_i(s) \right]^2\notag\\
	&\stackrel{(ii)}{=}\frac1{n^2}b^4 \sum_{i=1}^n\IE \int \left\{ K_b^2(x-X_i(s)) - K_b^2(x^*-X_i(s)) \right\}^2 d\langle M_i \rangle (s)\notag\\
	&\stackrel{(iii)}{=} \frac 1n b^4 \int_{[0,1]^2} \left\{ K_b^2(x-w) - K_b^2(x^*-w) \right\}^2 \alpha(w) \varphi(w) dw\\
	&\stackrel{(vi)}{=} \frac 1n b^{-2} \int_{-x/b}^{(1-x)/b} \left\{ K^2(q) - K^2\left( q + \frac{x^*-x}{b} \right) \right\}^2 \alpha(x-bq) \varphi(x-bq) dq\notag\\
	&\stackrel{(v)}{\le} const\cdot \frac 1n b^{-4} |d^*-d|^2.\notag
\end{align}  
Definitions of $\hat\sigma_x^2$, $\tilde\sigma_x^2$ and $M_i$ provides (i). For (ii) we exploit that for general centred martingales $X$ we have $\IE X(T)^2 = \IE\langle X\rangle(T)$, that $\int HdM$ is such martingale, and that $\langle\int HdM\rangle = \int H^2 d\langle M \rangle$ for martingales $M$, which are are the difference between a counting process and its compensator \cite[see e.g.][chapter 2]{Fleming.2011}. Furthermore we independence of observations is utilized, which makes sure that we can ignore mixed terms, since these summands have expectation zero. Next, $d\langle M_i\rangle (s) = \alpha(X_i(s))Y_i(s)ds$, together with \eqref{EV.lem}, results in (iii) and the substitution $q=(x-w)/b$ provides (iv). Now consider  
\begin{align*}
	\left| K^2(q) - K^2\left( q + \frac{x^*-x}{b} \right) \right| &= \left| k^2(q_1) - k^2\left( q_1 + \frac{d^* - d}{b} \right) \right|  \prod_{j=2}^2 k^2(q_j) \\
	&\le \tilde C b^{-1} |d^*-d|  \prod_{j=2}^2 k^2(q_j), 
\end{align*} 
which holds true according to (K''), and because $x$ and $x^*$ only differ in the first coordinate. Plugging this into the equation (v) follows by recognizing that all integrands are continuous and the integration area remains compact due to the compact support of $K$.\\
Thus,  (L2) is shown due to (B') for $\mathscr C_x^2\hat\sigma_x^2 - \mathscr C_x^2\tilde \sigma_x^2$.
Further, we obtain
\begin{align*}
	\IE &\left[ \mathscr C_x^2\tilde \sigma_x^2 - \IE\mathscr C_x^2\tilde\sigma_x^2 -(\mathscr C_{x^*}^2\tilde \sigma_{x^*}^2 - \IE\mathscr C_{x^*}^2\tilde\sigma_{x^*}^2) \right]^2 = Var(\mathscr C_x^2\tilde \sigma_x^2 - \mathscr C_{x^*}^2\tilde \sigma_{x^*}^2)\\
	&\stackrel{(i)}{=}Var\left( \frac 1n b^2 \sum_{i=1}^n \int \left\{ K_b^2(x-X_i(s)) - K_b^2(x^*-X_i(s)) \right\}\alpha(X_i(s))Y_i(s)ds \right)\\
	&\stackrel{(ii)}{\le} \frac 1n b^4 \int_{[0,1]^2} \left\{ K_b^2(x-w) - K_b^2(x^*-w) \right\}^2 \alpha^2(w) \varphi(w) dw,
\end{align*}
where (i) is due to the definition of  $\tilde\sigma_x^2$ plus $d\langle M_i\rangle (s) = \alpha(X_i(s))Y_i(s)ds$, and (ii) due to Formula (10) in \cite{Linton.1995}. Because the last term is identical to (\ref{L2_bound.f}) up to the bounded factor $\alpha(w)$ inside the integral, (L2) is shown for $\mathscr C_x^2\tilde \sigma_x^2 - \IE\mathscr C_x^2\tilde\sigma_x^2$, too. 
Finally, $|\IE \mathscr C_x^2\tilde \sigma_x^2 - \varphi^2(x)\sigma_x^2|$ is to be investigated. For this we have
\begin{align*}
	|\IE \mathscr C_x^2\tilde \sigma_x^2 - \varphi^2(x)\sigma_x^2| &\stackrel{(i)}{=} \left| b^2 \int_{[0,1]^2} K_b^2(x-w)\alpha(w)\varphi(w)dw - \kappa_2^{d+1}\alpha(x)\varphi(x) \right| \\
	&\stackrel{(ii)}{=} \left| \int_{[-1,1]^2} K^2(q) \left\{ \alpha(x-bq)\varphi(x-bq) - \alpha(x)\varphi(x) \right\} dq \right|,
\end{align*}
where (i) is explained by the different definitions and \eqref{EV.lem}. (ii) is true for large $n$, hence, small $b$, due to the substitution $q = (x-w)/b$, the fact that the support of $K$ is $[-1,1]^2$, and the definition of $\kappa_2$. Because the integrand's leading factor, $K^2$, is bounded, uniform convergence on $\mathscr X$ to zero follows by Lemma \ref{uniform_int.lem}.

	\end{appendix}
\end{document}